\theoremstyle{plain}
\newtheorem{thm}{Theorem}
\newtheorem{lem}[thm]{Lemma}
\newtheorem{pro}[thm]{Proposition}
\theoremstyle{definition}
\newtheorem{defn}[thm]{Definition}
\newcommand{\eq}[1]{(\hyperref[eq:#1]{\ref*{eq:#1}})}
\renewcommand{\sec}[1]{\hyperref[sec:#1]{Section~\ref*{sec:#1}}}
\newcommand{\thrm}[1]{\hyperref[thrm:#1]{Theorem~\ref*{thrm:#1}}}
\newcommand{\lemm}[1]{\hyperref[lemm:#1]{Lemma~\ref*{lemm:#1}}}
\newcommand{\prop}[1]{\hyperref[prop:#1]{Proposition~\ref*{prop:#1}}}
\newcommand{\corr}[1]{\hyperref[corr:#1]{Corollary~\ref*{corr:#1}}}
\newcommand{\fig}[1]{\hyperref[fig:#1]{~\ref*{fig:#1}}}
\newcommand{\deff}[1]{\hyperref[deff:#1]{~\ref*{deff:#1}}}
\newcommand{\mE}{\mathcal{E}}
\newcommand{\mU}{\mathcal{U}}
\newcommand{\mT}{\mathcal{T}}
\newcommand{\mD}{\mathcal{D}}
\newcommand{\mH}{\mathcal{H}}
\newcommand{\mO}{\mathcal{O}}
\newcommand{\mB}{\mathcal{B}}
\newcommand{\mP}{\mathcal{P}}
\newcommand{\mbC}{\mathbb{C}}
\newcommand{\mbF}{\mathbb{F}}
\newcommand{\mbN}{\mathbb{N}}
\newcommand{\mbO}{\mathbb{O}}
\newcommand{\mbP}{\mathbb{P}}
\newcommand{\mbR}{\mathbb{R}}
\newcommand{\mbW}{\mathbb{W}}
\newcommand{\mbZ}{\mathbb{Z}}
\newcommand{\mfS}{\mathfrak{S}}
\newcommand{\mW}{\mathcal{W}}
\newcommand{\ve}{\varepsilon}
\DeclareMathOperator{\Span}{Span}
\DeclareMathOperator{\supp}{supp}
\DeclareMathOperator{\spec}{spec}
\newcommand{\ba}{\begin{eqnarray}}
\newcommand{\ea}{\end{eqnarray}}
\newcommand{\bann}{\begin{eqnarray*}}
\newcommand{\eann}{\end{eqnarray*}}
\newcommand{\bal}{\begin{equation}\begin{aligned}}
\newcommand{\eal}{\end{aligned}\end{equation}}
\newcommand{\dm}[1]{\ketbra{#1}{#1}}
\newcolumntype{L}[1]{>{\raggedright}p{#1}}
\newcolumntype{C}[1]{>{\centering}p{#1}}
\newcolumntype{R}[1]{>{\raggedleft}p{#1}}
\newcolumntype{D}{>{\centering\arraybackslash}X}
\begin{document}

\title{Universal work extraction in quantum thermodynamics}

\author{Kaito Watanabe}
\email{watanabe715@g.ecc.u-tokyo.ac.jp}
\affiliation{Department of Basic Science, The University of Tokyo, 3-8-1 Komaba, Meguro-ku, Tokyo 153-8902, Japan}

\author{Ryuji Takagi}
\email{ryujitakagi@g.ecc.u-tokyo.ac.jp}
\affiliation{Department of Basic Science, The University of Tokyo, 3-8-1 Komaba, Meguro-ku, Tokyo 153-8902, Japan}


\begin{abstract}
Evaluating the maximum amount of work extractable from a nanoscale quantum system is one of the central problems in quantum thermodynamics. Previous works identified the free energy of the input state as the optimal rate of extractable work under the crucial assumption: experimenters know the description of the given quantum state, which restricts the applicability to significantly limited settings. Here, we show that this optimal extractable work can be achieved without knowing the input states at all, removing the aforementioned fundamental operational restrictions. We achieve this by presenting the construction of a quantum channel whose description does not depend on input states but nevertheless extracts work quantified by the free energy of the unknown input state. Remarkably, our result partially encompasses the case of infinite-dimensional systems, for which optimal extractable work has not been known even for the standard state-aware setting. Our results clarify that, even though whether the description of the given state is provided at the beginning of the protocol generally makes the operational setting fundamentally different in accomplishing information-theoretic tasks, not knowing the input state does not influence the optimal performance of the asymptotic work extraction.
\end{abstract}
\maketitle

\vskip\baselineskip
\begin{center}
{ \bf INTRODUCTION}
\end{center}

Evaluating the amount of work that can be extracted from a given quantum state is one of the pivotal problems in the framework of quantum thermodynamics. 
Specifically, the recent progress in the realm of nanoscale technology allows us to manipulate microscopic quantum systems, and it is of fundamental importance to understand the thermodynamical properties of the microscopic system governed by the laws of quantum mechanics.
A series of recent works have made much progress in characterizing the possible amount of work extracted from quantum states employing quantum information theoretic approaches~\cite{horodecki_fundamental_2013, Faist_fundamental_work, Brandao_2015_second_law, Brandao_resource_theory_of, lostaglio_quantum_coherence, Gour_role_of}.
These studies have found that, in the scenario where many copies of a quantum state in contact with a heat bath are given, the optimal extractable work is characterized by the Helmholtz free energy in the asymptotic limit~\cite{Brandao_resource_theory_of, Gour_role_of}.

Although previous works have established important insights into the fundamental limit on extractable work, the scenarios considered do not reflect the natural operational setting.
In particular, they assume that the experimenter is given a complete description of the input state in advance, which allows one to tailor the work extraction protocol depending on the details of the state.
However, since the input state might be prepared through a highly complex quantum process such as a deep quantum circuit or exposed to some unknown noise, a classical description of the quantum state in hand is usually not accessible.

Although one could run the quantum state tomography~\cite{Vogel_1989_Determination, Banaszek_2013_focusing} to obtain the information of the initial state, state tomography needs to consume a large number of copies of the state. 
This is not merely a technological issue from a practical viewpoint but is a fundamental one. 
The performance of work extraction is evaluated by how much work one could extract from the given state per copy. Therefore, consuming many copies for state tomography can significantly reduce the rate of extracted work.
Furthermore, it is unclear whether measurements required for state tomography can be realized ``for free''~\cite{horodecki_fundamental_2013}, as the measurement process itself could cost work~\cite{Sagawa2009minimal} and could further reduce the performance of work extraction.

These difficulties, associated with the fundamental thermodynamic cost in learning information of the given state, suggest the necessity of a novel framework that can deal with \emph{state-agnostic} scenarios---where the description of a quantum channel representing the protocol cannot depend on the input state.
Recent works have made the initial step in this direction by considering the manipulation of partially characterized states.
Ref.~\cite {boes_statistical_2018} clarified the possible state transformation with processes that conserve average energy, provided partial information about the energy of the given state.
In another approach, Ref.~\cite{xuereb2024resources} addressed the state-agnostic scenario by employing the probe with a limited dimension, and analyzed the performance of thermodynamic tasks with the partial information.
Especially for the work extraction task, Ref.~\cite{Dominik_work_extraction} considered the ergotropy accessible after coarse-grained measurements and characterized this with respect to observational entropy, followed by the analysis of sufficient sample cost~\cite{chakraborty_2024_sample_complexity}.
Ref.~\cite{Watanabe_Black_box} formalized the partial information in the form of ``black box'', the known subset of quantum states from which the given state is promised to belong to, and characterized the guaranteed extractable work.  
These results would indicate that extractable work from the given quantum state could crucially depend on the information of the state provided at the beginning of the protocol.

Here, we defeat this conjecture in the most significant and positive manner.
We establish the \emph{universal} work extraction protocol, whose construction is independent of the input state, that nevertheless achieves the asymptotic work extraction rate characterized by the free energy of the input state. 
Recalling that free energy is the optimal rate for the standard ``state-aware'' setting, our results show that the optimal work extraction can be realized regardless of the preknowledge of the given state.

We further deal with the case where the system is infinite-dimensional, the primal examples of which include the quantum optical setting involving the bosonic Hamiltonian.
To the best of our knowledge, the optimal work extraction rate for the infinite-dimensional system has not been established even for the state-aware setting---the free energy was shown to be an upper bound of the optimal rate~\cite{ferrari_asymptotic_2023}, but whether this bound could be achieved was not known even under the state-aware scenario.
We show that the optimal work extractable from the known input state can be characterized by the Helmholtz free energy under a natural physical assumption.
Moreover, we construct a semiuniversal protocol that can extract the maximum work of the free energy for any state picked from the set consisting of a finite number of states.

Our results provide far-reaching insights beyond quantum thermodynamics. The work extraction task can be seen as a special case of the \emph{resource distillation} task in quantum resource theories~\cite{Chitamber_2019_QRT, gour_2024_resources_quantum}, a general framework to analyze the resourcefulness of the quantum states and quantum processes. In this sense, our result offers a richer landscape of the resource distillation tasks in state-agnostic scenarios. 
Indeed, for entanglement distillation, it was shown that universal entanglement distillation from unknown pure states can be achieved with the optimal rate~\cite{matsumoto_universal_distortion}. 
We show that, despite the difference between the structures of quantum thermodynamics and entanglement, universal work extraction is indeed possible, suggesting the potential of further extending the notion and technique of universal resource distillation to various kinds of quantum resources.

\begin{figure}
    \centering
    \includegraphics[width=1\linewidth]{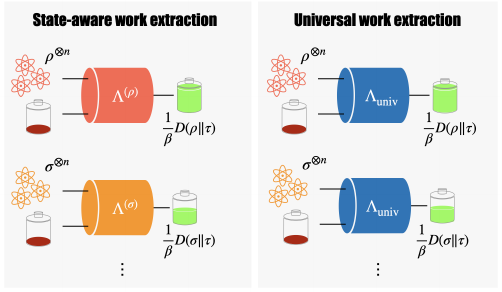}
    \caption{Schematic figure of the universal work extraction protocol. The work extraction protocol discussed in the previous results (left) is tailored according to the information of the initial state, and the optimal extractable work rate is shown to be characterized by the Helmholtz free energy. The universal work extraction protocol introduced in our result (right) is independent of the input state but achieves the same extractable rate as the state-dependent protocol in the asymptotic limit.}
    \label{fig: universal work extractor}
\end{figure}

\vskip\baselineskip
\begin{center}
{ \bf  RESULTS}
\end{center}

\begin{center}
{ \bf Preliminaries}
\end{center}

We consider a quantum system in contact with a thermal bath of inverse temperature $\beta$ associated with a Hilbert space $\mH$, which does not have to be finite-dimensional, and a Hamiltonian $H$ on $\mH$.

When the $\mH$ is infinite-dimensional, we restrict our attention to the Hilbert space spanned by the energy eigenstates $\qty{\ket{i}}_{i\in\mbN}$ of the  Hamiltonian $H=\sum_{i}E_i\ketbra{i}{i}$. Without loss of generality, we can rearrange the order of the energy eigenvalues so that $E_i\leq E_{i+1}$ holds for any $i\in\mbN$.
We assume that the Hamiltonian satisfies the Gibbs hypothesis $\Tr[e^{-\beta H}]<\infty$ so that the partition function is well-defined.

We employ the resource-theoretic approach, which has recently made significant progress in investigating the thermodynamic features of quantum systems~\cite{horodecki_fundamental_2013, Faist_fundamental_work, Brandao_2015_second_law, Brandao_resource_theory_of, Gour_role_of, Lostaglio2019introductory, lostaglio_quantum_coherence, Faist_2019_thgermodynamic_capacity, shiraishi_sagawa_QT_correrated, tajima_2024_GPO, janzing_thermodynamic_2000, sagawa_asymptotic_2021, shiraishi_2024_quantumthermodynamics}, to analyze the work extraction.
Quantum resource theory is characterized by the set of states prepared without any cost in the setting (called \emph{free states}) and the class of operations that can be applied easily (called \emph{free operations}).
In the context of quantum thermodynamics, the Gibbs thermal state defined as $\tau:=e^{-\beta H}/\Tr[e^{-\beta H}]$ is the only free state.
We consider the class of operations called the thermal operations~\cite{horodecki_fundamental_2013}, which is considered a physically implementable class of operations. 
Let $\mD(\mH)$ denote the set of density operators acting on the Hilbert space $\mH$.
A completely positive trace preserving (CPTP) map $\mE:\mD(\mH_A)\to\mD(\mH_B)$ is called a thermal operation if there exists an ancillary system $E$ with the Hamiltonian $H_E$ such that $\mE$ can be dilated as
\bal
\mE(\rho_A)=\Tr_{A+E-B}\qty[U(\rho_A \otimes \tau_E)U^\dagger],
\eal
where $U$ is the unitary operator that conserves the energy of the composite system $A+E$, that is, satisfies $[U, H_A\otimes I_E+I_A\otimes H_E]=0$. 
We remark that any thermal operation $\mE$ is Gibbs-preserving, i.e., it holds that $\mE(\tau_A)=\tau_B$.

When we consider the multi-copies of the systems, we assume that the Hamiltonian of each system is the same, and there are no interactions between the systems, that is, the Hamiltonian of the $n$ systems are represented as $H^{\times n}:=\sum_{i=1}^n I^{\otimes (i-1)}\otimes H\otimes I^{\otimes (n-i)}$.
The thermal state of the whole system is described as $\tau^{\otimes n}$.

The work that can be extracted from the input state is measured with the energy gap of another system called the \emph{work storage}~\cite{horodecki_fundamental_2013}, a qubit system $\mH_W=\Span\qty{\ket{0},\ket{W}}$ with Hamiltonian $H_W=W\ketbra{W}{W}$.
Starting from a given input state $\rho$ and the ground state $\ketbra{0}{0}$ of the work storage, if there exists a free operation $\Lambda$ which can output the state $\ketbra{W}{W}$ with a target fidelity error $\ve$, i.e., $\Lambda(\rho\otimes \ketbra{0}{0})\approx^\ve\ketbra{W}{W}$ holds, we can conclude that the work $W$ is extracted from $\rho$ with error $\ve$.
The optimal one-shot extractable work from the state $\rho$ in the state-aware scenario is defined as 
\bal
&W^\ve_{\rm aware}(\rho)\\ &:=\max\qty{W\in\mbR~|~\sup_{\Lambda\in {\rm TO}}F(\Lambda(\rho\otimes \ketbra{0}{0}),\ketbra{W}{W}) \geq 1-\ve},
\eal
where the supremum is taken over all possible work extraction protocols that are thermal operations. This definition is legitimate in the sense that thermal operation does not require any additional work cost to apply.
What is crucial in the context of the following discussion is that the optimal protocol $\Lambda$ that achieves the maximum work extraction can generally depend on the input state $\rho$.

To see the correspondence between the framework of quantum thermodynamics and thermodynamics in the macroscopic regime, it is important to evaluate the performance of the work extraction task in the multi-copy limit. (See FIG.~\ref{fig: universal work extractor}.)
The asymptotic extractable work rate is defined as
\bal
W_{\rm aware}^\infty(\rho)=\lim_{\ve\to 0}\limsup_{n\to\infty}\frac{1}{n}W_{\rm aware}^\ve(\rho^{\otimes n})
\eal
In Ref.~\cite{ferrari_asymptotic_2023}, they showed that the upper bound on the extractable work extraction rate is given by 
\bal\label{eq: converse}
\beta W^\infty_{\rm aware}(\rho)\leq D(\rho\|\tau)
\eal
even in the situation where the Hilbert space under consideration is infinite-dimensional.
Here, $D(\rho\|\tau)$ is the Umegaki relative entropy defined as $D(\rho\|\tau):=\Tr[\rho\log\rho-\rho\log\tau]$~\cite{umegaki_conditional_1954}.
Umegaki's relative entropy plays a pivotal role in quantum thermodynamics because it connects to Helmholtz's free energy.
However, whether this limit is achievable in the infinite-dimensional case was unknown.

A series of papers have found that this bound is tight in the finite-dimensional system, i.e., the optimal rate of extractable work from an i.i.d. state in the finite-dimensional system per the number of copies by the thermal operation is characterized as~\cite{Brandao_2015_second_law, Gour_role_of}
\bal\label{Eq: asymptotic aware free}
\beta W_{\rm aware}^\infty(\rho)=D(\rho\|\tau).
\eal

\begin{center}
{ \bf Universal work extraction}
\end{center}

Since we do not know the input state in the state-agnostic scenario, we also do not know how much work is supposed to be extracted, and how large the energy gap the work storage should have.
To avoid this problem, we extend the work storage as follows. First, we define a set $\mbW$ as
\bal
\mbW=\qty{\sum_{i=1}^d N_i E_i \geq 0~|~\sum_i N_i=0, ~~N_i\in\mbZ}.
\eal
This set includes all the possible work extracted from the system, which can be seen from the construction of the protocol.
The Hamiltonian of work storage is defined as
\bal
H_W=\sum_{W\in\mbW}W\ketbra{W}{W}_W.
\eal
Thanks to this definition, the work storage can admit any possible amount of work.

The difference between state-aware and agnostic work extraction is whether the distillation process can depend on the given state. To formalize this, we define one-shot extractable work of the input state $\rho$ with error $\ve>0$ enabled by the thermal operation $\Lambda$  as
\bal
{}&W^\ve(\rho,\Lambda)\\
&=\max\qty{W\in\mbW~|~F(\Lambda(\rho^{\otimes n}\otimes \ketbra{0}{0}_W),\ketbra{W}{W}_W)\geq 1-\ve}.
\label{eq:one-shot work protocol dependent}
\eal
We then define the asymptotic extractable work rate of the sequence of states $\qty{\rho^{\otimes n}}_{n\in\mbN}$ enabled by the protocol $\qty{\Lambda_n}_{n\in\mbN}$ as
\bal
{}&W^{\infty}(\qty{\rho^{\otimes n}}_n,\qty{\Lambda_n}_{n})=\lim_{\ve\to+0}\limsup_{n\to\infty}\frac{1}{n}W^\ve(\rho^{\otimes n},\Lambda_n).
\label{eq:asymptotic work protocol dependent}
\eal

If the series $\{\Lambda_n\}_n$ of the protocol can be optimally chosen depending on the state $\rho$, this recovers the state-aware work extraction, i.e.,

\bal
W_{{\rm aware}}^\infty(\rho):=\sup_{\{\Lambda_n\}_n\subset {\rm TO}}W^{\infty}\qty(\qty{\rho^{\otimes n}}_n,\qty{\Lambda_n}_{n}),
\eal
which is characterized by the free energy of $\rho$ as in \eqref{Eq: asymptotic aware free}.

 On the other hand, state-agnostic work extraction requires us to fix the protocol first and see how well it works for different input states. Therefore, the notion of state-agnostic extractable work should be considered as the function of all quantum states such that there is a fixed protocol that works for all states with that performance.
\begin{defn}\label{def:agnostic}
Fix a set $S\subset \mD(\mH)$ of states. 
   If there exists a series $\{\Lambda_n\}_n$ of thermal operations $\Lambda_n:\mD(\mH^{\otimes n}\otimes \mH_W)\to\mD(\mH_W)$ such that 
\bal
W^{\infty}\qty(\qty{\rho^{\otimes n}}_n,\qty{\Lambda_n}_{n}) = W_{\rm agnostic}^\infty(\rho),\quad \forall \rho\in S,
\eal
we say that the function $W_{\rm agnostic}^\infty:S\to \mbR$ is a $S$-achievable state-agnostic work extraction rate. 
When we take $S=\mD(\mH)$, we say that $W^\infty_{\rm agnostic}$ is the achievable state-agnostic work extraction rate.
\end{defn}
$S$ represents the possible candidates for the given state, and the condition $S=\mD(\mH)$ implies that we cannot utilize any information of the given state to tailor the work extraction protocol.
We remark that this notion of state-agnostic (or interchangeably, \emph{universal}) protocols was previously considered in the context of quantum source compression~\cite{JozsaPresnell2003,Hayashi2002quantum,Bennett2006universal} and entanglement distillation~\cite{matsumoto_universal_distortion,leone_entanglement_theory_limited_computational}, in which the partial information of the given state may be provided.

Because of the result for state-aware work extraction \eqref{Eq: asymptotic aware free} for finite dimensions, any achievable state-agnostic work extraction rate $W_{\rm agnostic}$ for finite dimensions satisfies 
\bal
\beta W_{\rm agnostic}^\infty(\rho)\leq D(\rho\|\tau),\quad \forall \rho\in\mD(\mH).
\eal
Therefore, the best we can hope for is to have an achievable $W_{\rm agnostic}^\infty$ such that the equality holds for an arbitrary state $\rho$. 
Our main result is that this is indeed the case.

\begin{thm}\label{thm:universal}
 The state-agnostic work extraction rate $W_{\rm agnostic}^\infty$ such that $\beta W_{\rm agnostic}^\infty(\rho)=D(\rho\|\tau)$ for all state $\rho$ in the finite-dimensional system is achievable.
\end{thm}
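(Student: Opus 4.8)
The plan is to build a single protocol that first performs a coarse, \emph{energy-conserving} measurement sorting the $n$ copies into sectors labelled by a classical outcome $k$, and then, conditioned on $k$, applies the optimal \emph{state-aware} thermal operation tailored to the now-known conditional state $\sigma_{n,k}$. Since $\sigma_{n,k}$ depends only on $k$ and not on $\rho$, the composite map $\Lambda_n$ (measure $k$, store it in an ancilla, apply the controlled extraction) is one fixed thermal operation independent of $\rho$, which is exactly what \cref{def:agnostic} demands. This resolves the subtlety that the naive ``estimate $\rho$, then run the protocol for the estimate'' strategy is \emph{not} viable: an estimate $\hat\rho\neq\rho$ produces $\hat\rho^{\otimes n}$ asymptotically orthogonal to $\rho^{\otimes n}$ (fidelity $F(\rho,\hat\rho)^{n}\to 0$), so a protocol finely tuned to $\hat\rho$ can fail completely on $\rho$. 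A type/symmetry-based construction avoids this blow-up because the conditioning is on the \emph{actual} symmetry data of the actual state. Combined with the converse $\beta W^\infty_{\mathrm{agnostic}}(\rho)\le D(\rho\|\tau)$ inherited from \eqref{eq: converse}, it suffices to exhibit such a family achieving the matching lower bound for every $\rho$.

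For the core (classical) case of a state diagonal in the energy eigenbasis, I would take the measurement to be the projective measurement of the occupation numbers $\hat N_i=\sum_{k=1}^{n}\ketbra{i}{i}_k$, which commute with $H^{\times n}$ and are therefore implementable for free. The outcome is a type $N=(N_1,\dots,N_d)$ with $\sum_iN_i=n$, the total energy $\mathcal E_N=\sum_iN_iE_i$ is known exactly, and the post-measurement state is uniform on the type class $T_N$, a microcanonical state of entropy $\log\binom{n}{N_1,\dots,N_d}$. Conditioned on $N$ I would invoke the state-aware achievability \eqref{Eq: asymptotic aware free} for this explicitly known microcanonical state, depositing the outcome-dependent amount $W_N\in\mbW$ into the storage; the definition of $\mbW$ is precisely what accommodates this variable $W_N$. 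A method-of-types estimate gives $\beta^{-1}\bigl(\beta\mathcal E_N+n\log Z-\log|T_N|\bigr)\approx\beta^{-1}nD(q\|\pi)$ with $q_i=N_i/n$ and $\pi_i=e^{-\beta E_i}/Z$, and since the type concentrates ($q\to p$) the extracted rate converges to $\beta^{-1}D(\mathcal P(\rho)\|\tau)$, where $\mathcal P$ is the dephasing onto the energy eigenbasis.

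Handling coherence is the main obstacle. Any sharp energy or occupation measurement destroys coherence between distinct energy levels, so the above only recovers $D(\mathcal P(\rho)\|\tau)=D(\rho\|\tau)-\bigl(S(\mathcal P(\rho))-S(\rho)\bigr)$ and forfeits the coherent contribution $S(\mathcal P(\rho))-S(\rho)\ge 0$. To retain it I would (i) replace the sharp measurement by a \emph{coarse-grained} energy measurement with bins of width $\Delta$ satisfying $1\ll\Delta\ll n$ (e.g.\ $\Delta=\sqrt n$), which still pins the total energy to precision $o(n)$ while preserving coherence between total energies differing by $O(1)$, i.e.\ the single-copy coherences carrying $S(\mathcal P(\rho))-S(\rho)$; and (ii) \emph{within} a fixed energy bin apply a universal quantum data-compression onto a universal typical subspace~\cite{JozsaPresnell2003,Hayashi2002quantum,matsumoto_universal_distortion} that, robustly and without knowledge of $\rho$, confines $\rho^{\otimes n}$ to a subspace of dimension $e^{nS(\rho)}$ on which it is close to maximally mixed. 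The point making (ii) compatible with thermal operations is that every unitary inside a single bin changes energy by at most $\Delta=o(n)$, hence costs only a vanishing work rate; a nested-subspace (spectrum-shell) measurement then identifies $S(\rho)$ to $o(1)$ at negligible cost. Conditioned on the pair (energy bin, entropy shell) the state is, up to small trace distance, the known microcanonical state ``maximally mixed on the universal typical subspace at energy $\approx n\expval{H}$,'' from which \eqref{Eq: asymptotic aware free} extracts $\beta^{-1}nD(\rho\|\tau)$. I expect the technical heart to be exactly this reconciliation: learning enough energy and entropy to fix the target work without dephasing away the coherence, while keeping every step an (asymptotically) free operation.

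Finally I would assemble the error budget. The concentration of the coarse energy and spectrum outcomes, the fidelity of the universal compression, the $o(n)$ work charged to the non-conserving intra-bin steps, and the continuity of $\rho\mapsto D(\rho\|\tau)$ (valid in finite dimension since $\tau$ is full rank) all contribute corrections that vanish in the rate. Combining these and taking the limits in the order prescribed by \eqref{eq:asymptotic work protocol dependent} yields, for the single $\rho$-independent family $\{\Lambda_n\}_n$, the bound $\beta W^\infty(\{\rho^{\otimes n}\}_n,\{\Lambda_n\}_n)\ge D(\rho\|\tau)$ for every $\rho$, which with the converse gives equality and proves the theorem.
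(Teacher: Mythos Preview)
Your overall architecture---a single measurement whose outcome determines a conditional thermal extraction, assembled into one $\rho$-independent channel---matches the paper's, and your treatment of the diagonal case is essentially correct (the post-type-measurement microcanonical state is flat, so its one-shot min-relative entropy equals its relative entropy and the work you compute is indeed extractable). The gap is in the coherent part. Your step~(ii) invokes a universal compression unitary acting \emph{within} an energy bin and argues that since it changes the total energy by at most $\Delta=o(n)$ it ``costs only a vanishing work rate.'' But thermal operations require \emph{exact} commutation $[U,H_{\mathrm{tot}}]=0$; an approximately energy-conserving unitary is simply not a thermal operation, and there is no mechanism in this framework for borrowing $o(n)$ work from a storage that begins in its ground state. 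The compression schemes you cite (Jozsa--Presnell, Matsumoto--Hayashi) are LOCC/permutation-covariant protocols, not energy-conserving ones, and nothing in your sketch shows how to make them such. So as written the protocol leaves the class of thermal operations precisely at the step that is supposed to recover the coherent contribution $S(\mathcal P(\rho))-S(\rho)$.

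The paper closes exactly this gap with a device it calls \emph{Schur pinching}. Under Schur--Weyl duality both $H^{\times k}$ and $\rho^{\otimes k}$ are block-diagonal, $H^{\times k}=\bigoplus_\lambda H_\lambda\otimes I_{\mU_\lambda}$ and $\rho^{\otimes k}=\bigoplus_\lambda \rho_\lambda\otimes I_{\mU_\lambda}$; pinching onto the eigenspaces of each $H_\lambda\otimes I_{\mU_\lambda}$ therefore (a) uses projectors that commute with $H^{\times k}$, hence \emph{is} a thermal operation, and (b) yields a state diagonal in a fixed, $\rho$-independent energy eigenbasis. Because the number of such projectors is at most $(k+1)^{2(d-1)}$, Hayashi's pinching inequality gives $\frac{1}{k}D(\tilde{\mathcal P}(\rho^{\otimes k})\|\tau^{\otimes k})\ge D(\rho\|\tau)-\frac{2(d-1)}{k}\log(k+1)$, so the coherent free energy is retained to $O(\log k/k)$ for free---no approximate unitaries, no borrowed work. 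After that, one is in your classical case: a type measurement on a sublinear fraction of the (now diagonal) copies estimates the relative entropy, and the Brand\~ao--Horodecki--Ng--Oppenheim--Wehner extraction is applied conditioned on the estimate; the incoherent-measurement-plus-conditioned-TO is itself a thermal operation by the ICPTO${}={}$TO lemma. Your ``spectrum-shell'' intuition is pointing toward this Schur--Weyl structure, but the crucial missing observation is that the right pinching is a \emph{bona fide} thermal operation rather than an almost-free unitary.
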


The main idea behind our protocol is to utilize the permutational symmetry of the given copies of the unknown states, which allows us to circumvent learning the full description of the given quantum state.
More details are given in the next section and the Appendix~\ref{app sec: universal work extraction}.

 Let us also remark on the relationship between the universal work extraction and Maxwell's demon. 
In the standard setting of Maxwell's demon thought experiments, we assume that the experimenter knows the probability distribution of the system, but does not know which state is actually realized.
Maxwell's demon is a hypothetical agent that acquires knowledge of which state is realized, enabling it to apply a feedback protocol and extract some amount of work from an equilibrium system, seemingly violating the second law of thermodynamics. 
This thought experiment shows that knowledge of the state increases the performance of work extraction, which apparently contradicts our result. 
However, these two results do not contradict each other.
Our setting is one in which the experimenter does not even know the density matrix, whereas the state-aware scenario involves prior knowledge of it. Thus, the object to which the word ``prior knowledge'' refers differs between Maxwell's demon setting and the state-agnostic work extraction in our result.

Let us discuss a relation with the recent result in Ref.~\cite{Watanabe_Black_box}, which discussed the state-agnostic work extraction by introducing the black box work extraction, where the worst-case extractable work among all states in the given set of states is considered. 
They found that, when the black box only contains a finite number of states, the performance of the black box work extraction under thermal operations is characterized by the minimum free energy of the states in the box. 
Our result extends this to an arbitrary black box composed of i.i.d. states, solving the open problem raised in Ref.~\cite{Watanabe_Black_box}.

Furthermore, our result enables us to analyze the performance of state-agnostic work extraction from any given state, which is not possible in the framework of black box work extraction because it always considers the worst-case performance.
Specifically, the extractable work from any black boxes containing the thermal state is always zero, because no work can be extracted from the thermal state.
Thus, this does not reflect the properties of all the states in the black box that might be given to us.
On the other hand, our new result fully characterizes the performance of the work extraction protocol for any given state.

Note that we cannot know the amount of extractable work from the state after we apply this protocol. When we utilize the work storage for some practical tasks, we need to perform the projective measurement with the set of the projectors $\qty{\ketbra{W}{W}_W}_{W\in\mbW}$. This process is considered beyond the framework of thermal operations, as it could generally change the energy of the work storage. 
Nevertheless, it only changes the energy of the final state by a small amount because the final state is sufficiently close to the energy eigenstate.

We also remark that the universal resource distillation is tied to the notion called \emph{pseudo-resource states}~\cite{aaronson_2023_quantumpseudoentanglement,Gu_pseudomagic,arnonfriedman2023computationalentanglementtheory,leone_entanglement_theory_limited_computational}---state ensembles that cannot be efficiently distinguished from more resourceful ones---because a state-agnostic resource distillator could be used as a state distinguisher. 
Therefore, our protocol could provide useful insights in investigating the pseudo-resource in the framework of quantum thermodynamics. (See Appendix~\ref{app sec: finite copy and pseudo} for more discussions.)

Let us now discuss the performance of the work extraction task for infinite-dimensional systems.
The following results show that, when the system is associated with an infinite-dimensional Hilbert space, one can still construct a work extraction protocol that does not depend on the full details of the given input state, while achieving the optimal work extraction rate.

\begin{thm}\label{thm: inf universal}
    Let $S\subset \mD(\mH)$ be a set of states that contains a finite number of states.
    Furthermore, we assume that all the states have finite energy and free energy, and, for any $\rho\in S$, there exists a positive number $\ve>0$ such that the diagonal elements of $\rho$ satisfy $\rho_{ii}=\mO(i^{-(2+\ve)})$.
     Then, the state-agnostic work extraction rate $W_{\rm agnostic}^\infty$ such that $\beta W_{\rm agnostic}^\infty(\rho)=D(\rho\|\tau)$ in the infinite-dimensional system is $S$-achievable.
\end{thm}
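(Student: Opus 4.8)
The plan is to reduce the infinite-dimensional problem to the finite-dimensional universal protocol of Theorem~\ref{thm:universal} by an energy truncation, and then to let the truncation level grow with the number of copies. The finiteness of $S$ is exactly what makes this reduction uniform: with only finitely many candidate states, the tail bound supplied by the decay assumption $\rho_{ii}=\mathcal{O}(i^{-(2+\ve)})$ holds simultaneously for all $\rho\in S$, so a single truncation schedule can be made to work for every $\rho\in S$. This also explains why the infinite-dimensional statement is only semiuniversal whereas Theorem~\ref{thm:universal} is fully universal---over the whole of $\mD(\mH)$ the decay rates vary arbitrarily and no single schedule suffices.

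Fix a cutoff $d$, let $P_d=\sum_{i=1}^{d}\ketbra{i}{i}$ project onto the $d$ lowest energy levels, put $Q_d=I-P_d$, and let $\tau^{(d)}$ be the Gibbs state of the truncated Hamiltonian. The first step is to apply to each copy the truncation map
\[
\mT_d(\rho)=P_d\rho P_d+\Tr[Q_d\rho]\,\tau^{(d)} .
\]
Since $P_d$ and $Q_d$ are sums of energy eigenprojectors, $\mT_d$ is realized by a projective measurement in the two energy blocks followed by thermalization of the high-energy branch, and one checks directly that it is Gibbs-preserving, $\mT_d(\tau)=\tau^{(d)}$, so it is a legitimate thermal operation onto the $d$-dimensional subspace. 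The second step is to run the finite-dimensional universal protocol of Theorem~\ref{thm:universal} on the $n$ i.i.d.\ copies of $\mT_d(\rho)$, which extracts work at rate $\beta^{-1}D(\mT_d(\rho)\|\tau^{(d)})$.

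It then remains to show that this rate converges to the target as $d\to\infty$. Because $\Tr[Q_d\rho]=\sum_{i>d}\rho_{ii}\to0$ one has $\mT_d(\rho)\to\rho$ in trace norm and $\tau^{(d)}\to\tau$, so writing $D(\mT_d(\rho)\|\tau^{(d)})=-S(\mT_d(\rho))+\beta\Tr[H\,\mT_d(\rho)]+\log Z^{(d)}$ the three terms converge separately: the partition-function term by the Gibbs hypothesis $\Tr[e^{-\beta H}]<\infty$, the energy term because $\sum_i E_i\rho_{ii}<\infty$, and the entropy term by continuity of the von Neumann entropy on the energy-bounded set containing all $\mT_d(\rho)$. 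Every one of these limits requires only $d\to\infty$, so it suffices to choose a schedule $d=d(n)\to\infty$ slowly enough that the finite-dimensional protocol still attains its rate on the growing $d(n)$-dimensional space; the decay assumption, by quantitatively bounding the high-energy tail of $\rho$ that enters the protocol's fidelity estimate, guarantees that such a schedule exists and can be taken uniform over the finite set $S$. Composing $\mT_{d(n)}^{\otimes n}$ with the finite-dimensional protocol then yields a single sequence $\{\Lambda_n\}$ of thermal operations achieving $\beta W_{\mathrm{agnostic}}^\infty(\rho)=D(\rho\|\tau)$ for every $\rho\in S$; together with the matching converse bound $\beta W^\infty\le D(\rho\|\tau)$, valid even in infinite dimensions, this gives the claim.

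The main obstacle is this joint limit in $d$ and $n$. Theorem~\ref{thm:universal} is stated only for fixed finite dimension, so to let $d(n)\to\infty$ I must extract from its proof an error estimate that is explicit in the dimension and in the spread of the energy spectrum, and then verify that the decay assumption forces this estimate to zero along some schedule, simultaneously for the finitely many states of $S$. A subsidiary point, routine but necessary, is to confirm that $\mT_d$ is genuinely a thermal operation rather than merely Gibbs-preserving, and that the finite-energy hypothesis makes the discarded high-energy tail carry vanishing free energy, so that no extractable work is lost in the limit.
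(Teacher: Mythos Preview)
Your truncate-then-apply-Theorem~\ref{thm:universal} route is a genuinely different strategy from the paper's, and it is viable in principle; the joint limit you flag is real but can be pushed through, and your $\mT_d$ is indeed a thermal operation (swap the system with a fresh $\tau^{(d)}$ ancilla on the $P_d$ block, act trivially on $Q_d$; this is energy-conserving and reproduces $\mT_d$ after tracing out). The paper, however, exploits finiteness of $S$ in a much sharper way that sidesteps the joint limit with the \emph{universal} protocol entirely. Rather than feed truncated copies into Theorem~\ref{thm:universal}, it first performs an incoherent projective measurement on a \emph{constant} number of copies to discriminate which $\rho\in S$ was given---constant because only finitely many pinched candidates need to be told apart, so the required accuracy is bounded away from zero independently of $n$---and then applies the \emph{state-aware} infinite-dimensional protocol tailored to the identified state. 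By Lemma~\ref{app lem: TO=ICPTO} this identify-then-execute procedure is a thermal operation, and since identification costs $O(1)$ copies the rate is exactly the state-aware rate, which the paper establishes separately via a cutoff $d_n=n^{1/(1+\ve/2)}$ on a \emph{known} state. The contrast: you use finiteness of $S$ only to secure a uniform tail bound (so your argument would in fact extend to any $S$ with a common decay constant), but you then must track all the $d$- and $E_{\max}$-dependence through the Schur-pinching and estimation errors of Theorem~\ref{thm:universal}, forcing $d_n$ to grow only polylogarithmically; the paper uses finiteness more aggressively---discrimination rather than estimation---so that after identification no universal machinery is needed and the growing-dimension analysis involves only the simpler state-aware bounds.
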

Several remarks are in order.
We first would like to clarify that our result can only be applied to the case when the number of candidate states is finite.
In this sense, this protocol can be seen as a \emph{semi-universal} work extraction protocol, which functions universally for a limited set of states. 
Whether one can extend this to a continuous set of input states (e.g., the set of all density matrices) is an interesting open problem.
We also remark that we used an additional assumption on the scaling behavior of the diagonal elements of the input state. 
Although this limits the scope of the universality of the protocol compared to the finite-dimensional cases, our protocol still does not depend on most of the information about the input state, such as actual values of the diagonals or its eigenbasis.

We stress that Theorem~\ref{thm: inf universal} offers a novel result even if we take the simplest case $S=\qty{\rho}$, i.e., $S$ is a singleton, which corresponds to the state-aware scenario.
This, together with the converse bound in Eq.~\eqref{eq: converse}, gives us the complete characterization of the optimal extractable work rate for the states in the infinite-dimensional system satisfying the condition, i.e., the optimal extractable work can be characterized by
\bal
\beta W^{\infty}_{\rm aware}(\rho)=D(\rho\|\tau).
\eal
Theorem~\ref{thm: inf universal} then further shows that this optimal rate can be achieved in a state-agnostic manner if we know that the state is taken from a finite number of candidates.

Let us remark on the assumption made here. 
If we take the Hamiltonian as the harmonic oscillator, the condition $\rho_{ii}=\mO(i^{-(2+\ve)})$ is almost equivalent to the condition of finite energy.
If the energy spectrum grows superlinearly $E_i=\Omega(i^\alpha),~(\alpha>1)$, the condition $\rho_{ii}=\mO(i^{-(2+\ve)})$ is ensured by the finite-energy condition.

Our results particularly apply to bosonic systems, one of the most essential ones toward realizing quantum computing that involves an infinite-dimensional Hilbert space.
When we analyze the quantum thermodynamic properties of the states in the bosonic system, what matters is not only the nonequilibriumness, but also the non-Gaussianity, which serves as another important resource in the bosonic system~\cite{Genoni_Quantifying_nG_character,Genoni_quantifying_nG_for_QI, marian_2013_relative_entropy,Zhuang_RT_of_nG,Takagi_convex_resource,Albarelli2018resource}. 
Motivated by this observation, previous works introduced the framework of \emph{Gaussian thermal operations}~\cite{Serafini_gaussian_thermal, yadin_catalytic_2022, narasimhachar_thermodynamic_2021}, which is the intersection of Gaussian and thermal operations. 
(See also Ref.~\cite{rodriguez_2025_extractingenergybosonicgaussian} for the analysis of the extractable work from the bosonic system using Gaussian unitaries in the sense of the ergotoropy.)
It is then natural to ask whether our optimal rate could be realized by Gaussian thermal operations.

Interestingly, the semiuniversal work extraction protocol needs to be non-Gaussian if the set $S$ includes a Gaussian state, no matter what Fock state we take for the initial work storage state. 
Such a semiuniversal work extraction needs to convert the input Gaussian non-thermal state and the initial Fock state in the work storage to the target Fock state.
However, we can see that this cannot be achieved by Gaussian operation by looking at a measure of non-Gaussianity. 
In particular, the negativity of the Wigner function~\cite{Kenfack2004negativity} was shown to be a valid non-Gaussianity measure, which cannot increase under Gaussian operations (in fact, under a more general class called Gaussian protocols)~\cite{Takagi_convex_resource,Albarelli2018resource}. 
Since the negativity of the Wigner function of Fock states monotonically increases with energy, the output state of the work extraction protocol has higher non-Gaussianity than the initial state, excluding the possibility of Gaussian thermal operations.

\begin{center}
{ \bf Sketch of construction}
\end{center}

Let us briefly overview the construction of our universal work extraction protocol.
The state-aware work extraction protocol in Ref.~\cite{Brandao_resource_theory_of}, which is also explained in detail in Appendix~\ref{App Sec: Brandao}, goes as follows: Apply the pinching channel, which corresponds to taking the time-average, apply the energy-conserving unitary to diagonalize the input state with a fixed energy eigenbasis, which enables us to apply the energy-conserving unitary to extract work.
Here, the state-dependent steps in this protocol are diagonalization and the appropriate choice for the unitary to extract work.
Our proof strategy is to convert these steps to state-independent procedures.

The overview of the universal work extraction protocol functions in the following steps. These steps are also exhibited in FIG~\ref{fig: overview of protocol}.
\begin{figure}
    \centering
    \includegraphics[width=1\linewidth]{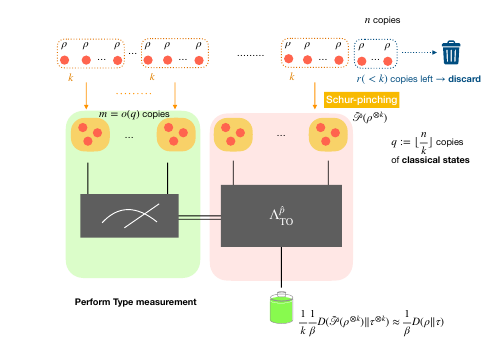}
    \caption{Overview of the universal work extraction protocol for finite-dimensional systems. First, we apply the channel called Schur pinching to obtain the state diagonalized with a specific energy eigenbasis that also respects the permutation symmetry. After that, we apply a thermal operation that simulates the type measurement on a sublinear number of subsystems and the work extraction protocol conditioned on the measurement outcomes.
    Since the projector corresponding to the measurement is the projector onto the energy eigenspace, we can realize the same action solely by a thermal operation.}
    \label{fig: overview of protocol}
\end{figure}
\begin{enumerate}
    \item (Diagonalization step) Given $n$ copies of $\rho$, we apply the channel called Schur pinching channel, which is explained in the following discussion, to each $k$ copies.
    \item (Learning step) Estimate the relative entropy of the pinched state with the incoherent measurement. As explained below, this procedure, combined with the execution step, can be done solely by the thermal operations. 
    \item (Execution step) According to the information about the relative entropy, we apply the state-aware work extraction protocol constructed in Ref.~\cite{Brandao_resource_theory_of}.   
\end{enumerate}
Since all the steps can be done by thermal operations, the concatenation of these procedures is also a thermal operation.

Let us first consider the diagonalization part.
The most naive idea to obtain a state diagonalized with a fixed energy eigenbasis is to choose an arbitrary energy eigenbasis $\qty{\ket{i}}_{i=1}^{d^n}$, and apply the completely decohering map $\Delta(\rho_n):=\sum_{i=1}^{d^n}\ketbra{i}{i}\rho_n\ketbra{i}{i}$. 
However, the decohering map might lose too much free energy. 
For instance, if we choose the energy eigenbasis as the tensor products $\qty{\ket{E_{i_1}}\otimes \cdots\otimes \ket{E_{i_n}}}$ of the energy eigenvectors of each subsystem, we have $\Delta(\rho^{\otimes n})=(\Delta(\rho))^{\otimes n}$. Since the performance of the state-agnostic work extraction is always upper bounded by that of state-aware protocol, the extractable work from this system is at most $\frac{1}{\beta}D(\Delta(\rho)\|\tau)$, which is smaller than the optimal extractable work $\frac{1}{\beta}D(\rho\|\tau)$ in general. 
This is due to the nondiagonal entries of the density matrix that involve energetic coherence.
Thus, we need to find some way to obtain the diagonalized state without losing too much coherence.

To this end, we utilize the permutation symmetry of the system and the input state.
First, we consider the Hilbert space $\mH^{\otimes k}$ of $k$ systems. Due to the Schur-Weyl duality, this Hilbert space can be decomposed as follows.
\bal\label{Eq: Schur-Weyl duality of the space}
\mH^{\otimes k}=\bigoplus_{\lambda\in Y_d^k}\mW_\lambda\otimes \mU_\lambda.
\eal
Here, $Y^k_d$ is the set of the Young diagrams of $k$ blocks with depth at most $d$, and $\mW_\lambda$ and $\mU_\lambda$ are the representation spaces of Weyl representation of the general linear group ${\rm GL}(d,\mbC)$ and the irreducible representation (irrep) of the symmetric group $\mfS_k$ respectively, corresponding to the Young diagram $\lambda$.
Here, we denote the dimension of the representation spaces as $m_\lambda=\dim\mU_\lambda,~ n_\lambda=\dim\mW_\lambda$.

Since $H^{\times k}$ is permutationally invariant, it can be decomposed as
\bal
H^{\times k}=\bigoplus_{\lambda\in Y_d^k}H_\lambda\otimes I_{\mU_\lambda}.
\eal
\begin{figure}
    \centering
    \includegraphics[width=\linewidth]{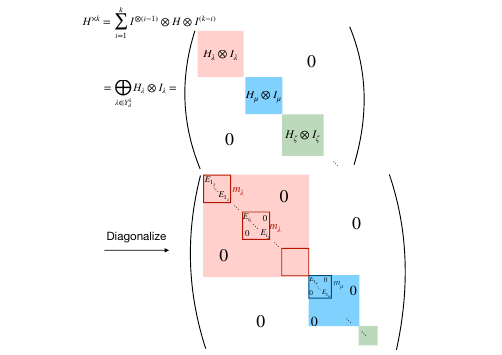}
    \caption{The structure of the Hamiltonian with the Schur basis. Due to the Schur-Weyl duality, the Hamiltonian $H^{\times k}$ of $k$ systems is block-diagonalized as above. Since all $H_\lambda$ are Hermitian, we can find an orthonormal basis of the whole Hilbert space that diagonalizes $H_\lambda$ in each block.}
    \label{fig: Hamiltonian Schur basis}
\end{figure}
Since $H_\lambda$ is Hermitian for any $\lambda\in Y^k_d$, each $H_\lambda$ can be diagonalized by some orthogonal basis. This forms the energy eigenbasis of $H^{\times k}$ (FIG.~\ref{fig: Hamiltonian Schur basis}).
$\rho^{\otimes k}$ can also be represented as
\bal
\rho^{\otimes k}=\bigoplus_{\lambda\in Y_d^k}\rho_\lambda\otimes I_{\mU_\lambda}.
\eal
The explicit form of $\rho^{\otimes k}$ is exhibited as FIG.~\ref{fig: Schur basis}.

\begin{figure}
    \centering
    \includegraphics[width=1\linewidth]{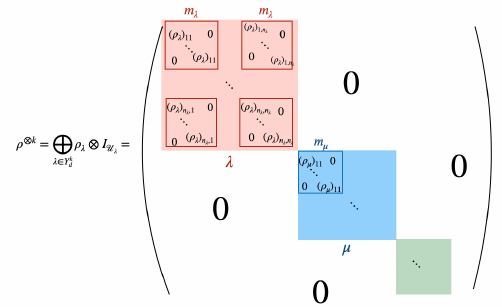}
    \caption{The matrix representation of $\rho^{\otimes k}$. The blocks indicate each direct sum element $\mW_\lambda\otimes \mU_\lambda$. Each block consists of $n_\lambda\times n_\lambda$ blocks $(\rho_\lambda)_{ij}I_{m_\lambda}$, where $m_\lambda=\dim \mU_\lambda$ and $n_\lambda=\dim\mW_\lambda$. }
    \label{fig: Schur basis}
\end{figure}

We then apply the pinching channel defined as
\bal
\tilde{\mP}(\cdot)=\sum_{\lambda\in Y^k_d} \sum_{E_{i_\lambda}}\Pi_{E_{i_\lambda}}(\cdot)\Pi_{E_{i_\lambda}}.
\eal
This channel---which we call \emph{Schur pinching}---removes the terms in the off-diagonal blocks in the structure induced by the Schur basis (FIG.~\ref{fig: Schur basis}).
Here, $\Pi_{E_{i_\lambda}}$ is the projector onto the eigenspace of $H_\lambda\otimes I_{\mU_\lambda}$ which corresponds to the energy eigenvalue $E_{i_\lambda}$. 
Lemma~S.1 in Appendix~\ref{app sec: lemmas} ensures that this channel is a thermal operation.
This channel corresponds to the procedure to remove all the non-diagonal blocks within each irrep $\mW_\lambda\otimes \mU_\lambda$. 
The output state after applying this channel is the diagonal state with the fixed Schur basis.
We provide an explicit construction of Schur pinching in the 3-qubit case in Section III.B.

Analysis in Appendix~\ref{app sec: universal work extraction} reveals that, if we take sufficiently large $k$, The relative entropy $\frac{1}{k}D(\tilde{\mP}(\rho^{\otimes k})\|\tau^{\otimes k})$ of the pinched state can be arbitrary close to the relative entropy $D(\rho\|\tau)$ of the input state.
Therefore, if we can design the universal work extraction protocol for the classical state that can achieve the relative entropy of the input state, we can take $k$ sufficiently large so that we achieve the extractable work rate $D(\rho\|\tau)$. 

We can now concentrate on designing the universal work extraction protocol for classical states.
As also mentioned in the previous discussion, it is nontrivial to choose the right unitary to extract work from classical states in the state-agnostic scenario.
From the design of the work extraction protocol in Ref.~\cite{Brandao_resource_theory_of}, it is sufficient to have the knowledge of the relative entropy of the state with respect to the thermal state to choose the appropriate protocol.

However, the class of thermal operations does not contain any measurement, preventing us from performing tomography. To avoid this problem, we consider a slightly larger class of operations called conditioned thermal operations~\cite{Narasimhachar_CTO}, which consists of the measurement on the subsystem and the thermal operation conditioned by the measurement outcome. 
Although the conditioned thermal operations outperform the thermal operations, it is shown in Ref.~\cite{Watanabe_Black_box} that when the measurement is restricted to the incoherent projective measurement, i.e., the projective measurement whose POVM elements $M$ satisfy $\mP(M)=M$, the class of conditioned thermal operations coincides with the class of thermal operations.
The idea of implementation is to convert the incoherent measurement and the following energy-conserving unitary conditioned on the estimation to a controlled energy-conserving unitary gate.
Since we convert the input state to the classical state $\tilde{\mP}(\rho^{\otimes k})$, which is diagonalized with the fixed energy eigenbasis, we can apply the incoherent measurement to extract sufficient information to estimate the relative entropy.

Now, the whole description of the protocol is the following. 
Given $n$ copies of the unknown state $\rho$, we apply the Schur pinching to $k$ copies. 
After this, we obtain $q:=\lfloor\frac{n}{k}\rfloor$ copies of the classical state $\tilde{\mP}(\rho^{\otimes k})$. 
We discard the remaining $r:=n-kq$ systems.
Out of $q$ copies of $\tilde{\mP}(\rho^{\otimes k})$, pick up $m=o(q)$ copies and apply the type measurement on them. 
We estimate the relative entropy according to the measurement outcome and apply the work extraction protocol in Ref.~\cite{Brandao_resource_theory_of}.

The parameters $k$ and $m$ can depend on the total number of input copies $n$.
In Appendix~\ref{app sec: universal work extraction}, we show that we can tune these parameters $k_n, m_n$ so that this work extraction protocol can achieve the extractable work rate $\frac{1}{\beta}D(\rho\|\tau)$ in the asymptotic limit.

One might not be convinced to call the protocol constructed in our paper ``universal'' since our protocol contains the learning process that extracts information about the relative entropy of the input state. 
We first stress that, although our protocol involves conditional processes, the quantum channel that describes the whole process, transforming the initial states $\rho^{\otimes n}$ to the work state $\dm{W}$, is chosen independently of the initial state $\rho$.
This means that our protocol can be universally applied to any i.i.d. input states, hence the name. (See Definition~\ref{def:agnostic} and Eqs.~\eqref{eq:one-shot work protocol dependent} and \eqref{eq:asymptotic work protocol dependent}.) 

Nevertheless, the incorporation of the learning process does not make our universal work extraction protocol trivial, due to the tradeoff relation between the copies of states for tomography and the accuracy of the estimation.
To achieve the optimal work extraction rate, we can only consume a sublinear amount of input states for the state tomography, and thus we can never obtain the information of the input state with arbitrary accuracy.
On the other hand, the execution protocol subsequently applied after learning might need more accuracy than that ensured by performing tomography on a sublinear number of subsystems. In that case, the fidelity error might grow up to 1 in the limit $n\to\infty$.
Whether this issue arises or not depends on the specifics of the protocol and its performance analysis, and what our result shows precisely is that one can design a work extraction protocol that can avoid this problem.

We also exhibit two alternative constructions of the universal work extraction protocol: one based on the measure-and-prepare strategy and the other on a more naive tomography-based strategy, in Appendix~\ref{app sec: measure and prepare}.

Let us briefly remark on the performance of the universal work extraction protocol from the finite number of input states, i.e., how fast the work extraction rate approaches to the optimal rate with respect to the number of input-state copies. (See Appendix~\ref{app sec: finite copy} for details.)
We find that our universal protocol entails a difference from the state-aware protocol in this regime, where the former shows a much slower convergence than the latter. 
On the other hand, consider a slightly relaxed scenario where we are informed of the relative entropy of the input state, while not knowing the other description of the density matrix. 
This assumption, where one is given the entropic quantity of given quantum states or channels, is often made in the context of the universal protocols in quantum Shannon theory~\cite{Josza1998universal, Hayashi_2009}.
In this setting, we find that our universal protocol with Schur pinching achieves the convergence speed that coincides with that of the state-aware protocol.
This is made possible because the Schur pinching allows one to entirely skip the estimation process, showcasing the operational capability of exploiting the inherent symmetry in the setting.

We also comment on the relationship to a seemingly related protocol for the universal pure-state entanglement distillation proposed in Ref.~\cite{matsumoto_universal_distortion}.
Their protocol also employs the Schur-Weyl duality based on the permutation symmetry of the input state. However, the idea of utilizing the Schur-Weyl duality is different with each other.  
One reason why this difference happens is that the conditions for the conversion in the quantum thermodynamics and the framework of the bipartite entanglement are opposite in the sense of majorization, and thus the properties of the maximally resourceful state of each theory are different.
Therefore, although our protocol might look simiar to the one in Ref.~\cite{matsumoto_universal_distortion}, the entire construction and how we utilize the symmetry is different to a large extent, reflecting on the different structure of each theory. 
(See Appendix~\ref{App: matsumoto hayashi} for more details.)

Next, we briefly describe the construction of the work extraction protocol in infinite-dimensional systems.
Details of the construction are provided in Appendix~\ref{App sec: Infinite}.
The key idea is to apply the work extraction protocol to the finite-dimensional subspace $\mH_{d}:=\qty{\ket{1},\ldots,\ket{d}}$, where we call $d$ the cutoff dimension.
Since the support $\supp(\rho)$ of the input state $\rho$ might not be included by the subspace $\mH_d$, the work extraction protocol succeeds probabilistically.
If we fix $d$ as a constant, the success probability decays exponentially as the number of subsystems involved in the protocol increases. Thus, we need to take an appropriate sequence $\qty{d_n}_n$ of the cutoff dimensions depending on the number $n$ of copies.
Indeed, there exists a nice choice of such a sequence that the success probability of the work extraction converges to $1$.

The construction of the semiuniversal work extraction protocol in infinite-dimensional systems goes as follows.
As mentioned in the previous discussion, we can consider the thermal operation conditioned by the incoherent measurement because we can implement the action solely for the thermal operation.
Since there are at most a finite number of candidates for the initial state, we can perform the state tomography to identify the given state.
Here, note that the incoherent measurement is not informationally complete, and thus we never obtain the full description of the input state from the measurement outcome.
Nevertheless, we can obtain sufficient information to identify the input state by consuming a constant number of states.
We then apply the work extraction protocol tailored to the identified state, which can be achieved by taking an appropriate series of cutoff dimensions.

\vskip\baselineskip
\begin{center}
{ \bf  DISCUSSION}
\end{center}

We established a universal work extraction protocol, which is independent of the information of the input state and nonetheless can extract the optimal amount of work in the asymptotic limit.
Our result shows that, when we try to extract work from the states, we do not need any information about the input state, but it suffices to know that an i.i.d. state is given, which reduces the enormous cost for obtaining information.
We also extended this universal work extraction protocol to infinite-dimensional systems, providing a semiuniversal optimal work extraction and particularly establishing the optimal extractable work in a state-aware scenario.

Our findings imply that, as long as the i.i.d. states are given, the lack of information about the input state does not affect the performance of the work extraction task in the asymptotic limit.
One possible application of the universal work extraction protocol is to characterize the asymptotic conversion rate between two states with thermal operations under scenarios in which the initial state is unknown.
Another interesting direction is to construct a fully universal work extraction protocol for infinite-dimensional systems that can take a continuous set of states as an input.
The ideas underlying our result may contribute to designing protocols to perform a series of crucial quantum information theoretic tasks in a state-agnostic manner.

\vskip\baselineskip
\begin{center}
{ \bf  DATA AVAILABILITY}
\end{center}
No data sets were generated or analysed during this study.

\let\oldaddcontentsline\addcontentsline
\renewcommand{\addcontentsline}[3]{}

\bibliographystyle{apsrmp4-2}
\bibliography{myref}

\let\addcontentsline\oldaddcontentsline

\vskip\baselineskip
\begin{center}
{ \bf  ACKNOWLEDGMENT}
\end{center}

We thank Lorenzo Leone, Yosuke Mitsuhashi, Hayato Arai, Hiroyasu Tajima, Ray Ganardi, and Mizuki Yamaguchi for helpful discussions.
This work is supported by JSPS KAKENHI Grant Number JP23K19028, JP24K16975, JP25K00924, JST, CREST Grant Number JPMJCR23I3, Japan, MEXT KAKENHI Grant-in-Aid for Transformative Research Areas A ``Extreme Universe” Grant Number JP24H00943, and the World-Leading Innovative Graduate Study Program for Advanced Basic Science Course at the University of Tokyo.

\vskip\baselineskip
\begin{center}
{ \bf  AUTHOR CONTRIBUTION}
\end{center}

K.W. and R.T. contributed to every aspect of the research and writing of
this work.

\vskip\baselineskip
\begin{center}
{ \bf COMPETING INTEREST}
\end{center}

The authors declare no competing interests.

\let\oldaddcontentsline\addcontentsline
\renewcommand{\addcontentsline}[3]{}

\let\addcontentsline\oldaddcontentsline


\clearpage
\newgeometry{hmargin=1.2in,vmargin=0.8in}

\widetext
\appendix

\setcounter{thm}{0}
\renewcommand{\thethm}{S.\arabic{thm}}
\setcounter{figure}{0}
\renewcommand{\thefigure}{S.\arabic{figure}}

\begin{center}
{\large \bf Appendices}
\end{center}
\addtocontents{toc}{\protect\setcounter{tocdepth}{1}}
\tableofcontents


\section{Technical Lemmas}\label{app sec: lemmas}
\begin{lem}\label{lem: pinching is thermal}
    Let $\qty{\Pi_j}_{j=1}^J$ be the set of projectors satisfying $\sum_j\Pi_j=I$. If all $\Pi_j$ commutes with the Hamiltonian $H$ of the system, i.e., $[\Pi_j,H]=0$, the pinching channel with respect to the projectors $\qty{\Pi_j}_{j=1}^J$ defined as 
    \bal\label{Eq: def of general pinching channel}
    \mP_{\qty{\Pi_j}_j}(\cdot):=\sum_{j=1}^J\Pi_j(\cdot)\Pi_j
    \eal
    is a thermal operation.
\end{lem}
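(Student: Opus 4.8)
The plan is to produce an explicit dilation of $\mP_{\qty{\Pi_j}_j}$ that matches the definition of a thermal operation---an energy-conserving unitary acting jointly on the system and an ancilla prepared in its Gibbs state---and thereby certify membership in the class. The starting point is the observation that a pinching channel is a uniform mixture of unitaries. First I note that the hypotheses force mutual orthogonality: if the $\Pi_j$ are projectors with $\sum_j \Pi_j = I$, then $\Pi_j(I-\Pi_j)=0$ gives $\sum_{j'\neq j}\Pi_j\Pi_{j'}=0$, and sandwiching by $\Pi_j$ makes $\sum_{j'\neq j}(\Pi_{j'}\Pi_j)^\dagger(\Pi_{j'}\Pi_j)=0$, so each positive-semidefinite summand vanishes and $\Pi_j\Pi_{j'}=\delta_{jj'}\Pi_j$. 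Setting $\omega=e^{2\pi i/J}$ and $V=\sum_{j=1}^J \omega^j \Pi_j$, orthogonality makes $V$ unitary ($V^\dagger=\sum_j\omega^{-j}\Pi_j$ and $VV^\dagger=\sum_j\Pi_j=I$), and a discrete Fourier computation collapses the mixture:
\begin{equation}
\frac{1}{J}\sum_{k=0}^{J-1} V^k \rho (V^\dagger)^k = \sum_{j,j'}\left(\frac{1}{J}\sum_{k=0}^{J-1}\omega^{k(j-j')}\right)\Pi_j \rho\, \Pi_{j'} = \sum_{j=1}^J \Pi_j \rho\, \Pi_j = \mP_{\qty{\Pi_j}_j}(\rho),
\end{equation}
since the bracketed geometric sum equals $\delta_{jj'}$ for $j,j'\in\{1,\dots,J\}$.

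Next I would check that the mixture consists of energy-conserving unitaries. Because each $\Pi_j$ commutes with $H$ by assumption, so does $V=\sum_j\omega^j\Pi_j$, and hence so does every power $V^k$; thus each unitary in the mixture individually preserves the system energy. The remaining task is to promote this classical convex combination to a single global unitary acting on a Gibbs-state ancilla.

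To do so, introduce an ancilla $E$ of dimension $J$ with the trivial Hamiltonian $H_E=0$, whose Gibbs state is maximally mixed, $\tau_E=\frac{1}{J}\sum_{k=0}^{J-1}\ketbra{k}{k}_E$; this furnishes precisely the uniform randomness needed, at no energetic cost. Defining the controlled unitary
\begin{equation}
U=\sum_{k=0}^{J-1} V^k \otimes \ketbra{k}{k}_E,
\end{equation}
a direct computation yields $\Tr_E\!\qty[U(\rho\otimes\tau_E)U^\dagger]=\frac{1}{J}\sum_{k} V^k\rho (V^\dagger)^k=\mP_{\qty{\Pi_j}_j}(\rho)$, while $[V^k,H]=0$ gives $[U,\,H\otimes I_E+I\otimes H_E]=[U,\,H\otimes I_E]=0$. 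Hence $U$ is energy-conserving and the dilation meets the definition of a thermal operation, completing the argument.

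The only points requiring care---and thus the main (if modest) obstacle---are the two ends of the bookkeeping: confirming that orthogonality $\Pi_j\Pi_{j'}=\delta_{jj'}\Pi_j$ is \emph{forced} by the hypotheses (so that $V$ is genuinely unitary and the Fourier sum collapses correctly), and confirming that a trivial ancillary Hamiltonian is admissible within the thermal-operations framework, so that $\tau_E$ is genuinely maximally mixed and supplies the mixture weights for free. Everything in between is a routine verification of the partial trace and the commutation relation.
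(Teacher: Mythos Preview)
Your proof is correct and follows essentially the same approach as the paper: both represent the pinching channel as the uniform mixture $\frac{1}{J}\sum_k U_k(\cdot)U_k^\dagger$ with $U_k=\sum_j e^{2\pi i jk/J}\Pi_j$ (your $V^k$), and note that each $U_k$ is energy-conserving because $[\Pi_j,H]=0$. The only difference is that the paper concludes by invoking convexity of the set of thermal operations, whereas you explicitly construct the dilation via a trivial-Hamiltonian ancilla in its maximally mixed Gibbs state---this is precisely the standard construction underlying that convexity fact, so your argument is a slightly more self-contained unpacking of the same step.
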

\begin{proof}
    The idea of proof is taken from Ref.~\cite{Gour_role_of}. Eq.~\eqref{Eq: def of general pinching channel} can be rewritten as 
    \bal
    \mP_{\qty{\Pi_j}_j}(\rho)=\frac{1}{J}\sum_{k=1}^J U_k\rho U^\dagger_k,
    \eal
    where, $U_k~(k=1,\ldots,j )$ is the unitary operator defined as
    \bal
    U_k=\sum_{j=1}^J\exp\qty(\frac{2\pi i jk}{J})\Pi_j.
    \eal
    Since all the projectors $\Pi_j$ commute with the Hamiltonian $H$, each $U_k$ is energy conserving. Furthermore, since the set of thermal operations is convex, $\mP_{\qty{\Pi_j}_j}$ is also a thermal operation.
\end{proof}

\begin{lem}\label{Lem: Hoeffding inequality}
    Let $p=(p_1,\ldots, p_d)$ be the probability distribution. Suppose that we sample $n$ times from $p$, and Let $\hat{p}=(\#1/n,\ldots,\# d/n)$ be the empirical distribution. For any $\eta>0$, the probability of obtaining the empirical distribution $\hat{p}$ which is $\eta$-far from the distribution $p$ with respect to the total variational distance is bound as
    \bal
    n\geq \frac{d\ln 2+\ln\frac{1}{\delta}}{2\eta^2}\Rightarrow{\rm Pr}[\|p-\hat{p}\|_1>\eta]\leq \delta
    \eal
\end{lem}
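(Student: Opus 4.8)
The plan is to prove this as an instance of the Bretagnolle--Huber--Carol inequality, by combining Hoeffding's inequality with a union bound over subsets. The starting point is the variational characterization of the total variation distance: reading the quantity in the statement as the total variation distance (which is consistent with the wording ``total variational distance'' and with the stated threshold constant), one has, for probability vectors $p$ and $\hat p$ on $\{1,\dots,d\}$,
\[
\|p-\hat p\|_1=\max_{A\subseteq\{1,\dots,d\}}\bigl(p(A)-\hat p(A)\bigr),\qquad p(A):=\sum_{i\in A}p_i .
\]
The one-sided maximum already recovers the full distance (the optimal set is $A=\{i:p_i>\hat p_i\}$), so the event $\{\|p-\hat p\|_1>\eta\}$ coincides with the event that $p(A)-\hat p(A)>\eta$ holds for at least one of the $2^d$ subsets $A$.

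The second step is to control a single subset. Fixing $A$ and writing $X_1,\dots,X_n$ for the i.i.d.\ samples drawn from $p$, the empirical mass $\hat p(A)=\frac1n\sum_{j=1}^n \mathbf 1[X_j\in A]$ is the empirical mean of $n$ i.i.d.\ indicator variables, each valued in $[0,1]$ with expectation $p(A)$. Hoeffding's inequality in its one-sided form therefore gives, uniformly in $A$,
\[
\Pr\bigl[p(A)-\hat p(A)>\eta\bigr]\le e^{-2n\eta^2}.
\]

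The third step is the union bound over the $2^d$ subsets, which turns the per-subset estimate into
\[
\Pr\bigl[\|p-\hat p\|_1>\eta\bigr]\le 2^d\,e^{-2n\eta^2}.
\]
Demanding the right-hand side be at most $\delta$ and taking logarithms rewrites $2^d e^{-2n\eta^2}\le\delta$ as $2n\eta^2\ge d\ln 2+\ln\tfrac1\delta$, which is exactly the stated sample-size threshold $n\ge\frac{d\ln 2+\ln(1/\delta)}{2\eta^2}$.

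This argument is routine, so I do not expect a substantive obstacle; the points that genuinely require care are bookkeeping rather than conceptual. First, one must check that for each fixed $A$ the indicators $\mathbf 1[X_j\in A]$ are genuinely i.i.d.\ (inherited from the i.i.d.\ sampling of the $X_j$) so that Hoeffding applies, and that the variational identity above reduces the $\ell_1$-type distance to a maximum over a \emph{finite} collection. Second, one must account correctly for the exponentially many subsets: it is precisely this $2^d$ factor in the union bound that produces the $d\ln 2$ term in the threshold, and using the one-sided Hoeffding bound (rather than a two-sided one) is what avoids an extra multiplicative factor and matches the stated constant.
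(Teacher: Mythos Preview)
Your argument is correct. The paper itself does not supply a proof of this lemma; it simply cites an external reference (Ref.~\cite{Clement_trivial_upper}) for the result. What you have written is precisely the standard derivation (the Bretagnolle--Huber--Carol argument) that the cited reference contains: the variational formula for total variation reduces the deviation event to a maximum over the $2^d$ subsets, one-sided Hoeffding controls each term by $e^{-2n\eta^2}$, and the union bound produces the $2^d$ prefactor that becomes $d\ln 2$ in the threshold. Your explicit remark that the symbol $\|\cdot\|_1$ must be read as total variation distance (rather than the raw $\ell_1$ norm) is well placed; that is the only interpretation under which the stated constant $2\eta^2$ in the denominator is correct, and it matches the wording ``total variational distance'' in the lemma.
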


\begin{proof}
    For proof, see Ref.~\cite{Clement_trivial_upper}.
\end{proof}

\begin{lem}[\cite{Bluhm_general_continuity, Bluhm_Continuity_of}]\label{Lem: continuity bound}
    Let $\mH$ be a finite-dimensional Hilbert space, and $H$ be the Hamiltonian. Also, let $\rho_1,\rho_2\in\mD(\mH)$ be two density matrices, and $\tau$ be the thermal state.
    Then, we have
    \bal
    \abs{D(\rho_1\|\tau)-D(\rho_2\|\tau)}\leq \qty(1+\frac{\beta E_{\rm max}+\log Z}{\sqrt{2}})\|\rho_1-\rho_2\|_1
    \eal
    where $E_{\max}$ is the maximum eigenenergy of $H$, and $Z\coloneqq\Tr(e^{-\beta H})$.
\end{lem}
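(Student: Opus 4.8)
The plan is to obtain this as a specialization of the general continuity bound for the Umegaki relative entropy with a fixed second argument, which is precisely what the cited works~\cite{Bluhm_general_continuity, Bluhm_Continuity_of} supply. The natural starting point is the exact identity $\log\tau=-\beta H-(\log Z)I$, which gives
\[
D(\rho\|\tau)=-S(\rho)+\beta\Tr[\rho H]+\log Z,
\]
and hence $D(\rho_1\|\tau)-D(\rho_2\|\tau)=\big(S(\rho_2)-S(\rho_1)\big)+\beta\Tr[(\rho_1-\rho_2)H]$. The energy contribution is linear in its argument, and since $\Tr[\rho_1-\rho_2]=0$ it is unchanged when $H$ is centered about the midpoint of its spectrum; Hölder's inequality then controls it by a multiple of $\beta E_{\max}\|\rho_1-\rho_2\|_1$. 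This is the routine part, so the real content is concentrated in the entropy term and in assembling the correct $\tau$-dependent constant.

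The dependence on $\tau$ enters only through the spectrum of $\log\tau$. First I would observe that the quantity $\beta E_{\max}+\log Z$ is invariant under the shift $H\mapsto H-cI$ (which leaves both $\tau$ and $D(\cdot\|\tau)$ invariant while sending $\log Z\mapsto\log Z+\beta c$), and that it equals $\log\|\tau^{-1}\|_\infty=\log\big(1/\lambda_{\min}(\tau)\big)=\|\log\tau\|_\infty$. This is exactly the state-dependent factor that governs the general continuity bound of~\cite{Bluhm_general_continuity, Bluhm_Continuity_of}: applied with the fixed argument $\tau$ it yields a bound of the form $\big(c_1+c_2\log\|\tau^{-1}\|_\infty\big)\|\rho_1-\rho_2\|_1$ with universal numerical constants, and substituting the identity above reproduces $\big(1+\tfrac{1}{\sqrt2}(\beta E_{\max}+\log Z)\big)\|\rho_1-\rho_2\|_1$. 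The remaining work is then mostly bookkeeping: checking that the hypotheses of their theorem (finite-dimensional $\mH$ and full-rank $\tau$, the latter guaranteed by $\Tr[e^{-\beta H}]<\infty$) hold, and tracking the numerical constants $1$ and $1/\sqrt2$ through the specialization.

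The step I expect to be the genuine obstacle is the entropy term $S(\rho_2)-S(\rho_1)$, because the von Neumann entropy is not Lipschitz in trace distance: the sharp Fannes--Audenaert estimate carries both an explicit $\log(d-1)$ dimension factor and a sub-linear $\epsilon\log(1/\epsilon)$ correction, so the claim cannot be proved by simply adding a Lipschitz energy term to a Lipschitz entropy term. The point of the cited bound is that $\rho\mapsto D(\rho\|\tau)$ must be treated as a single ``almost convex'' object: the large negative eigenvalues of $\log\rho$ supported on low-weight subspaces are compensated by the correspondingly large values of $-\log\tau=\beta H+\log Z$ on those subspaces, and this compensation is exactly what converts the dimension dependence of entropy continuity into the finite, shift-invariant factor $\beta E_{\max}+\log Z$. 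Reproducing this cancellation is the technical heart of the argument, so for the present lemma I would invoke the theorem of~\cite{Bluhm_general_continuity, Bluhm_Continuity_of} directly rather than redo the almost-convexity analysis, and verify only that the specialization to the thermal state $\tau$ yields the stated coefficient.
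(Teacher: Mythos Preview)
Your proposal is correct and follows essentially the same route as the paper: invoke the general continuity bound from \cite{Bluhm_general_continuity, Bluhm_Continuity_of} in the form $\abs{D(\rho_1\|\tau)-D(\rho_2\|\tau)}\leq \big(1+\tfrac{1}{\sqrt2}\log\tilde m^{-1}\big)\|\rho_1-\rho_2\|_1$ with $\tilde m=\lambda_{\min}(\tau)$, and then substitute $\lambda_{\min}(\tau)=e^{-\beta E_{\max}}/Z$ to obtain $\log\tilde m^{-1}=\beta E_{\max}+\log Z$. Your preliminary decomposition into entropy and energy terms is extra motivation rather than a separate argument, and you correctly abandon it in favor of the direct citation, exactly as the paper does.
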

\begin{proof}
    This is the direct consequence of Ref.~\cite{Bluhm_general_continuity} and Ref.~\cite[Corollary 5.9]{Bluhm_Continuity_of}, which states that for any $\rho_1,\rho_2,\tau\in\mD(\mH)$ with $\supp (\rho_1),~\supp(\rho_2) \subset \supp(\tau) $ it holds that
    \bal
    \abs{D(\rho_1\|\tau)-D(\rho_2\|\tau)}\leq \qty(1+\frac{\log\tilde{m}^{-1}}{\sqrt{2}})\|\rho_1-\rho_2\|_1.
    \eal
    Here, $\tilde{m}$ is the minimum nonzero eigenvalue of $\tau$, i.e., $\frac{e^{-\beta E_{\rm max}}}{Z}$. Substituting this, we obtain the inequality.
\end{proof}

\section{State-aware work extraction protocol}\label{App Sec: Brandao}

Before exhibiting the construction of the universal work extraction protocol, we review how the state-aware work extraction protocol for the known input state $\rho^{\otimes n}$ in Ref.~\cite{Brandao_resource_theory_of} works.
We exhibit the construction of the state-aware work extraction protocol in the following way: First, we explain the construction of the optimal work extraction protocol for classical input states.
After that, we explain how to extend this protocol for general quantum states.
There are two ways to extend. One is also explained in the original proof in Ref.~\cite{Brandao_resource_theory_of}, and the other is exhibited in Ref.~\cite{Gour_role_of}, on which the fully universal work extraction protocol is based.

\subsection{State-aware work extraction protocol for classical states}
Suppose that $n$ copies of the classical state $\rho_c$ are given. We aim to extract work by appending $l(\gg n)$ copies of thermal state $\tau$ of the same Hamiltonian and applying the energy-conserving unitary, which is allowed because preparing thermal state is a free operation.
We can assume without loss of generality that the input state $\rho$ and the thermal Gibbs state $\tau$ are simultaneously diagonalized with some energy eigenbasis as 
\bal
\rho_c=\sum_{i=1}^d p_i\ketbra{i}{i}, ~~\tau=\sum_{i=1}^d q_i\ketbra{i}{i}.
\eal
Here, $q_i$ is the Gibbs factor for each energy eigenvalue. 
Utilizing this representation, $\rho^{\otimes n}$ and $\tau^{\otimes l}$ can be written as
\bal
\rho_c^{\otimes n}=\sum_{\Vec{i}\in\qty{1,\ldots,d}^n}p_{\Vec{i}}\ketbra{\Vec{i}}{\Vec{i}},~~\tau^{\otimes l}=\sum_{\Vec{j}\in\qty{1,\ldots,d}^l}q_{\Vec{j}}\ketbra{\Vec{j}}{\Vec{j}}.
\eal
Here, $\ket{\vec{i}}$, $p_{\vec{i}}$, and $q_{\vec{i}}$ represents
\bal
\ket{\vec{i}}:=\ket{i_1}\otimes \cdots\otimes \ket{i_n}, ~p_{\vec{i}}=\prod_{i=1}^n p_i,~q_{\vec{i}}=\prod_{i=1}^n q_i
\eal
For a while, we focus on a fixed energy subspace of $\rho^{\otimes n}\otimes \tau^{\otimes l}$, which is spanned by the vectors
\bal
\ket{\Vec{i}}\otimes \ket{\Vec{j}},~~~\Vec{i}\in{\rm Freq}(n,f),~~\Vec{j}\in{\rm Freq}(l,g).
\eal
Here, ${\rm Freq}(n,f)$ is defined with $f:\qty{1,\ldots,d}\to\mbZ_{\geq 0}$ satisfying $\sum_if(i)=n$ as
\bal\label{eq: def of freq}
{\rm Freq}(n,f)=\qty{\Vec{i}\in\qty{1,\ldots, d}^n~|~ \#\qty{i_m=i, m=1,\ldots, n}=f(i), \forall i\in\qty{1,\ldots, d}}.
\eal
${\rm Freq}(l,g)$ is also defined similarly.

We fix a function $h:\qty{1,\ldots, d}\to\mbZ$ which satisfies $\sum_{i=1}^d h(i)=0.$ This represents the difference between the types before and after applying the energy-conserving unitary. 
When the range of the function $f+g-h$ is nonnegative, ${\rm Freq}(n+l, f+g-h)$ is well-defined as a type class for the strings of length $n+l$. 

To extract some work from $\rho^{\otimes n}\otimes \tau^{\otimes l}$, we apply the energy-conserving unitary which acts on the fixed subspace in the following way.
For any energy subspaces corresponding to the set of strings ${\rm Freq}(n,f)$ and ${\rm Freq}(l,g)$, if ${\rm Freq}(n+l, f+g-h)$ is well-defined, we apply the unitary which map the basis as follows.
\bal\label{app eq: action of work extraction unitary}
\ket{\Vec{i}}\otimes \ket{\Vec{j}}\otimes \ket{0}_X\to \ket{\Vec{k}}\otimes \left|\sum_{i=1}^dE_ih(i)\right\rangle_X.
\eal
Here, $\ket{\Vec{k}}\in\mH^{\otimes (n+l)}$ and $\Vec{k}\in{\rm Freq}(n+l, f+g-h)$. 
To ensure the existence of such a unitary operator, $f,g,h,n,l$ need to satisfy
\bal\label{app eq: condition for injection}
\abs{{\rm Freq}(n,f)}\abs{{\rm Freq}(l,g)}\leq \abs{{\rm Freq}(n+l, f+g-h)}.
\eal
If the energy subspace does not satisfy the unitarity condition or ${\rm Freq}(n+l, f+g-h)$ is not well-defined, we do nothing to that subspace. 
From the definition of ${\rm Freq}(\cdot,\cdot)$, this condition is reduced to
\bal
\frac{n!}{\prod_{i=1}^df(i)!}\frac{l!}{\prod_{i=1}^dg(i)!}\leq \frac{(n+l)!}{\prod_{i=1}^d\qty(f(i)+g(i)-h(i))!}.
\eal
Taking the logarithm and using the Stirling formula, we have
\bal
nH\qty(\frac{\vec{f}}{n})+\mO(\log n)+lH\qty(\frac{\Vec{g}}{l})+\mO(\log l)\leq (n+l)H\qty(\frac{\vec{f}+\vec{g}-\vec{h}}{n+l})+\mO(\log(n+l)).
\eal

Here, $\vec{f}, \vec{g}$ is defined as $\vec{f}=\qty(f(1),\ldots, f(d))^T,\vec{g}=\qty(g(1),\ldots, g(d))^T $, and $H(\vec{p}):=-\sum_ip(i)\log p(i)$ is Shannon entropy. Since we can take $l$ sufficiently larger than the number of the input states $n$, we can take $\eta:=\frac{n}{l}$ arbitrarily small. Dividing both sides by $l$, we get
\bal\label{app eq: entropic condition of the existence of injection}
\eta H\qty(\frac{\vec{f}}{n})+\frac{\mO(\log n)}{l}+H\qty(\frac{\Vec{g}}{l})+\frac{\mO(\log l)}{l}\leq (1+\eta)H\qty(\frac{\vec{f}+\vec{g}-\vec{h}}{n+l})+\frac{\mO(\log(n+l))}{l}.
\eal
The inequality trivially holds when we concentrate on the terms independent of $\eta$.
Focusing on the subdominant terms, this inequality is reduced as
\bal\label{app Eq: condition for the existence of energy conserving unitary in each type class}
\frac{1}{n}\beta \sum h(i)E_i\leq  D\qty(\frac{\vec{f}}{n}\big\|\frac{\vec{g}}{l})-\mO\qty(\frac{\log l}{l}).
\eal
We obtain this inequality by assuming that the energy-conserving unitary with which one can extract work exists. Conversely, when we take any $h$ which satisfies Eq.~\eqref{app Eq: condition for the existence of energy conserving unitary in each type class}, we can find the energy-conserving unitary which realizes the conversion corresponding to the function $h(\cdot)$, because Eq.~\eqref{app Eq: condition for the existence of energy conserving unitary in each type class} is equivalent to the unitarity condition for the terms linear in $\eta$.

In the discussion above, we focus on a fixed subspace corresponding to a single type class. 
When we apply this energy-conserving unitary to the entire state $\rho_c^{\otimes n}\otimes \tau^{\otimes l}$, the vectors $\vec{f},\vec{g}$ run over all possible frequencies.  
However, since the target state is a pure excited state of the work storage, we cannot tailor the function $h(\cdot)$ depending on the type class. 
Thus, after applying the unitary operation determined by the function $h(\cdot)$ to the state $\rho_c^{\otimes n}\otimes \tau^{\otimes l}$ above, we obtain
\bal
(1-\xi)\ketbra{\sum_{i=1}^dE_ih(i)}{\sum_{i=1}^dE_ih(i)}_X+\xi\ketbra{0}{0}_X,
\label{eq:infidelity}
\eal
where $\xi$ is the sum of the trace of the energy blocks to which one cannot apply the energy-conserving unitary. 
To achieve the target error with respect to the fidelity of the final state, we need to tune the appropriate $h(\cdot)$ to make $\xi$ sufficiently small.
Here, note that when we take $n$ and $l$ sufficiently large, because of the law of large numbers, the probability of the frequencies concentrates around the distribution $\rho_c$ and $\tau$. Therefore, it suffices to take $h(\cdot)$ so that the protocol works to the frequencies close to $\rho_c$ and $\tau$. 
Here, note that $l$ needs to be much larger than $n$. One intuitive interpretation is that a sufficiently large thermal bath enables the system to receive the heat needed to charge the work storage. 
However, suppose we take $l=\Omega(n^2)$. In that case, the typical subspace of the system of the input state $\rho_c$ is buried to the fluctuation of thermal states because the standard deviation of the multinomial distribution is $\mO(\sqrt{l})$. Thus the probability concentrates on the range $\pm\frac{1}{\sqrt{l}}$ around $\tau^{\otimes l}$.
Thus, it suffices to take $l=\mO(n^{3/2})$.

When we set $h(\cdot)$ so that the protocol works for the typical subspaces, namely,
\bal\label{Eq: choice of h()}
\frac{1}{n}\beta \sum_i h(i)E_i\leq  D\qty(\rho_c\|\tau)-\mO\qty(\frac{1}{\sqrt{n}})
\eal
holds, the error $\xi$ vanishes asymptotically.

\subsection{State-aware work extraction protocol for general states}\label{app sec: general state state aware}
Now, we review the construction of the state-aware work extraction protocol for the general state, which might not be classical. There are two ways to complete this discussion.

Let us first exhibit the original proof, which is explained in Ref.~\cite[Appendix D]{Brandao_resource_theory_of}. 
Suppose that we are given $n$ copies of $\rho$. 
If the given state has the energetic coherence in it, we apply the pinching channel~\cite{hayashi_optimal_2002, tomamichel_quantum_information_2016} defined as
\bal\label{app eq: pinching}
\mP(\cdot):=\lim_{T\to \infty}\frac{1}{2T}\int_{-T}^{T}\dd t~e^{-iH^{\times n}t}(\cdot)e^{iH^{\times n}t}=\sum_{E_i\in\spec(H^{\times n})}\Pi_{E_i}(\cdot)\Pi_{E_i}.
\eal
After that, our state $\mP(\rho^{\otimes n})$ is block-diagonalized with respect to the energy subspaces because $[\mP(\rho^{\otimes n}), H^{\times n}]=0$. Thus, we can apply an energy-conserving unitary to diagonalize the pinched state.
Indeed, applying the pinching channel does not decrease extractable work, which is referred to as work locking~\cite{Lostaglio_2015}.
After applying the pinching channel, the density matrix of the system is block-diagonalized with respect to the energy eigenbasis, and thus we can apply the work extraction protocol for the classical states.

The idea of construction is described as follows.
The expectation value $\Tr[\mP(\rho^{\otimes n})H^{\times n}]$ of the total energy is written as $n\Tr[\rho H]$, and the variance is at most sublinear in $n$.
Thus, we can concentrate on tailoring the work extraction protocol for the energy subspace with $E\simeq n\Tr[\rho H]$.

This idea that the probability of the pinched state $\mP(\rho^{\otimes n})$ concentrates in the energy subspace around the energy expectation value is rigorously shown in the following way.
Suppose that the Hamiltonian of the single system is diagonalized as $H=\sum_{i=1}^d E_i\ketbra{i}{i}$, and we define a probability distribution as $p_i:=\Tr[\rho\ketbra{i}{i}],~i=1,\ldots, d$. The energy expectation value is represented as $\Tr[\rho H]=\sum_ip_iE_i$.
From the probability distribution $p_i$, we can define a set of $\delta$-(strongly) typical sequences of lenght $n$ as
\bal
\mT_{n.\delta}:=\qty{\vec{i}\in\qty{1,\ldots,d}^n ~|~ \abs{\frac{1}{n}N(i|\vec{i})-p_i }\leq \delta  ~\mbox{if}~p_i\neq 0, ~N(i|\vec{i})=0 ~\mbox{otherwise} }.
\eal
From a property of typical sequence, it holds that for any $\ve>0, \delta>0$ there exists $n_{\ve, \delta}\in\mbN$ such that
\bal
n\geq n_{\ve, \delta} ~\Rightarrow ~{\rm Pr}[\vec{i}\in \mT_{n,\delta}]\geq 1-\ve.
\eal
Moreover, from the Chernoff bound, $n_{\ve, \delta}$ scales as $n_{\ve, \delta}\sim 2^{-n\ve^2}$. 
With respect to the energy eigenvalue, the set of vectors $\ket{\vec{i}}=\ket{i_1}\otimes \cdots\otimes\ket{i_n}$ which corresponds to each elements in $\vec{i}\in\mT_{n,\delta}$ contains the energy eigenstates whose energy is in $[n(E-\delta\Delta E),n(E+\delta\Delta E)]$, where $\Delta E:=E_{\max}-E_{\min}$ is the difference between the maximum and minimum eigenbvalues of the Hamiltonian $H$ of a single system.
Now, we define a projector onto the energy subspace which corresponds to $\mT_{n,\delta}$ as
\bal
\Pi_{\mT_{n,\delta}}:=\sum_{\vec{i}\in \mT_{n,\delta}}\ketbra{i_1}{i_1}\otimes \cdots\otimes \ketbra{i_n}{i_n}.
\eal

From this, for a sufficiently large $n$ it holds that
\bal
\Tr[\mP(\rho^{\otimes n})\Pi_{\mT_{n,\delta}}]&=\Tr[\qty(\sum_{E^{(n)}\in \spec(H^{\times n}) }\Pi_{E^{(n)}}\rho^{\otimes n}\Pi_{E^{(n)}} ) \qty(\sum_{\vec{i}\in\mT_{n,\delta}}\ketbra{i_1}{i_1}\otimes \cdots\otimes \ketbra{i_n}{i_n})]\\
&=\sum_{\vec{i}\in\mT_{n,\delta}}\Tr[\rho^{\otimes n}\ketbra{i_1}{i_1}\otimes \cdots\otimes \ketbra{i_n}{i_n}]\\
&=\sum_{\vec{i}\in\mT_{n,\delta}}p_{i_1}\cdots p_{i_n}={\rm Pr}[\vec{i}\in\mT_{n,\delta}]\geq 1-\ve.
\eal
Thus, to make the fidelity error of the work extraction protocol vanishing in the asymptotic limit it suffices to concentrate on the subspace which corresponds to the projector $\Pi_{\mT_{n,\delta}}$. 

Now, we are in position to discuss how we tailor the work extraction protocol for the arbitrary pinched state $\mP(\rho^{\otimes n})$. As is also discussed above, it suffices to tailor the energy conserving unitary on the typical subspace which acts as in Eq.~\eqref{app eq: action of work extraction unitary}.
The condition for the existence of such an energy-conserving unitary is rephrased as the existence of the injection from the concatenated strings in the typical subspace of $\mP(\rho^{\otimes n})$ and $\tau^{\otimes l}$ to the strings in the typical subspace of the final state. 
Note that the dimension of the typical subspace of $\mP(\rho^{\otimes n})$ is at most $2^{nS(\rho)}$, where $S(\rho):=-\Tr[\rho\log\rho]$ is von Neumann entropy of $\rho$.
Thus, condition for the existence of the energy conserving unitary which functions for the typical energy subspace in this scenarion is written as
\bal
\eta S\qty(\rho)+\frac{\mO(\log n)}{l}+H\qty(\frac{\Vec{g}}{l})+\frac{\mO(\log l)}{l}\leq (1+\eta)H\qty(\frac{\vec{f}+\vec{g}-\vec{h}}{n+l})+\frac{\mO(\log(n+l))}{l}.
\eal
Following the same discussion as above, we can extract work $\sum_{i}h(i)E_i$ if and only if 
\bal\label{app eq: existence of work extraction protocol for general states}
\frac{1}{n}\beta\sum_{i}h(i)E_i\leq D(\rho\|\tau)-\mO\qty(\frac{1}{\sqrt{n}})
\eal
is satisfied. Therefore, for a sufficiently large $n$, the one-shot extractable work from $\rho^{\otimes n}$ in the state-aware scenario behaves as
\bal
\frac{1}{n}\beta W^\ve_{\rm aware}(\rho^{\otimes n})\sim D(\rho\|\tau)-\mO\qty(\frac{1}{\sqrt{n}}),
\eal
which, together with the converse bound $\beta W^\infty_{\rm aware}(\rho)\leq D(\rho\|\tau)$, yields 
\bal
\beta W^\infty_{\rm aware}(\rho)= D(\rho\|\tau).
\eal

We can also obtain the same result by applying the pinching channel to the block and reducing the problem to the classical case, which is also exhibited in Ref.~\cite{Gour_role_of}.
Suppose that $n$ copies of $\rho$, which might not be classical, are given. 
First, we pinch each $k$ copies of $\rho$ for a fixed $k\in\mbN$ and continue this. Then, we obtain $q:=\lfloor\frac{n}{k}\rfloor$ copies of $\mP(\rho^{\otimes k})$, and can be diagonalized by applying an energy-conserving unitary. From this, tehe discussion is reduced to the classical case.
Following the discussion above, the extractable work rate $\frac{1}{k}D(\mP(\rho^{\otimes k})\|\tau^{\otimes k})$ is achievable. We also remark that, since $k$ is an arbitrary positive integer, we can choose $k$ so that the extractable work is maximized. This leads to
\bal
\beta W_{\rm aware}(\rho)&\geq \sup_{k\in\mbN}\frac{1}{k}D(\mP(\rho^{\otimes k})\|\tau^{\otimes k})\\
&\geq \limsup_{k\to\infty}\frac{1}{k}D(\mP(\rho^{\otimes k})\|\tau^{\otimes k}).
\eal
Note that
\bal
D(\rho^{\otimes k}\|\tau^{\otimes k})=D(\rho^{\otimes k}\|\mP(\rho^{\otimes k}))+D(\mP(\rho^{\otimes k})\|\tau^{\otimes k})
\eal
and Hayashi's pinching inequality~\cite{hayashi_optimal_2002} 
\bal
\mP(\rho^{\otimes k})\geq \frac{\rho^{\otimes k}}{\abs{\spec(\tau^{\otimes k})}}
\eal
hold, where $\abs{\spec(\tau^{\otimes k})}$ represents the number of distinct eigenvalues of $\tau^{\otimes k}$. It is straightforward to find the correspondence between the type of a string ${\bf s}\in\qty{1,\ldots, d}^k$ and the energy eigenvalue of the state $\ket{{\bf s}}:=\bigotimes_{i=1}^k\ket{s_i}$. Due to this correspondence, we can show that the number of distinct eigenvalues of $H^{\times k}$ ($=$ number of distinct eigenvalues of $\tau^{\otimes k}$)
is an upper bound by the number of possible type classes of length $k$ with $d$ alphabets, which is at most polynomial in $k$. 

Utilizing the property of the relative entropy, it holds that
\bal
0\leq D(\rho^{\otimes k}\|\mP(\rho^{\otimes k}))&\leq D\qty(\rho^{\otimes k}\|\frac{\rho^{\otimes k}}{\abs{\spec(\tau^{\otimes k})}})\\
&=D(\rho^{\otimes k}\|\rho^{\otimes k})+\log\abs{\spec(\tau^{\otimes k})}\\
&\leq \log {\rm poly}(k).
\eal
Due to this inequality, we obtain
\bal
\beta W_{\rm aware}^\infty (\rho)&\geq \limsup_{k\to\infty}\frac{1}{k}D(\mP(\rho^{\otimes k})\|\tau^{\otimes k})\\
&\geq \limsup_{k\to\infty}\frac{1}{k}\qty(D(\rho^{\otimes k}\|\tau^{\otimes k})-\log {\rm poly}(k))\\
&=\limsup_{k\to\infty}\qty(D(\rho\|\tau)-\frac{1}{k}\log {\rm poly}(k))=D(\rho\|\tau).
\eal

\section{Universal work extraction (Proof of Theorem~\ref{thm:universal})}\label{app sec: universal work extraction}

\subsection{Why the above protocol is state-dependent and how we make this state-agnostic}
The basic idea of our universal work extraction protocol stems from the protocol explained in Appendix~\ref{App Sec: Brandao}. Here, we point out the steps in this protocol that rely on the description of the input state $\rho$ and briefly explain how we resolve them.
See also Figure~\ref{fig: overview of protocol app} for the overview of the protocol.
\begin{figure}
    \centering
    \includegraphics[width=0.8\linewidth]{Overview_of_protocol.pdf}
    \caption{ Overview of the universal work extraction protocol for finite-dimensional systems. First, we apply the channel called Schur pinching to obtain the state diagonalized with a specific energy eigenbasis that also respects the permutation symmetry. After that, we apply an appropriate work extraction protocol, which is conditioned by the type measurement. Since the projector corresponding to the measurement is the projector onto the energy eigenspace, we can realize the same action solely by a thermal operation.}
    \label{fig: overview of protocol app}
\end{figure}

One such step is applying the energy-conserving unitary to diagonalize the pinched state $\mP(\rho^{\otimes k})$. Without a description of the state, we do not know the eigenbasis of $\mP(\rho^{\otimes k})$, preventing us from knowing which unitary to apply.
 A naive way of obtaining a classical state from $\mP(\rho^{\otimes k})$ would be to simply apply a pinching (dephasing) channel with respect to some energy eigenbasis. However, this strategy generally wastes too much free energy, which comes from the coherent (off-diagonal) part of $\mP(\rho^{\otimes k})$.
The difficulty here is to find the right basis with respect to which the pinching does not waste too much free energy \emph{without knowing the description of $\rho$} (and hence $\mP(\rho^{\otimes k})$).
We solve this problem by introducing a new technique, which we call \emph{Schur pinching}, and show that this allows us to retain the optimal rate of work extraction, i.e., the waste by Schur pinching is only sublinear and does not affect the asymptotic work extraction rate.

The other is that we do not know the appropriate $h(\cdot)$, the function that determines the detail of the work extraction protocol for the classical state. Without knowing the value $D(\rho\|\tau)$, it is difficult to determine the function $h(\cdot)$, representing the change in the type class.
We solve this problem by partial tomography approach, where we utilize a sublinear number of copies of input states to estimate $D(\rho\|\tau)$ so that we can determine the right $h(\cdot)$ function. A subtlety here is that thermal operations---the set of free operations allowed for the work extraction process---do not include measurements, making this tomography-based approach appear infeasible. We circumvent this issue by employing \emph{incoherently conditioned thermal operations}---the set of operations introduced in Ref.~\cite{Watanabe_Black_box} containing the thermal operations conditionally chosen depending on the outcomes of incoherent measurements applied to other subsystems. By showing that the tomography-based approach can be formulated as an incoherently conditioned thermal operation with projective measurements, as well as using the known characterization that the latter class of operations is a subset of thermal operations~\cite{Watanabe_Black_box}, we show that the optimal work can be extracted from an unknown classical state by thermal operations.

\subsection{Obtaining a classical state}
Let us address the first issue. Suppose that $n$ copies of the input state $\rho$ are given. As mentioned above and in the main text, we need to obtain the classical state without losing too much resource. To do this, it is helpful to focus on the permutational symmetry of the systems and the states.

Recall that the Schur-Weyl duality induces the decomposition
\bal\label{App eq: schur weyl decomp}
\mH^{\otimes n}=\bigoplus_{\lambda\in Y^{n}_d}\mW_\lambda\otimes \mU_\lambda.
\eal
Here, $Y^n_d$ represents the set of the Young diagrams of depth at most $d$ with $n$ blocks. $\mW_\lambda$ and $\mU_\lambda$ correspond to the Weyl representation of general linear group ${\rm GL}(n,\mbC)$ and the irrep of symmetric group $\mfS_n$ respectively.

For any $\pi\in\mfS_n$, let $V_\pi$ be the unitary matrix representing the permutation of the subsystems.
Any matrix $A_n$ on the Hilbert space $\mH^{\otimes n}$ which is permutationally invariant, i.e., satisfying $V_\pi A_nV_\pi^\dagger=A_n, ~\forall \pi\in\mfS_n$, can be block diagonalized along the direct sum structure in Eq.~\eqref{App eq: schur weyl decomp} as 
\bal
A_n=\bigoplus_{\lambda\in Y^n_d}A_\lambda \otimes I_\lambda,
\eal
which is a direct consequence of Schur's lemma.
Due to this property, we can decompose the input state $\rho^{\otimes n}$ and the global Hamiltonian $H^{\times n}$ as 
\bal\label{App eq: diagonal form of rho and Hamiltonian}
\rho^{\otimes n}&=\bigoplus _{\lambda\in Y^n_d}\rho_\lambda\otimes I_\lambda,\\
H^{\times n}&=\bigoplus _{\lambda\in Y^n_d}H_\lambda\otimes I_\lambda.
\eal
\begin{figure}
    \centering
    \includegraphics[width=0.7\linewidth]{SchurBasis.pdf}
    \caption{The block diagonal form of the i.i.d. input state.  }
    \label{App fig: structure of input state}
\end{figure}
\begin{figure}
    \centering
    \includegraphics[width=0.8\linewidth]{Structure_of_Hamiltonian.pdf}
    \caption{The block diagonal form of the Hamiltonian. Here, since $H_\lambda$ is Hermitian for any $\lambda\in Y^n_d$, we can take an appropriate orthogonal basis within each block to diagonalize the Hamiltonian.}
    \label{App fig: Schur decomp of Hamiltonian}
\end{figure}

Now, we are ready to obtain a diagonal state without applying a unitary.
To this end, we focus on the eigenspace of $H_\lambda\otimes I_\lambda$. 
The projector onto the eigenspace of the matrix $H_\lambda\otimes I_\lambda$ corresponding to the eigenvalue $E_{i_\lambda}$ is denoted by $\Pi_{E_{i_\lambda}}$.
Importantly, we know the description of $\Pi_{E_{i_\lambda}}$---and thus the eigenbasis of $H_\lambda\otimes I_\lambda$---because the basis that admits the decomposition \eqref{App eq: diagonal form of rho and Hamiltonian} (known as Schur basis), as well as the description of the Hamiltonian, is known and independent of $\rho$. This allows us to write $H^{\times k}$ as blocks of identity matrices with size $m_\lambda\coloneqq \dim \mU_\lambda$, each of which is multiplied by the energy eigenvalues $E_{i_\lambda}$ (Figure~\ref{App fig: Schur decomp of Hamiltonian}).

Here, we remark that the union set of eigenvalues of all $H_\lambda\otimes I_\lambda$ coincides with that of $H^{\times n}$ itself. 
Also, note that each eigenspace of $H_\lambda\otimes I_\lambda$ is the subspace of the energy eigenspace of the global Hamiltonian $H^{\times n}$ corresponding to the same energy eigenvalue.
In this sense, this projector is fine-grained compared to that of the original Hamiltonian.
We introduce \emph{Schur pinching channel} as
\bal
\tilde{\mP}(\cdot)=\sum_\lambda\sum_{E_{i_\lambda}}\Pi_{E_{i_\lambda}}(\cdot)\Pi_{E_{i_\lambda}}.
\eal
Here, note that, due to Lemma~\ref{lem: pinching is thermal}, Schur pinching defined above is a thermal operation.
After applying this channel, whose action is exhibited in Figure~\ref{App fig: Action of Schur pinching}, we obtain the classical state.
\begin{figure}
    \centering
    \includegraphics[width=0.8\linewidth]{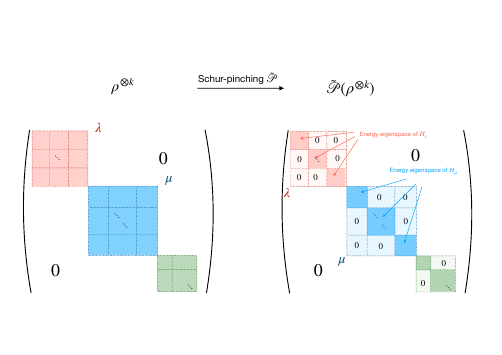}
    \caption{Action of Schur pinching channel. Each block of $\rho^{\otimes k}$ is diagonalized due to the decomposition in Eq.~\eqref{App eq: diagonal form of rho and Hamiltonian}, but the matrix form of $\rho^{\otimes k}$ itself is not diagonalized. Schur pinching vanishes all the off-diagonal blocks; thus, we obtain a diagonal state.}
    \label{App fig: Action of Schur pinching}
\end{figure}

Let us see an example of how we obtain Schur pinching.
For simplicity, we assume that the single system is two-dimensional and thus a qubit system, and we have three copies of this system.
We assume that the Hamiltonian is given as $H=E\ketbra{1}{1}+0\ketbra{0}{0}$.
The Hilbert space of the composite system can be decomposed due to the Schur-Weyl duality as
\bal
\mH_2^{\otimes 3}&=\bigoplus_{\lambda\in Y^3_2}\mW_\lambda\otimes \mU_{\lambda}\\
&=\qty(\mW_{\scriptsize{\yng(3)}}\otimes \mU_{\scriptsize{\yng(3)}})\oplus \qty(\mW_{\scriptsize{\yng(2,1)}}\otimes \mU_{\scriptsize{\yng(2,1)}})
\eal
It is known that the dimensions of these irreps are given as
\bal
\dim \mW_{\scriptsize{\yng(3)}}=4,~\dim \mU_{\scriptsize{\yng(3)}}=1,~\dim \mW_{\scriptsize{\yng(2,1)}}=2,~\dim \mU_{\scriptsize{\yng(2,1)}}=2.
\eal
The irrep $\mW_{\scriptsize{\yng(3)}}\otimes\mU_{\scriptsize{\yng(3)}} $ is spanned by 
\bal
\ket{v_1}&=\ket{000}\\
\ket{v_2}&=\frac{\ket{001}+\ket{010}+\ket{100}}{\sqrt{3}} ,\\
\ket{v_3}&=\frac{\ket{011}+\ket{101}+\ket{110}}{\sqrt{3}},\\
\ket{v_4}&=\ket{111},
\eal
and the irrep $\mW_{\scriptsize{\yng(2,1)}}\otimes\mU_{\scriptsize{\yng(2,1)}} $ is spanned by 
\bal
\ket{u_1}&=\frac{\ket{100}-\ket{010}}{\sqrt{2}},\\
\ket{u_2}&=\frac{\ket{101}-\ket{011}}{\sqrt{2}},\\
\ket{u_3}&=\frac{2\ket{001}-\ket{010}-\ket{100} }{\sqrt{6}},\\
\ket{u_4}&=\frac{2\ket{110}-\ket{101}-\ket{011} }{\sqrt{6}}.
\eal
Noting that the type of each basis corresponds to the energy eigenvalue,  Schur pinching for this Hamiltonian is defined with the following projectors.
\bal
\Pi_{\scriptsize{\yng(3)},0}&=\ketbra{v_1}{v_1}, \\
\Pi_{\scriptsize{\yng(3)},E}&=\ketbra{v_2}{v_2},\\
\Pi_{\scriptsize{\yng(3)},2E}&=\ketbra{v_3}{v_3},\\
\Pi_{\scriptsize{\yng(3)},3E}&=\ketbra{v_4}{v_4},\\
\Pi_{\scriptsize{\yng(2,1)},E}&=\ketbra{u_1}{u_1}+\ketbra{u_3}{u_3},\\
\Pi_{\scriptsize{\yng(2,1)},2E}&=\ketbra{u_2}{u_2}+\ketbra{u_4}{u_4}.
\eal

We now show that the Schur pinching only decreases the free energy by the amount that vanishes in the asymptotic limit. To this end, let us evaluate the number of projectors involved in the definition of the Schur pinching. 
The number $\abs{Y^n_d}$ of the Young diagrams of depth at most $d$ composed of $n$ boxes can be bound as
\bal
\abs{Y^n_d}\leq (n+1)^{d-1}.
\eal
This can be easily seen in the same way as the bound of the number of the type classes of the strings with $d$ alphabets and length $n$, which is the following.
From each Young diagram $\lambda \in Y^n_d$, we can make the vector
$\vec{\lambda}=(\lambda_1,\ldots ,\lambda_d)$, where $\lambda_1,\ldots ,\lambda_d$ indicates the number of the boxes in each row. Since $\lambda$ contains $n$ box in total, it holds that $\sum_{i=1}^d\lambda_i=n$. It suffices to find a bound on the number of possible vectors $\vec{\lambda}$. If we specify $d-1$ entries, the $d$th entry is determined automatically. Thus, it is sufficient to concentrate on $d-1$ entries. Since the possible value for each element is $0,1,\ldots, n$, we obtain the bound above.

The evaluation of the number of distinct eigenvalues also goes similarly. Since the energy eigenvalue of the global Hamiltonian $H^{\times n}$ also corresponds to the type classes of the strings ${\bf s}\in\qty{1,\ldots, d}^n$ and thus $\abs{\spec(H^{\times n})}$ can be bound as
\bal
\abs{\spec(H^{\times n})}\leq (n+1)^{d-1}.
\eal
Since $H^{\times n}$ is block-diagonalized as the form in Eq.~\eqref{App eq: diagonal form of rho and Hamiltonian}, $\abs{\spec(H_\lambda\otimes I_\lambda)}\leq (n+1)^{d-1}$ also holds for any Young diagrams $\lambda$.
These observations conclude that the number of projectors involved in the Schur pinching is at most polynomial in $n$, due to the following calculation.
\bal
\sum_{\lambda\in Y^n_d}\sum_{E_{i_\lambda}}1 &= \abs{Y^n_d}\abs{\spec(H_\lambda\otimes I_\lambda)}\\
&\leq (n+1)^{2(d-1)}.
\label{eq:Schur pinching number of projectors}
\eal

We remark on how we employ the Schur pinching in our universal work extraction protocol.
Given $n$ copies of $\rho$, we apply the Schur pinching to $k$ copies. Repeating this, we obtain $q:=\lfloor\frac{n}{k}\rfloor$ copies of the classical state $\tilde{\mP}(\rho^{\otimes k})$.  
We then apply the protocol we describe in the next subsection that universally extracts optimal work from an arbitrary unknown classical state, which achieve the rate $\frac{1}{k}D(\tilde{\mP}(\rho^{\otimes k})\|\tau^{\otimes k})$ in the asymptotic limit.
Here, note that $\tilde{\mP}(\rho^{\otimes k})$ is the state over $k$ systems, and that is why the coefficient $\frac{1}{k}$ is needed.

Hayashi's pinching inequality~\cite{hayashi_optimal_2002}, together with \eqref{eq:Schur pinching number of projectors}, implies that 
\bal
 \tilde \mP(\rho^{\otimes k})\geq \frac{\rho^{\otimes k}}{(k+1)^{2(d-1)}}
 \label{eq:pinching inequality Schur pinching}.
\eal

Therefore, it holds that
\bal\label{app Eq: bound for the schur pinching}
\abs{\frac{1}{k}D(\tilde{\mP}(\rho^{\otimes k})\|\tau^{\otimes k})-D(\rho\|\tau)} &=\frac{1}{k}D(\rho^{\otimes k}\|\tilde\mP(\rho^{\otimes k}))\\
&\leq \frac{1}{k}\log\left[ (k+1)^{2(d-1)}\right].
\eal
Because of this, if we take $k$ sufficiently large, the work extraction rate becomes arbitrarily close to $D(\rho\|\tau)$.
Therefore, our universal work extraction protocol will be completed, given the construction of the universal work extraction protocol for classical input states, which we describe in the following.

\subsection{Universal work extraction protocol}
 The obstacle for applying the protocol in Ref.~\cite{Brandao_resource_theory_of} mentioned in Appendix~\ref{App Sec: Brandao} is that, we do not know the relative entropy $D(\rho_c\|\tau)$ of the input state $\rho_c$ and thus we do not know which $h(\cdot)$ to
choose. 
Here, recall that $h:\qty{1,\ldots,d}\to\mbZ$ is the function that determines the action of the energy-conserving unitary applied to the systems.

The strategy to overcome this is straightforward: We perform the type measurement on a sublinear number of systems and estimate the state. 
The decrease of the system available to the work extraction protocol is negligible in the asymptotic limit.
This enables one to apply the work extraction protocol according to the measurement result.

The problem with this approach is that performing a measurement is not a thermal operation. Indeed, this procedure is one of the operations in a class of operations called conditioned thermal operation introduced in Ref.~\cite{Narasimhachar_CTO}, which is generally strictly larger than thermal operations.
Therefore, the thermal operations conditioned by the measurement outcome cannot be done with thermal operations.
Nevertheless, as is seen in the subsequent discussion, a restricted class of conditioned thermal operations is sufficient for our purpose, and this class of conditioned thermal operations can be implemented solely by thermal operations.

To show this, we review a class of measurement called incoherent measurement and a class of operations called incoherently conditioned thermal operations, introduced in Ref.~\cite{Watanabe_Black_box}.
\begin{defn}
Let $\mH$ be a Hilbert space, $H$ be the Hamiltonian of the system and $\qty{M_a}_{a}$ be a complete set of the POVM elements, which implies that the family of operators $\qty{M_a}_{a}$ satisfies $M_a\geq 0,~\forall a$ and $\sum_aM_a=I$. $\qty{M_a}_a$ is said to be a \emph{incoherent measurement} if the POVM elements $M_a$ satisfies $\mP(M_a)=M_a$ for any $a$, where $\mP$ is the pinching channel with respect to the Hamiltonian $H$ defined in Eq.~\eqref{app eq: pinching}.

        Let $\mH_A,\mH_B,\mH_C$ be Hilbert spaces, and $\mE:\mD(\mH_A\otimes \mH_B)\to \mD(\mH_C)$ be a CPTP map. $\mE$ is called an \emph{incoherently conditioned thermal operation} when $\mE$ can be decomposed as follows.
    \bal
    \mE=\sum_a\mE^{\rm TO}_a\circ \Lambda^{\rm meas}_a.
    \eal
    Here, $\mE_a^{\rm TO}:\mD(\mH_B)\to\mD(\mH_C),~a=1,2,\ldots,m$ are thermal operations and 
    \bal
    \Lambda^{\rm meas}_a(\rho_{AB}):=\Tr_A\qty[(M^{\rm incoh}_a\otimes I_{B})\rho_{AB}],~a=1,\ldots ,m
    \eal
    be the instruments, where $\qty{M^{\rm incoh}_a}_a$ is the family of the POVM elements of incoherent measurement on $\mD(\mH_A)$.
     We denote the set of incoherently conditioned thermal operations as ${\rm{ICTO}}(A; B\to C)$.
    Furthermore, when the measurements are restricted to the incoherent projective measurements, we say that the operation is a thermal operation $+$ incoherent projective measurement. We denote the set of thermal operations $+$ incoherent projective measurements as ${\rm{ICPTO}}(A;B\to C)$
\end{defn}
Note that the post-measured state is discarded. 
The physical interpretation of the incoherent measurement is that the measurement cannot detect any information about the time evolution.
The class of thermal operations is a subclass of incoherently conditioned thermal operations.
Ref.~\cite[Proposition S.23]{Watanabe_Black_box} showed that the incoherently conditioned thermal operations whose measurement is projective are thermal operations.

\begin{lem}[\cite{Watanabe_Black_box}]\label{app lem: TO=ICPTO}
Let $A, B$ be the input system, and $C$ be the output system. It holds that
\bal
    \rm{TO}(AB\to C)=\rm{ICPTO}(A;B\to C).
    \eal
\end{lem}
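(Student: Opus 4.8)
The plan is to prove the set equality $\mathrm{TO}(AB\to C)=\mathrm{ICPTO}(A;B\to C)$ via the two inclusions. The inclusion $\mathrm{TO}(AB\to C)\subseteq\mathrm{ICPTO}(A;B\to C)$ is the routine one, since a thermal operation can be regarded as an incoherently conditioned thermal operation with the trivial single-outcome incoherent projective measurement on $A$; so the substantive content, and the direction actually invoked in the main construction, is the reverse inclusion $\mathrm{ICPTO}(A;B\to C)\subseteq\mathrm{TO}(AB\to C)$, established as Ref.~\cite[Proposition~S.23]{Watanabe_Black_box}. I would prove this reverse inclusion by an explicit controlled-unitary dilation that reads off the measurement outcome from the energy block of $A$ and applies the corresponding conditional thermal operation on $B$.

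Concretely, let $\mE=\sum_a\mE_a^{\mathrm{TO}}\circ\Lambda_a^{\mathrm{meas}}$, where $\{M_a\}_a$ is an incoherent projective measurement on $A$, so that each $M_a$ is a projector obeying $[M_a,H_A]=0$, $M_aM_b=\delta_{ab}M_a$, and $\sum_aM_a=I_A$. First I would dilate each conditional thermal operation as $\mE_a^{\mathrm{TO}}(\sigma_B)=\Tr_{BE\setminus C}\!\big[U_a(\sigma_B\otimes\tau_{E_a})U_a^\dagger\big]$, where $U_a$ is energy conserving on $BE_a$ and $\tau_{E_a}$ is a thermal ancilla. Collecting all the ancillae into one system $E$ in the product thermal state $\tau_E$, define the global unitary
\bal
U=\sum_a M_a\otimes U_a ,
\eal
which is block diagonal with respect to the decomposition $\mH_A=\bigoplus_a\mathrm{range}(M_a)$. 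The two algebraic checks are that $U$ is unitary---$U^\dagger U=\sum_{a,b}M_aM_b\otimes U_a^\dagger U_b=\sum_aM_a\otimes I=I$ by orthogonality of the projectors---and that $U$ is energy conserving, which follows because $[M_a,H_A]=0$ removes the commutator with $H_A$ while each $U_a$ commutes with $H_B+H_E$.

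It then remains to verify that tracing out $A$ and the ancilla reproduces $\mE$. The key point is that the off-diagonal ($a\neq b$) terms vanish after $\Tr_A$, because $\Tr_A[M_a X M_b]=\delta_{ab}\Tr_A[M_aX]$ by cyclicity together with $M_bM_a=\delta_{ab}M_a$; hence
\bal
\Tr_{A}\!\big[U(\rho_{AB}\otimes\tau_E)U^\dagger\big]=\sum_a U_a\big(\Tr_A[(M_a\otimes I_B)\rho_{AB}]\otimes\tau_E\big)U_a^\dagger ,
\eal
and a further partial trace onto $C$ gives $\sum_a\mE_a^{\mathrm{TO}}\big(\Tr_A[(M_a\otimes I_B)\rho_{AB}]\big)=\mE(\rho_{AB})$. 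Since $U$ is energy conserving, $\tau_E$ is thermal, and only partial traces are taken, the resulting map is manifestly a thermal operation, which establishes $\mE\in\mathrm{TO}(AB\to C)$.

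The main obstacle I anticipate is precisely the global energy-conservation check for the controlled unitary $U$: one must ensure that the conditional dilating unitaries $U_a$ can be chosen on a common ancilla with a common Hamiltonian so that $U=\sum_a M_a\otimes U_a$ is simultaneously unitary and commutes with $H_A+H_B+H_E$. The projective (rather than general POVM) and incoherent ($[M_a,H_A]=0$) nature of the measurement is exactly what makes the control register $\{M_a\}$ compatible with energy conservation and orthogonal enough to kill the cross terms under $\Tr_A$; this is where the restriction to incoherent projective measurements enters essentially, and a general (non-projective or coherence-detecting) measurement would break both the unitarity of $U$ and its commutation with the total Hamiltonian.
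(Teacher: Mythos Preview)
Your argument is correct and matches the paper's approach: the paper does not give its own proof of this lemma but cites \cite[Proposition~S.23]{Watanabe_Black_box} and offers precisely the intuition you formalize, namely that ``the energy-conserving unitary conditioned by the incoherent projective measurement can be replaced by a controlled energy-conserving unitary.'' Your handling of the common-ancilla issue (taking $E=\bigotimes_a E_a$ with $\tau_E=\bigotimes_a\tau_{E_a}$ and extending each $U_a$ by identity on the remaining factors) and the use of $M_aM_b=\delta_{ab}M_a$ to kill the cross terms under $\Tr_A$ are exactly the right moves.
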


The intuitive explanation is that the energy-conserving unitary conditioned by the incoherent projective measurement can be replaced by a controlled energy-conserving unitary.
Lemma~\ref{app lem: TO=ICPTO} guarantees that we can perform the type measurement of the energy eigenbasis on the subsystems because it can also be done with the thermal operations.

In the subsequent discussion, we exhibit the construction of the universal work extraction from the given $n$ copies of states $\rho^{\otimes n}$.
The overview of the protocol after pinching is as follows.
\begin{enumerate}
    \item Perform the type measurement with a sublinear number of systems. 
    \item According to the measurement result, calculate the estimated relative entropy.
    \item Apply the protocol in Sec.~\ref{App Sec: Brandao}.
\end{enumerate}
Let $\qty{k_n}_{n\in\mbN},~\qty{m_n}_{n\in\mbN}$ be two sequences of the positive integers, and $\qty{\ve_n}_{n\in\mbN},~\qty{\delta_n}_{n\in\mbN}$ be two sequences of positive numbers.
$\qty{k_n}_{n\in\mbN}$ describes the increase of the number of states in a pinched block, and $\qty{m_n}_{n\in\mbN}$ indicates the number of measured states out of $q_n:=\lfloor\frac{n}{k_n}\rfloor$ copies of $\tilde{\mP}(\rho^{\otimes k})$. 
To achieve the optimal work extraction rate $D(\rho\|\tau)$, $\qty{k_n}$ and $\qty{m_n}$ must satisfy at least $k_n\to \infty,~\frac{k_n}{n}\to 0,~ \frac{m_n}{q_n}\to 0$. The first condition is to make $\frac{1}{k}D(\tilde{\mP}(\rho^{\otimes k})\|\tau^{\otimes k})$ arbitrarily close to $D(\rho\|\tau)$.
The second condition is to make $q_n\to\infty$. The last condition is necessary so that $m_n$ is negligible.
Moreover, $\qty{\ve_n}$ and $\qty{\delta_n}$ respectively represent the error of the fidelity of the final state and the error in the work extraction rate, which should satisfy $\ve_n\to 0$ and $\delta_n\to 0$.
Our goal is to find the appropriate $\qty{k_n}$ and $\qty{m_n}$ which achieves the vanishing errors $\qty{\ve_n},\qty{\delta_n}$.

Here, note that when the protocol succeeds probabilistically, the probability of failure affects the fidelity error of the work extraction protocol. 
To see this, let us denote the ideal CPTP map as $\Lambda$, and the probability of failure as $p_e$.
In this case, the operation applied to the state is written as $(1-p_e)\Lambda+p_e\Xi$, where $\Xi$ is some CPTP map.
For an arbitrary input state $\rho$, the fidelity error of this probabilistic protocol is bound as
\bal
1- F\qty(((1-p_e)\Lambda+p_e\Xi)(\rho), \Lambda(\rho))\leq 1-(1-p_e)F(\Lambda(\rho),\Lambda(\rho))=p_e.
\eal
Therefore, we need to make the probability of failure in the probabilistic part of the work extraction small enough to achieve the target fidelity error.

Thanks to the Schur pinching, the state $\tilde{\mP}(\rho^{\otimes k})$ and the $k$ copies of the thermal state $\tau$ are diagonalized by a fixed energy eigenbasis. We denote this eigenbasis as $\qty{\ket{i}}_{i=1}^{d^k}$, and we assume that the two states are decomposed as
\bal
\tilde{\mP}(\rho^{\otimes k})=\sum_{i=1}^{d^k}(p^k)_i\ketbra{i}{i},~~\tau^{\otimes k}=\sum_{i=1}^{d^k}(t^k)_i\ketbra{i}{i}.
\eal
For simplicity, we denote the probability distribution $((p^k)_i)_{i=1}^{d^k}$ and $((t^k)_i)_{i=1}^{d^k}$ as $p^k$ and $t^k$ respectively.

Let us fix $n\in\mbN$. 
We first evaluate the number of the measured systems $m_n$, which is sufficient to estimate the probability distribution $p^k$ with error $r_n$ with respect to the total variational distance with probability larger than $\frac{\ve_n}{2}$.
Let us denote the empirical distribution of the type measurement of the string with length $m_n$ composed of the alphabets $\qty{1,\ldots, d^k_n}$ as $\hat{p}^k$. 
Due to Lemma~\ref{Lem: Hoeffding inequality}, we have
\bal\label{Eq: sample comlexity of tomography}
    m_n\geq \frac{d^{k_n}\ln 2+\ln\frac{2}{\ve_n}}{2 r_n^2}\Rightarrow{\rm Pr}[\|p^k-\hat{p}^k\|_1> r_n]\leq \frac{\ve_n}{2}.
\eal
By the continuity bound of the relative entropy referred to in Lemma~\ref{Lem: continuity bound}, it holds that
\bal\label{App Eq: continuity bound and estimation}
\|p^k-\hat{p}^k\|_1\leq r_n~\Rightarrow~\abs{D(p^k\|t^k)-D(\hat{p}^k\|t^k)}\leq \qty(1+k_n\frac{\beta E_{\rm max}+\log Z}{\sqrt{2}})r_n.
\eal
where $E_{\max}$ is the maximum eigenenergy of the Hamiltonian $H$ of the single system, and $Z=\Tr(e^{-\beta H})$.
 \begin{figure}
       \centering
       \includegraphics[width=0.8\linewidth]{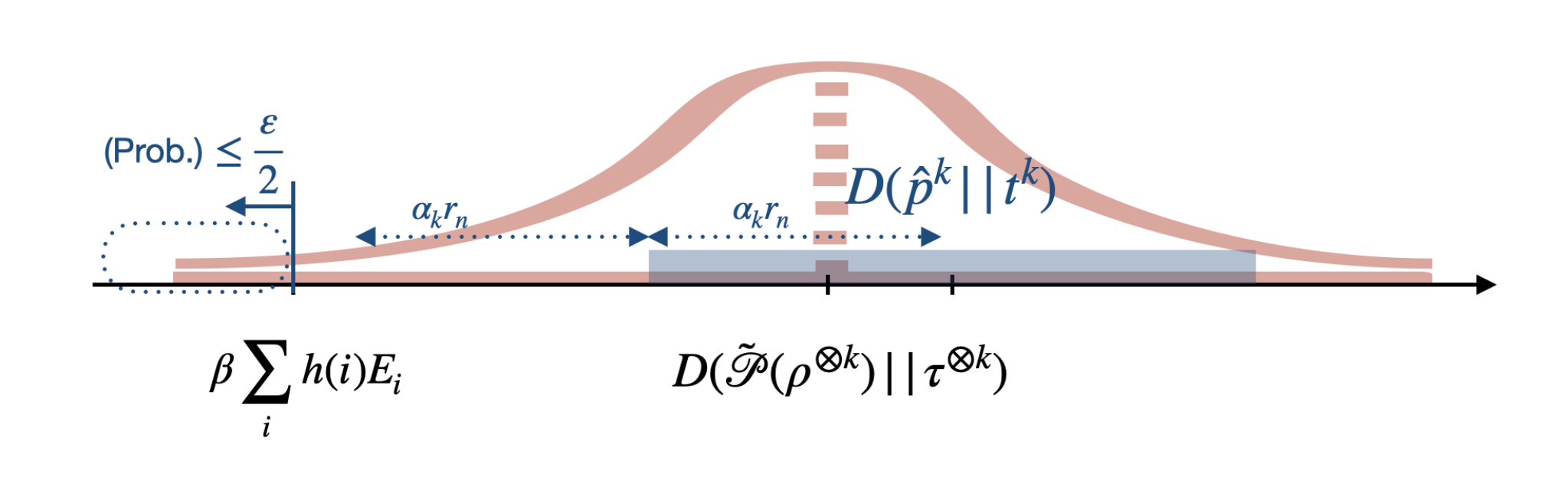}
       \caption{The schematic figure of the work extraction protocol in the state-agnostic scenario. The red curve represents the probability of the relative entropy of each type subspace. Due to the typicality, the peak of the red curve is at the relative entropy $D(\tilde{\mP}(\rho^{\otimes k})\|\tau^{\otimes k})=D(p^k\|t^k)$ of the pinched input state. However, because we do not have direct access to this distribution, the accessible information is that the relative entropy of the pinched input state $D(\tilde{\mP}(\rho^{\otimes k})\|\tau^{\otimes k})$ is at most $\qty(1+k_n\frac{\beta E_{\rm max}+\log Z}{\sqrt{2}})r_n\qty(:=\alpha_k r_n)$ far from the estimated relative entropy $D(\hat{p}^k\|t^k)$ with probability larger than $1-\ve/2$. As is also mentioned in the discussion of Eq.~\eqref{eq:infidelity}, we need to make the error from the atypical subspace to which the energy-conserving unitary cannot be applied. To ensure a sufficiently small error on fidelity, it is sufficient to take the function $h(\cdot)$ so that $\beta\sum_ih(i)E_i$ per subsystems involved in the work extraction is equal to smaller than $D(\hat{p}^k\|t^k)-2\alpha_k r_n$.  }
       \label{fig:choice of protocol}
\end{figure}
To specify the energy-conserving unitary, we need to find $h(\cdot)$ that satisfies Eq.~\eqref{Eq: choice of h()}. Considering the estimation error of the relative entropy, it suffices to design $h(\cdot)$ so that 
\bal\label{app eq: guaranteed performence after pinch}
\frac{1}{q_n-m_n}\beta \sum_i h(i)E_i\leq  D\qty(\hat{p}^k\|t^k)-2\qty(1+k_n\frac{\beta E_{\rm max}+\log Z}{\sqrt{2}})r_n-\mO\qty(\frac{1}{\sqrt{q_n-m_n}}).
\eal
holds (See Figure~\ref{fig:choice of protocol}).
Now, we derive the conditions of $m_n$ and $k_n$ so that the target errors $\ve_n$ and $\delta_n$ are achieved.
There are two factors of the fidelity error, the probability of failing the type measurement and the effect of atypical block in Eq.~\eqref{eq:infidelity}. If we take $q_n$ and $m_n$ such that $q_n\gg m_n$ holds and $m_n$ satisfies Eq.~\eqref{Eq: sample comlexity of tomography}, the total trace of atypical space is less than $\frac{\ve_n}{2}$.

Now, we concentrate on the error on the work extraction rate $\delta_n$.
There are three factors that affect the error on the work extraction rate: The decrease of the number used to extract work, the effect of Schur pinching, and estimation error.
First, we evaluate the ratio of the number $k_n(q_n-m_n)$ of the states used to apply the work extraction per the total number $n$ of input states. We denote the number of discarded systems before the Schur pinching as $s_n:=n-k_n\lfloor\frac{n}{k_n}\rfloor$. If we assume that $\frac{k_n}{n}\to 0$ and $\frac{m_n}{q_n}\to 0$ hold in the limit $n\to \infty$, we have
\bal
1\geq\frac{k_n(q_n-m_n)}{n}=\frac{k_n(q_n-m_n)}{k_n q_n+s_n}\geq \frac{k_n(q_n-m_n)}{k_n(q_n+1)}\geq \frac{q_n-m_n}{q_n+1}\xrightarrow{n\to\infty} 1.
\eal
From this, we can see that the effect of the measured systems $m_n$ and the discarded systems $s_n$ can be negligible.

Eq.~\eqref{app Eq: bound for the schur pinching} ensures that, as long as we take $\qty{k_n}_{n\in\mbN}$ so that $k_n\to \infty$ holds in the asymptotic limit, the effect of pinching is also negligible in the limit $n\to\infty$.
Thus, we need to check that we can make the estimation error arbitrarily small. We set the target error only for this error as $\delta'_n$.
Due to Eq.~\eqref{App Eq: continuity bound and estimation}, it is sufficient to take $r_n$ satisfying
\bal
3\qty(1+k_n\frac{\beta E_{\rm max}+\log Z}{\sqrt{2}})r_n\leq \delta'_n.
\eal
Substituting this to Eq.\eqref{Eq: sample comlexity of tomography}, we have
\bal
m_n\geq 9\frac{d^{k_n}\ln 2+\ln\frac{2}{\ve_n}}{2{\delta'}_n^2}\qty(1+k_n\frac{\beta E_{\rm max}+\ln Z}{\sqrt{2}})^2.
\eal
If we take the sequences $\qty{k_n}_{n\in\mbN},~\qty{\ve_n}_{n\in\mbN},~\qty{\delta_n}_{n\in\mbN}$ as 
\bal\label{app eq: choice of parameters in finite}
k_n&=\lfloor\frac{\ln n}{3\ln d}\rfloor,\\
\ve_n&=e^{-n^{1/3}},\\
\delta'_n&=\frac{1}{n^{1/6}},
\eal
$m_n$ scales as $m_n=\mO\qty(n^{2/3}(\log n)^2)$. 
The following confirms that this choice of parameters satisfies the assumptions of the parameters imposed so far.
\bal
q_n=\frac{n}{k_n}&\sim \frac{ n}{\log n}\xrightarrow{n\to\infty}\infty,\\
\frac{m_n}{q_n}&\sim\frac{n^{2/3}(\log n)^2}{\frac{n}{\log n}}=\frac{(\log n)^3}{n^{1/3}}\xrightarrow{n\to\infty}0.
\eal

Combining the above discussions, we can conclude that if we tailor the thermal operations characterized by the parameters in Eq.~\eqref{app eq: choice of parameters in finite}, the protocol achieves the work extraction task $D(\rho\|\tau)$ from any input states $\qty{\rho^{\otimes n}}_{n\in\mbN}$.
This concludes the proof of Theorem~\ref{thm:universal}: the universal work extraction for finite-dimensional systems.

\subsection{ Alternative universal protocols}\label{app sec: measure and prepare}
We present two other approaches to obtaining the universal work extraction protocol, which offer different perspectives and may be of independent interest.

\subsubsection{Measure-and-prepare strategy}
One approach utilizes the idea of the measure-and-prepare channel.
Although the scope of the protocol is slightly narrower than the protocol above, this protocol is insightful.
Before the construction, we state the restrictions of this protocol.
One is that the protocol exhibited below cannot work for some states. Nevertheless, the volume of such states is measure-zero, and thus the possibility of accidentally having such input states is negligible.
The other is that the class of free operations of this protocol is chosen as the closure of thermal operations.
This is, however, not a major drawback either, considering that taking the closure for thermal operations is a standard treatment in quantum thermodynamics~\cite{Gour_role_of}.

Since Schur pinching allows us to diagonalize any input i.i.d. states with a fixed energy eigenbasis, we can concentrate on the classical state without loss of generality.
In the subsequent discussion, we employ another equivalent definition of extractable work utilizing another system called \emph{battery}. In Ref.~\cite{Gour_role_of}, they consider the battery system with two energy levels. We extend this definition to the system with multiple energy levels.

Let us consider the $d_X$-dimensional Hilbert space $\mH_X$ with the Hamiltonian which is given as
\bal
H_X=\sum_{w=1}^{d_X} E_w \ketbra{w}{w}.
\eal
Given a state $\rho$, suppose that the conversion $\rho\to \ketbra{w}{w}$ is possible with the free operation, we define the extracted work as
\bal
\beta W(\rho)=\log\frac{1}{\bra{w}\tau_X\ket{w}}.
\eal
Compared to the work storage, the advantage of the battery system is that we do not need to append the ground state in advance.
The definition of extractable work with a work storage and that with a battery system are equivalent in the following sense.
\begin{lem}\label{lem: work storage=work battery}
   If the work $W$ is extractable in the sense of the work storage, i.e., the conversion $\rho\otimes \ketbra{0}{0}_W\to \rho'\otimes \ketbra{W}{W}_W$ is possible by the thermal operations, the work $W$ is extractable in the sense of the work battery, i.e., $\rho\otimes \ketbra{0}{0}_W\to\ketbra{w}{w}_X\otimes \ketbra{0}{0}_W$ is also possible with thermal operations. Here, $\ket{w}$ is the energy eigenstate of $H_X$ with $W=-\frac{1}{\beta}\log \bra{w}H_X\ket{w}$. Conversely, if we can extract work $W$ from the initial state $\rho$ in the sense of the work battery, we can extract work from the initial state $\rho$ in the sense of the work storage.
\end{lem}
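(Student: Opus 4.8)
The plan is to reduce the equivalence of the two work definitions to a single interconvertibility statement between pure energy eigenstates, and then obtain both implications by composing free operations. The pivotal observation is that lifting the work storage from its ground level $\ket{0}_W$ to the excited level $\ket{W}_W$ and preparing the battery eigenstate $\ket{w}_X$ with $\beta W=-\log\bra{w}\tau_X\ket{w}$ encode exactly the same thermodynamic resource. Concretely, I would first establish the following claim.

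\emph{Sub-lemma:} if $\beta W=-\log\bra{w}\tau_X\ket{w}$, then $\ketbra{W}{W}_W\otimes\tau_X$ and $\ketbra{0}{0}_W\otimes\ketbra{w}{w}_X$ are interconvertible by thermal operations.

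Both states commute with the total Hamiltonian $H_W+H_X$, so their interconversion under thermal operations is governed by thermomajorization \cite{horodecki_fundamental_2013}, and I would compute the two curves explicitly. For $\ketbra{W}{W}_W\otimes\tau_X$ every occupied level $(W,j)$ carries the same ratio of population to Gibbs weight, $e^{\beta W}/Z_X$, so its thermomajorization curve is the straight segment from $(0,0)$ to the elbow $(e^{-\beta W}Z_X,1)$ followed by the horizontal segment to $(Z_WZ_X,1)$. The pure eigenstate $\ketbra{0}{0}_W\otimes\ketbra{w}{w}_X$ has Gibbs weight $e^{-\beta E_w}$, hence the single elbow $(e^{-\beta E_w},1)$. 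The defining relation $\beta W=-\log\bra{w}\tau_X\ket{w}$ is precisely $e^{-\beta W}Z_X=e^{-\beta E_w}$, so the two elbows---and therefore the whole curves---coincide; identical curves thermomajorize each other, proving the sub-lemma. Equivalently, for a pure energy eigenstate all R\'enyi divergences $D_\alpha(\,\cdot\,\|\tau)$ collapse to $-\log$ of the Gibbs weight while $\tau_X$ contributes nothing, so both states share the common value $\beta W+\log Z_W$ of every $D_\alpha$; this certifies exact interconvertibility and recovers the two-level battery equivalence of \cite{Gour_role_of} as a special case.

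Given the sub-lemma, both directions follow by chaining free operations (appending thermal states, applying thermal operations, discarding subsystems). For storage $\Rightarrow$ battery I would start from $\rho\otimes\ketbra{0}{0}_W$, append $\tau_X$, run the storage protocol to reach $\rho'\otimes\ketbra{W}{W}_W\otimes\tau_X$, apply the sub-lemma map on $WX$ to get $\rho'\otimes\ketbra{0}{0}_W\otimes\ketbra{w}{w}_X$, and trace out the spent system to land on $\ketbra{w}{w}_X\otimes\ketbra{0}{0}_W$. For battery $\Rightarrow$ storage I would run the battery protocol to reach $\ketbra{w}{w}_X\otimes\ketbra{0}{0}_W$ and then apply the reverse sub-lemma map, obtaining $\ketbra{W}{W}_W\otimes\tau_X$, which is the storage target with residual $\rho'=\tau_X$. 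Each composite is again a thermal operation.

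The main obstacle I anticipate is justifying the \emph{exact}, deterministic interconversion of pure energy eigenstates that live on different systems with mismatched supports (a full-rank $\tau_X$ on one side versus a rank-one $\ketbra{w}{w}_X$ on the other). The clean resolution is that both states are incoherent in energy, so the unique necessary-and-sufficient criterion is coincidence of thermomajorization curves, which the relation between $W$ and $\ket{w}$ engineers exactly; the convention $H_W=\sum_{W\in\mbW}W\ketbra{W}{W}_W$ with zero ground energy is what yields the cancellation $-\log\bra{0}\tau_W\ket{0}=\log Z_W$ that aligns the curves. A secondary point to verify is the routine fact that appending $\tau_X$ and tracing out subsystems are genuine free operations that leave the thermomajorization comparison intact.
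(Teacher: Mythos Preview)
Your proof is correct and follows essentially the same strategy as the paper: both arguments hinge on the observation that for pure energy eigenstates the min- and max-divergences (equivalently, the thermomajorization curves) collapse, so $\ketbra{W}{W}_W$ and $\ketbra{w}{w}_X$ are exactly interconvertible resources, and then both directions follow by composing free operations. The only difference is presentational: the paper quotes the one-shot characterization $W_{\rm ext}=D_{\min}$, $W_{\rm form}=D_{\max}$ from \cite{horodecki_fundamental_2013} and observes that these coincide on $\ketbra{w}{w}_X$, whereas you compute the thermomajorization elbows directly and verify they match---your calculation is exactly what underlies the cited result, so the two proofs are the same argument at different levels of abstraction.
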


\begin{proof}
    In Ref.~\cite{horodecki_fundamental_2013}, it is shown that the extractable work and the work formation with thermal operations of the state $\ketbra{w}{w}$ in the sense of the work storage is given by 
    \bal
    \beta W_{\rm ext}(\ketbra{w}{w}_X)&=D_{\min}(\ketbra{w}{w}\|\tau_X)\\
    \beta W_{\rm form}(\ketbra{w}{w}_X)&=D_{\rm max}(\ketbra{w}{w}\|\tau_X).
    \eal
    Here, $W_{\rm ext}$ is the exact extractable work, and $W_{\rm form}$ is the exact work formation defined as
    \bal
    W_{\rm ext}(\rho)&=\max\qty{W~|~\exists \Lambda\in{\rm TO},~{\rm s.t.}~\Lambda(\rho\otimes\ketbra{0}{0}_W)=\ketbra{W}{W}_W},\\
    W_{\rm form}(\rho)&=\min\qty{W~|~\exists \Lambda\in{\rm TO},~{\rm s.t.}~\Lambda(\ketbra{W}{W}_W)=\rho\otimes\ketbra{0}{0}_W},
    \eal
    and $D_{\rm min}$ and $D_{\rm max}$ represents the quantum min- and max-relative entropy defined as
    \bal
    D_{\rm min}(\rho\|\sigma)&=-\log \Tr[\Pi_{\rho}\sigma],\\
    D_{\rm max}(\rho\|\tau)&=\log \inf\qty{\lambda~|~\rho\leq \lambda \sigma}.
    \eal
    From the definition, it holds that
    \bal
    D_{\min}(\ketbra{w}{w}_X\|\tau_X)=D_{\max}(\ketbra{w}{w}_X\|\tau_X)=\log\frac{1}{\bra{w}\tau_X\ket{w}}.
    \eal
    
    From this observation, we reach the statement as follows.
    Suppose that the conversion $\rho\otimes \ketbra{0}{0}_W\to \rho'\otimes \ketbra{W}{W}_W$ is possible. 
    Then, we append the thermal state $\tau_X$ of the work battery. Since the work formation of the state $\ketbra{w}{w}$ is expressed as $\beta W_{\rm form}(\ketbra{w}{w})=D_{\min}(\ketbra{w}{w}_X\|\tau_X)=\frac{1}{\beta}\log \bra{w}H_X\ket{w}=\beta W$, we can create $\ketbra{w}{w}_X$ with a thermal operation with the state $\ketbra{W}{W}_W$ of the work storage. 
    
    We can show the opposite statement similarly. Suppose that the conversion $\rho\to\ketbra{w}{w}_X$ is possible by applying a thermal operation. Note that the extractable work of the state $\ketbra{w}{w}$ with respect to the work storage is $\beta W_{\rm ext}(\ketbra{w}{w}_X)=D_{\min}(\ketbra{w}{w}\|\tau_X)=\frac{1}{\beta}\log \bra{w}H_X\ket{w}=\beta W$. From this, we conclude that work $W$ can be extracted from the input $\rho$ in the sense of the work storage as follows. First, we append the ground state $\ketbra{0}{0}_W$ of the work storage to the input state $\rho$. From the assumption, the conversion $\rho\otimes \ketbra{0}{0}_W\to \ketbra{w}{w}_X\otimes \ketbra{0}{0}_W$ is possible by a thermal operation. Finally, we apply the work extraction protocol for the state $\ketbra{w}{w}_X$ of the work battery and obtain $\ketbra{W}{W}_W$. Combining these steps, we can show that the conversion $\rho\otimes \ketbra{0}{0}_W\to \ketbra{W}{W}_W$ is possible.
\end{proof}

Suppose we are given $n$ copies of quantum state $\rho$. Because of the Schur pinching, we obtain the classical state, which is diagonalized with a known basis.
Combined with Eq.~\eqref{app Eq: bound for the schur pinching}, we can assume that the input state is classical without loss of generality.

In Ref.~\cite{Shiraishi_2020_two_constructive, horodecki_fundamental_2013}, they show that the classical Gibbs preserving operation can be enumerated by the thermal operation with an infinitely large bath. 
Thus it suffices to consider the Gibbs preserving operation.
Here, the classical Gibbs-preserving operation is defined as follows.
\begin{defn}
    Suppose the system has $D$ possible states, whose energies are $E_1,\ldots, E_D$. A probability transition matrix $T=\qty(T_{ij})_{i,j=1}^D$ is called a Gibbs preserving operations if 
    \bal
    t=Tt
    \eal
    holds, where $t$ is the probability distribution of the thermal state.
\end{defn}

We denote the set of all probability distributions with $D$ entries as $\mbP_D$.
We divide the set of all possible strings into groups that only contain similar types as follows.
\begin{defn}
    Fix a natural number $M$, and a vector of nonnegative integers $\Vec{l}=(l_1,\ldots, l_D)$ which satisfies $M=\sum_{i=1}^D l_i$. A block $\mB(\Vec{l})$ corresponding to $\Vec{l}$ is defined as
    \bal
    \mB(\vec{l})=\qty{\Vec{p}\in\mbP_D~|~\frac{\Vec{l}}{M}=\arg\min_{\frac{\Vec{l'}}{M} }\|\frac{\Vec{l'}}{M}-\Vec{p}\|_1}.
    \eal
\end{defn}
If a probability distribution belongs to several blocks, we choose a single $l$ out of them.
Note that not all the blocks contain the valid type classes.
From an observation, the following lemma holds.
\begin{lem}
For any probability distribution $\Vec{p},\vec{q}\in\mbP_d$ and for any ${\Vec{l}}$, it holds that
    \bal
    \Vec{p},\Vec{q}\in\mB(\Vec{l})\Rightarrow \|\Vec{p}-\Vec{q}\|_1\leq\frac{D}{M}.
    \eal
\end{lem}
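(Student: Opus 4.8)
The plan is to reduce everything to a single geometric fact about how well the rational ``grid points'' $\vec{l'}/M$ approximate arbitrary points of the simplex $\mbP_D$, and then to close the argument with the triangle inequality. Writing $\vec{l}/M$ for the representative grid point of the block $\mB(\vec{l})$, its definition says precisely that $\vec{l}/M$ is a nearest grid point (in $\ell_1$) to every $\vec{p}$ it contains, even in the presence of the tie-breaking choice. Hence if both $\vec{p},\vec{q}\in\mB(\vec{l})$, then
\[
\|\vec{p}-\vec{q}\|_1\le\Big\|\vec{p}-\tfrac{\vec{l}}{M}\Big\|_1+\Big\|\tfrac{\vec{l}}{M}-\vec{q}\Big\|_1,
\]
so it suffices to prove that every $\vec{p}\in\mbP_D$ lies within $\ell_1$-distance $\tfrac{D}{2M}$ of its nearest grid point. (Here I read $d=D$, the lowercase $d$ in the statement being the same alphabet size.)

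To establish this grid-approximation bound I would exhibit one explicit grid point within distance $\tfrac{D}{2M}$; the nearest one can only be closer. Apply largest-remainder (Hamilton) rounding to $\vec{p}$: set $l_i=\lfloor Mp_i\rfloor$, let $f_i=Mp_i-\lfloor Mp_i\rfloor\in[0,1)$ be the fractional parts, and let $k=M-\sum_i\lfloor Mp_i\rfloor=\sum_i f_i$, which is an integer with $0\le k\le D-1$. Round up (increment by $1$) the $k$ indices carrying the largest fractional parts and leave the remaining $D-k$ ``down'' indices unchanged; the result $\vec{l'}$ has nonnegative integer entries summing to $M$. A short computation gives the per-coordinate errors $(1-f_i)/M$ on the up-set and $f_i/M$ on the down-set, and using $\sum_{\mathrm{up}}(1-f_i)=k-\sum_{\mathrm{up}}f_i=\sum_{\mathrm{down}}f_i$ one finds
\[
\Big\|\vec{p}-\tfrac{\vec{l'}}{M}\Big\|_1=\frac{2}{M}\sum_{i\in\mathrm{down}}f_i .
\]

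It then remains to bound $\sum_{\mathrm{down}}f_i$. Since the down-set consists of the $D-k$ smallest fractional parts among $D$ values summing to $k$, its average is at most the overall average $k/D$, whence $\sum_{\mathrm{down}}f_i\le\frac{k(D-k)}{D}\le\frac{D}{4}$, the last step maximizing $k(D-k)$ at $k=D/2$. This gives $\|\vec{p}-\vec{l'}/M\|_1\le\frac{D}{2M}$, and feeding it into the triangle inequality yields $\|\vec{p}-\vec{q}\|_1\le\frac{D}{M}$. I expect the only real obstacle to be securing the factor of $2$: a crude rounding bound gives each coordinate error below $1/M$ and hence only $\|\vec{p}-\vec{q}\|_1<2D/M$, which is too weak. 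The essential ingredient is the averaging estimate on the smallest fractional parts, which halves the per-point radius to $\tfrac{D}{2M}$ and lets the triangle inequality close at exactly $\tfrac{D}{M}$.
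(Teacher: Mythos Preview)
Your argument is correct. The paper itself does not supply a proof of this lemma; it simply records the bound as following ``from an observation,'' so there is nothing to compare against beyond noting that your proof actually fills in the details the authors omitted.

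Your route---triangle inequality plus a covering-radius estimate for the grid $\{\vec{l'}/M\}$---is the natural one, and the largest-remainder rounding together with the averaging bound on the $D-k$ smallest fractional parts is exactly what is needed to secure the constant $D/(2M)$ per point rather than the crude $D/M$. You are right that the naive coordinate-wise estimate $|p_i-l'_i/M|<1/M$ only yields diameter $2D/M$ and would not suffice; the halving comes precisely from exploiting that the down-set carries at most its proportional share $k(D-k)/D\le D/4$ of the total fractional mass $k$. The identification $d=D$ is also correct---the lowercase in the hypothesis is a typographical slip in the paper.
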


We define a projector onto the subspace corresponding to these boxes as follows.
\bal
P_{\mB(\Vec{l})}=\sum_{{\bold s}|{\rm Type} ({\bold s})\in\mB(\Vec{l})}\ketbra{\bold s}{\bold s}.
\eal

In the following discussion, we construct the universal work extraction protocol. 
We construct the Hamiltonian of the battery system as follows.
\bal
H_X^M=\sum_{\Vec{l}~|~\Tr\qty[P_{\mB(\Vec{l})}\tau^{\otimes n}  ]\neq 0} \frac{1}{\beta}\log\frac{1}{\Tr\qty[P_{\mB(\Vec{l})}\tau^{\otimes n}  ]} \ketbra{W_{\Vec{l}}}{W_{\Vec{l}}}_X.
\eal
Note that the number of the vectors $\Vec{l}$ which satisfy $\Tr\qty[P_{\mB(\Vec{l})}\tau^{\otimes n}  ]\neq 0$ is at most finite; it suffices to take a finite-dimensional battery system.
Our protocol is described as follows.
\bal
\mE_M(\rho_{n})=\sum_{\Vec{l}}\Tr[P_{\mB(\Vec{l})}\rho_{n}]\ketbra{W_{\Vec{l}}}{W_{\Vec{l}}}_X.
\eal
By design, $\mE_M$ is Gibbs-preserving. 
We apply this channel to the state $\rho^{\otimes n}$. Note that there exist infinitely many $M$ such that $\Vec{p}$ is an interior point of $\mB(\Vec{l})$.
For such $M$, there exist a positive number $r>0$ which satisfies $\mB_r(\Vec{p})\subset \mB(\Vec{l})$. Here, $\mB_r(\Vec{p})$ is the open ball with respect to the total variation distance. We can take sufficiently large $n$ such that 
\bal
\Tr[P_{\mB_\ve(\Vec{l})}\rho^{\otimes n}]> 1-\zeta
\eal
holds for any $\zeta>0$.
Note that if the state is exactly on the boundary of the blocks, the inequality above does not hold.

From Sanov's theorem~\cite{cover_1999_elements}, it holds that
\bal
\frac{1}{\beta}\lim_{n\to\infty}\frac{1}{n}\log\frac{1}{\Tr\qty[P_{\mB(\Vec{l})}\tau^{\otimes n}]}= \frac{1}{\beta}D(\Vec{p^*}\|\Vec{q}),\\
\vec{p^*}=\arg\min_{\Vec{p'}\in\mB(\Vec{l})} D(\Vec{p'}\|\Vec{q}).
\eal
Due to Lemma~\ref{Lem: continuity bound}, it holds that
\bal
\abs{D(\rho_1\|\tau)-D(\rho_2\|\tau )}
&\leq\qty(1 +\frac{\beta E_{\rm max}+\log Z}{\sqrt{2}}) \| \rho_1-\rho_2 \|_1
\eal
We can take sufficiently large $M$ independent of the initial state so that we have $\abs{D(\Vec{p^*}\|\Vec{q })-D(\Vec{p}\|\Vec{q})}\leq \delta$.
From the discussion above, we can conclude that if we take a sufficiently large number of samples and sufficiently large $M$, we can achieve the work extraction rate $D(\vec{p}\|\Vec{q})=D(\rho\|\tau)$ with any accuracy.

Combining these, together with the Schur pinching applied before the classical protocol described in this subsection, we have the following result. For a family of the protocols $\qty{\qty{\Lambda^{M}_n}_{n\in\mbN}}_{M\in\mbN}$ we construct, it holds that
\bal
\forall \ve, \delta>0, ~\exists~M\in\mbN~~&{\rm s.t.}~~\forall\rho\in\mD(\mH)~{\rm (a.e.)},~\exists N_\rho\in\mbN~~{\rm s.t.}~~n\geq N_\rho \\&\Rightarrow F(\Lambda^{M}_n(\rho^{\otimes n}),\ketbra{n(D(\rho\|\tau)-\delta)}{n(D(\rho\|\tau)-\delta)})\geq 1-\ve.
\eal
Here, (a.e.) implies that the inequality holds for almost all density matrices. The claim is weakened because the protocol does not work when the input state is precisely on the boundary of the block.
The state dependence is imposed on the sample complexity. Thus, the experimenter does not know how many copies one needs to use.

\subsubsection{Strategy based on state tomography}
 
In the protocol presented in the main text, we introduced the Schur pinching technique to overcome the difficulty of not knowing the eigenbasis that diagonalizes the state in the energy eigensubspaces.
Alternatively, one can employ a more ``brute-force'' approach, where we learn the description of the density matrices by state tomography. 
What is nontrivial here is that the thermal operations do not contain measurements.
We solve this by using the incoherently conditioned thermal operations, which can be implemented by thermal operations because of Lemma~\ref{app lem: TO=ICPTO}.
Namely, we measure a sublinear number of copies of the initial state to estimate the description of the initial state and apply the appropriate thermal operation to the remaining copies depending on the estimate.

One could then notice that this approach comes with another subtlety. 
To ensure this conditioned thermal operation to be implementable by a thermal operation, the measurement needs to be incoherent, while incoherent measurements are not informationally complete and are generally not sufficient to get the classical description of the state $\rho$. 
In the following, we see that it is not a problem for us, because what matters for the performance of our protocol is the description of the quasi-classical state that is block-diagonal in the energy eigenspaces.
The idea of the construction is described as follows.
\begin{enumerate}
    \item Given $\rho^{\otimes n}$, apply the pinching channel in Eq.~\eqref{app eq: pinching} to each $k$ copies.
    \item Perform the tomography in the energy subspace and estimate $\mP(\rho^{\otimes k})$. Let $\hat{\rho_k}$ be the estimated density matrix. This procedure can be implemented in the thermal state in the following step due to Lemma~\ref{app lem: TO=ICPTO}.
    \item Apply the complete decohering channel $\Delta_{\hat{\rho_k}}(\mP(\rho^{\otimes k}))$ with respect to the eigenbasis of the estimated state of $\hat{\rho_k}$. 
    \item Apply the work extraction protocol according to the estimated state $\hat{\rho_k}$.
\end{enumerate}

Let us discuss the relation between the sample cost and the error $\| \Delta_{\hat{\rho_k}}(\mP(\rho^{\otimes k}))-\hat{\rho_k} \|_1$. From the contractility and the relation $\Delta_{\hat{\rho_k}}(\hat{\rho_k})=\hat{\rho_k}$, we have
\bal
\| \Delta_{\hat{\rho_k}}(\mP(\rho^{\otimes k}))-\hat{\rho_k} \|_1\leq \| \mP(\rho^{\otimes k})-\hat{\rho_k} \|_1.
\eal
With this strategy, we can show that the completely dephasing channel tailored in this discussion does not lose too much coherence, which can be shown as follows. First, observe that
\bal
\|  \Delta_{\hat{\rho_k}}(\mP(\rho^{\otimes k}))- \mP(\rho^{\otimes k}) \|_1 &\leq \|\Delta_{\hat{\rho_k}}(\mP(\rho^{\otimes k}))-\hat{\rho_k}\|_1+\|\hat{\rho_k}-\mP(\rho^{\otimes k}) \|_1\\
&\leq \|\hat{\rho_k}-\mP(\rho^{\otimes k}) \|_1+\|\hat{\rho_k}-\mP(\rho^{\otimes k}) \|_1=2\|\hat{\rho_k}-\mP(\rho^{\otimes k}) \|_1,
\eal
and thus the following holds.
\begin{equation}
    \begin{aligned}
        &\abs{\frac{1}{k}D(\Delta_{\hat{\rho_k}}(\mP(\rho^{\otimes k}))\|\tau^{\otimes k})- D(\rho\|\tau) } \\&\leq \abs{\frac{1}{k}D(\Delta_{\hat{\rho_k}}(\mP(\rho^{\otimes k}))\|\tau^{\otimes k})-\frac{1}{k}D(\mP(\rho^{\otimes k})\|\tau^{\otimes k})}+\abs{\frac{1}{k}D(\mP(\rho^{\otimes k})\|\tau^{\otimes k})-D(\rho\|\tau)}\\
&\leq \frac{2}{k}\qty(1+k\frac{ \beta E_{\max} +\log Z }{\sqrt{2}})\|\hat{\rho_k}-\mP(\rho^{\otimes k}) \|_1+\frac{1}{k}\log (k+1)^{d-1}
    \end{aligned}
\end{equation}

Furthermore, since the estimated state $\hat{\rho_k}$ is incoherent state because the available measurement is restricted to the incoherent measurement, the completely dephasing channel $\Delta_{\hat{\rho_k}}$ is a thermal operation, which is a direct consequence of Lemma~\ref{lem: pinching is thermal}.

As long as $k$ goes to infinity and the error $\|\hat{\rho_k}-\mP(\rho^{\otimes k}) \|_1$ on the tomography goes to 0 in the limit $n\to\infty$, the loss of the relative entropy due to applying $\Delta_{\hat{\rho_k}}$ is negligible.
Thus, we can concentrate on the sample cost of tomography.
Here, we have to be careful that what we can perform is restricted to the incoherent measurement.
If we concentrate on the eigenspaces of $H^{\times k}$, incoherent measurement is informationally complete for each energy eigenspace.
Roughly, the sample cost of the tomography to estimate the density matrix of $\mP(\rho^{\otimes k})$ with the accuracy $\ve$ with probability $1-\delta$ is evaluated as
\bal\label{app eq: sample cost in tomographic approach}
\sum_{E_k\in\spec(H^{\otimes k})}\mO\qty(\frac{1}{p_{E_k}}  \frac{d_{E_k}}{\ve^2}\log \frac{1}{\delta})\approx \mO\qty(\frac{1}{2^{-kS(\rho)}}   \frac{2^{kS(\rho)} }{\ve^2}\log \frac{1}{\delta}  )\lesssim\mO\qty(\frac{d^{3k}}{\ve^2}\log\frac{1}{\delta})
\eal
Therefore, we can take the parameter $k_n$ properly so that we can construct the universal work extraction in a naive way.

Let us make a few comments about the tomography-based approach. 
We first stress that whether this ``learn-and-run'' strategy can achieve the same asymptotic work extraction rate is far from obvious a priori.   
This is because of the tradeoff between the copies of states for tomography and the accuracy of the estimation.
To achieve the optimal work extraction rate, we are allowed to consume only a sublinear amount of input states for the state tomography, which gives a limitation in the accuracy of the estimate.
On the other hand, the state-aware protocol subsequently applied after the learning might need a greater accuracy, i.e., better scaling with $n$, in which case the fidelity error can grow up to 1 in the limit $n\to\infty$.
Whether this problem arises or not depends on the specifics of the protocol and its performance analysis, and what our result shows precisely is that one can design a work extraction protocol that can avoid this problem.

Secondly, seeing this alternative protocol that does not use the Schur pinching, one might wonder the advantage of introducing the Schur pinching in the first place.
To see this, we recall that the strength of the Schur pinching lies in the observation that the permutation symmetry allows us to avoid the estimation of the eigenbasis. 
We find in Section~\ref{app sec: finite copy} that this operational advantage of considering the Schur pinching becomes most apparent by considering the finite-copy regime and when the experimenters are only informed of the Helmholtz free energy $D(\rho\|\tau)$ of the input state, while not knowing the full description of the input state itself.
This is one of the standard settings in the Shannon theory, in the context of the fixed-length universal coding~\cite{Josza1998universal, Hayashi_2009}.
Beyond the practical advantage we discuss in the next section, we foresee that the idea of Schur pinching will bring further useful information-theoretic insights into quantum thermodynamics and other related operational settings.

\subsection{Performance of the universal work extraction in the finite-copy regime}\label{app sec: finite copy}
In the discussion so far, we showed that the optimal work extraction rate converges to relative entropy without the knowledge of the input state in the limit $n \to\infty$, where $n$ represents the number of copies of input states.
Here, it is insightful to see how the work extraction rate behaves in the finite-copy regime.
As is already discussed in Section~\ref{App Sec: Brandao}, the work extraction rate from finite copies of input states in the state-aware scenario is evaluated as
\bal
\frac{1}{n}\beta W^\ve_{\rm aware}(\rho^{\otimes n})\sim D(\rho\|\tau)-\mO\qty(\frac{1}{\sqrt{n}}).
\label{eq:state aware convergence}
\eal

Let us now discuss the performance of the fully universal work extraction protocol in the finite-copy regime. Without any knowledge of the input state, the situation is a lot more complicated due to the necessity of the estimation procedure.
Let us first consider the universal work extraction protocol with Schur pinching.
Since we need to guarantee that the final state is $\ve$-close to the pure excited state of the work storage, due to Lemma~\ref{Lem: Hoeffding inequality}, $r_n$ satisfies 
\bal
r_n\geq \sqrt{ \frac{d^{k_n}\ln 2+\ln \frac{2}{\ve}}{2m_n}  }.
\eal
From Eq.~\eqref{app eq: guaranteed performence after pinch} and Eq.~\eqref{app Eq: bound for the schur pinching}, it holds that the work extraction rate
\bal
\frac{1}{n}\beta \sum_i h(i)E_i\leq & \frac{k_n(q_n-m_n)}{n}\left[D\qty(\rho\|\tau)-3\qty(\frac{1}{k_n}+\frac{\beta E_{\rm max}+\log Z}{\sqrt{2}})\sqrt{ \frac{d^{k_n}\ln 2+\ln \frac{2}{\ve}}{2m_n}  } \right. \\
&\left. -\frac{1}{k_n}\log {\rm poly}(k_n)-\mO\qty(\frac{1}{\sqrt{k_n(q_n-m_n)}})\right].
\eal
is possible. 
If we take the same parameters as Eq.~\eqref{app eq: choice of parameters in finite}, we have 
\bal\label{app eq: finite copy performance}
\frac{1}{n}\beta \sum_i h(i)E_i&\sim \qty(1-\frac{1}{n^{\alpha}})\qty[D(\rho\|\tau)-\mO\qty(\frac{\log\log n}{\log n})]\\
 \frac{1}{n}\beta W^\ve(\rho^{\otimes n}, \Lambda_n^{\rm univ})&\gtrapprox \qty(1-\frac{1}{n^{\alpha}})\qty[D(\rho\|\tau)-\mO\qty(\frac{\log\log n}{\log n})]
\eal
Here, $\alpha$ is a constant with $0<\alpha<\frac{1}{3}$.
We can see that the right-hand side can be arbitrarily close to the relative entropy $D(\rho\|\tau)$ in the asymptotic limit $n\to\infty$, which is consistent with Theorem~\ref{thm:universal}.
The coefficient $\qty(1-\frac{1}{n^\alpha})$ and the second term on the right-hand side do not appear in the evaluation of the work extraction protocol in the state-aware scenario.
Specifically, the second term is because of $\frac{1}{k_n}\log {\rm poly}(k_n)$, which comes from the Schur pinching.
Although the universal work extraction protocol introduced in this paper works as well as the state-aware protocol in the asymptotic limit, the performance for the finite copies of the input states is worse than that of the state-aware one. 
This signifies the disadvantage of the universal work extraction protocol compared to the state-aware one due to the inaccessibility of the information of the input state.

Now, let us consider the performance of the universal work extraction protocol with the tomographic approach. Because the scaling of the sample complexity in the tomographic approach is almost the same as that in the approach employing the Schur pinching up to some constant multiple, the behavior of the extractable work rate of the universal protocol with the tomographic strategy in the finite-copy regime is essentially the same as that of the strategy with Schur pinching. This implies that, in the fully state-agnostic scenario, Schur pinching does not give us an advantage compared to the tomographic approach, which is because the estimation procedure is the bottleneck of both strategies after all.

\subsubsection{Universal work extraction with partial information of free energy}

On the other hand, consider the scenario where we are given the value of relative entropy $D(\rho\|\tau)$ (while still not knowing the full description of $\rho$).
This assumption of knowing the information of the entropic quantity is often employed in the context of the universal quantum information processing protocols in quantum Shannon theory, such as the universal quantum source compression~\cite{Josza1998universal} and the universal classical-quantum channel coding~\cite{Hayashi_2009}.

In this case, we can construct the universal work extraction in a much simpler way if we use the Schur pinching, where we can entirely skip the tomography step, either for estimating the state description or relative entropy.
The construction goes similarly to the state-aware one exhibited in Section~\ref{app sec: general state state aware}. 
Suppose that we are given $n$ copies of the input state $\rho$, and we are informed of the value $D(\rho\|\tau)$.
First, we apply the Schur pinching $\tilde{\mP}$ to the whole system $\rho^{\otimes n}$. Now, we have the pinched state $\tilde{\mP}(\rho^{\otimes n})$ which is diagonalized by the energy eigenbasis which respects the structure of Schur-Weyl duality. 
All we have to do is to find the appropriate energy-conserving unitary to achieve the optimal work extraction. Recalling \eqref{app eq: existence of work extraction protocol for general states}, we can find the appropriate function $h(\cdot)$ with the knowledge of the relative entropy $D(\rho\|\tau)$. 
Specifically, the behavior of the extractable work rate in the finite-copy regime is written as
\begin{equation}
    \frac{1}{n}\beta W^\ve (\rho^{\otimes n}, \Lambda_n^{D(\rho\|\tau)})\sim D(\rho\|\tau)-\mO\qty(\frac{1}{\sqrt{n}}).
\end{equation}
Here, the superscript of $\Lambda_n^{D(\rho\|\tau)}$ implies that the protocol can depend on the relative entropy.
Comparing this to \eqref{eq:state aware convergence}, we can see that, under the knowledge of the relative entropy, Schur pinching enables us to achieve the  optimal work extraction not only in the sense of the leading order but also in the sense of the subleading order in the known state-aware protocol.

This strategy with the Schur pinching allows the extractable work per copy to converge as fast as the state-aware protocol as characterized in \eqref{eq:state aware convergence}, which is much faster than the one in \eqref{app eq: finite copy performance} realized by the tomographic approach without the Schur pinching. 
This comes from that the Schur pinching allows us to avoid the tomographic procedure, which is necessary in the tomographic approach due to the ignorance of the diagonal basis.

\subsection{On pseudo-nonequilibrium state} \label{app:pseudo}

Let us remark on the connection between the state-agnostic work extraction and the notion called the \emph{pseudo-resource}.
The pseudo-resource is a state ensemble consisting of states with smaller resource contents but cannot be distinguished from the one with a large amount of resources by any efficient protocol.
Previous works found that pseudo-entangled and pseudo-magic states can be constructed using the pseudo-random states~\cite{aaronson_2023_quantumpseudoentanglement,Gu_pseudomagic}, which cannot be distinguished efficiently from the Haar random states, which come with almost the maximum amount of resources with high probability.
The existence of an efficient universal resource distillation task is a crucial problem along this line because such a protocol can distinguish between the pseudo-entangled state and a more resourceful state by applying the universal resource distillation protocol and counting the distilled state. In fact, the existence of the pseudo-entangled and pseudo-magic states restricts the performance of the efficient resource distillation protocol in state-agnostic scenarios~\cite{Gu_pseudomagic,arnonfriedman2023computationalentanglementtheory,leone_entanglement_theory_limited_computational}.

In that sense, our work could serve as a possible direction to determine whether or not a ``pseudo-nonequilibrium states'' exist.
We first observe that the same construction based on the pseudo-random states does not work in the case of quantum thermodynamics because the Haar random states do not possess high average free energy. 
To see this, let us calculate the average of the Helmholtz free energy $F(\rho):=\Tr[\rho H]-TS(\rho)$ of $n$-qubit Haar random states, where $S(\rho)\coloneqq-\Tr\rho\log\rho$ is the von Neumann entropy.
Since any Haar random state $\psi$ is pure, $S(\psi)=0$ holds. Thus, it suffices to focus on the expectation value of the energy. Taking the average, we have
\bal
\int \dd \psi \Tr[H^{\times n}\psi]=\Tr[H^{\times n}\int \dd \psi~\psi]=\Tr[H^{\times n}\frac{I}{2^n}]=\frac{\Tr[H^{\times n}]}{2^n}.
\eal
Since $\Tr[H^{\times n}]=\mO(n)$, the average of the free energy is $\mO\qty(\frac{n}{2^n})$, which can be arbitrarily small.
 This observation suggests that it is reasonable to conjecture that pseudo-nonequilibrium states do not exist.
Although our protocol is not shown to be runtime efficient, or our finite-copy bound \eqref{app eq: finite copy performance} does not guarantee the same work extraction rate with only polynomially many input samples, i.e., $n=\mO(\log \dim\mH)$, improving our protocol to an efficient one would disprove the existence of pseudo-nonequilibrium states.

\section{Comparison with the universal entanglement distillation protocol}\label{App: matsumoto hayashi}
The work extraction task can be viewed as a special case of a central task in quantum resource theories, known as the resource distillation task, which aims to extract a resourceful state from a noisy resource state.
Here, the natural question to ask is whether other resource distillation, such as entanglement distillation and magic state distillation, can be accomplished universally.
Indeed, in Ref.~\cite{matsumoto_universal_distortion}, they constructed a universal entanglement distillation protocol that can achieve the optimal distillation rate for any pure bipartite entangled state by applying the local operations to each share.
Here, let us briefly see how this universal entanglement distillation protocol is constructed and clarify how it differs from our universal work extraction protocol. 

Suppose that one is given $n$ copies of a bipartite pure state $\ket{\psi}_{AB}\in \mH_A\otimes \mH_B$, where we can assume $\mH_A\cong\mH_B$ without loss of generality.
Here, if we take the orthogonal basis of each space as $\qty{\ket{i}_A}$ and $\qty{\ket{j}_B}$, we can always obtain the decomposition $\ket{\psi}_{AB}=\sum_{ij}c_{ij}\ket{i}_A\otimes \ket{j}_B$.
Since taking partial transposition on Bob's share is isomorphic, we consider the matrix $\tilde{\psi}=\sum_{ij}c_{ij}\ketbra{i}{j}$ for a while.
Because of Schur-Weyl duality $\mH^{\otimes n}=\bigoplus_{\lambda\in Y^n_d}\mW_\lambda\otimes \mU_\lambda$, $\tilde{\psi}^{\otimes n}$ can be decomposed as 
\bal
\tilde{\psi}^{\otimes n}=\bigoplus_{\lambda\in Y^n_d}\tilde{\psi}_\lambda\otimes I_\lambda.
\eal
Taking the transposition again, we obtain the following decomposition.
\bal
\ket{\psi}_{AB}^{\otimes n}=\sum_{\lambda\in Y^n_d}\ket{\tilde{\psi}_\lambda}_{AB}\otimes \ket{I_\lambda}_{AB}
\eal
Here, $\ket{\tilde{\psi}_\lambda}$ is an unnormalized vector in $\mW_{A, \lambda}\otimes \mW_{B, \lambda}$ and $\ket{I_\lambda}\in\mU_{A,\lambda}\otimes\mU_{B,\lambda}$ is the vector proportional to the maximally entangled state.
All one has to do is to perform the projective measurement $\qty{\Pi_{\mW_{i,\lambda}\otimes \mU_{i,\lambda}}  }_{\lambda\in Y^n_d}$ on each share $i=A, B$ and discard the subsystem which corresponds to the $\mW_\lambda$.
Also, by employing Weyl's dimension formula, one can see that the entanglement distillation rate in this strategy converges to the entanglement entropy $E(\psi_{AB}):=-\Tr[\psi_A\log \psi_A]$ of the initial state $\ket{\psi}_{AB}$. 
The key idea behind this protocol is that the matrix $\tilde{\psi}^{\otimes n}$ acts trivially on the irreps $\mU_\lambda$ of the symmetric group, and the identity matrices on $\mU_\lambda$ turn into the maximally entangled state due to the transposition.

One might think that this approach can be directly applied to the work extraction task, partially because both the necessary and sufficient conditions for the conversion of quantum states in the thermodynamic scenario and the bipartite pure states with local operations and classical communications are given by the mathematics of majorization.
One can easily see that if we have $n$ copies of $\rho$, this state admits the following decomposition.
\bal
\rho^{\otimes n}=\bigoplus_{\lambda\in Y^n_d}\tilde{\rho}_\lambda\otimes \frac{I_\lambda}{d_\lambda}.
\eal
Even though $\rho^{\otimes n}$ admits a similar decomposition, the identity in this decomposition does not contribute much to the work extraction because $I_\lambda/d_{\lambda}$ is close to the thermal state.

This observation indicates that the strategy to construct the universal entanglement distillation protocol does not help directly to tailor the universal work extraction protocol.
This is partially because the mazorization relation which governs the conversion of the pure bipartite state and the state in the thermodynamic system is opposite.
The universal work extraction protocol exhibited in this paper is thus constructed by carefully focusing on the specific properties of the resource theory of thermodynamics.

\section{Infinite-dimensional systems (Proof of Theorem~\ref{thm: inf universal})}\label{App sec: Infinite}
We consider the infinite-dimensional Hilbert space $\mH$, whose basis can be taken as the energy eigenbasis $\qty{\ket{i}}_{i\in\mbN}$ of the Hamiltonian $H=\sum_{i\in\mbN}E_i\ketbra{i}{i}$. 
We assume that the energy eigenvalues satisfy $E_i\leq E_{i+1}$ for any $i\in\mbN$ and that the Hamiltonian satisfy the Gibbs hypothesis 
\bal
\Tr[e^{-\beta H}]<\infty
\eal
so that the thermal state $\tau = e^{-\beta H}/\Tr[e^{-\beta H}]$ is well defined.

\subsection{State-aware work extraction protocol for infinite-dimensional system}
Let us first consider the situation where we are given the full description of the given state $\rho^{\otimes n}$ in the infinite-dimensional Hilbert space.
Specifically, we focus on the state whose diagonal elements decay as follows: there exists a positive number $\ve>0$ which satisfies $\bra{i}\rho\ket{i}=\mO(i^{-(2+\ve)})$. This assumption is almost equivalent to the condition $\Tr[\rho H]<\infty$ when the system under consideration is the harmonic oscillator. When the Hamiltonian grows superlinearly, i.e., $E_i=\Omega(i)$, the finite-energy condition is sufficient to satisfy this assumption. 

One might think that we could apply the same work extraction protocol for the finite-dimensional Hilbert space.
However, the analysis cannot directly be applied in this case because the cardinality of the set defined in Eq.~\eqref{eq: def of freq} is not finite, and the analysis in Eq.~\eqref{app eq: condition for injection} is not valid.
To circumvent this problem, we apply the work extraction protocol in Ref.~\cite{Brandao_resource_theory_of} to the finite-dimensional subspace.
To determine which subspace to apply the protocol, we set the cutoff dimension $d_n$, which can depend on the number $n$ of copies.
We define $\Pi_{d_n}:=\sum_{i=1}^{d_n}\ketbra{i}{i}$ and $\Pi_\perp:=I-\Pi_{d_n}$. 
Now, the coarse-grained pinching channel 
\bal
\mP'_{d_n}(\rho):=\Pi_{d_n}\rho\Pi_{d_n}+\Pi_\perp\rho\Pi_\perp
\eal
is a thermal operation, because the convexity of the thermal operation can be easily extended to the infinite-dimensional system. When this channel is applied, this state is block-diagonalized as 
\bal
\mP'_{d_n}(\rho)=\rho_{d_n}\oplus \rho_\perp.
\eal
$\qty(\mP'_{d}(\rho))^{\otimes n}$ is also block-diagonalized as
\bal
\qty(\mP'_{d_n}(\rho))^{\otimes n}=(\rho_{d_n})^{\otimes n}\oplus \rho^n_\perp, ~~\rho^n_\perp\in\mD((\mH_{d_n}^{\otimes n})^\perp).
\eal
Here, $\mH_d$ is defined as $\mH_d=\Span\qty{\ket{1},\ldots,\ket{d}}$.
Our strategy is to apply the work extraction protocol to the subnormalized state $(\rho_{d_n})^{\otimes n}$, which can be interpreted as the protocol succeeds probabilistically. 
The probability of failure corresponds to the fidelity error of the excited state of the work storage.
The success probability is represented as $\Tr\qty[(\rho_{d_n})^{\otimes n}]=\Tr[\rho_{d_n}]^n$.
Since we assume that the diagonal elements of $\rho$ scales as $\bra{i}\rho\ket{i}=\mO(i^{-(2+\ve)})$, we can take $d_n:=n^{1/(1+\ve/2)}$. It can be checked easily that $\Tr[\rho_{d_n}]^n\to 1$ in the limit $n\to\infty$.

Now, we consider the optimal performance of the work extraction protocol explained in Appendix~\ref{App Sec: Brandao}.
We apply the pinching channel to $k_n=n^{1/(1+\ve/4)}$ subsystems. After this, we obtain $n/k_n$ copies of $\mP(\rho_{d_n}^{\otimes k})\oplus \rho_{n,\perp}$.
Due to the construction, the optimal extractable work is characterized by the relative entropy of the renormalized input state with respect to the renormalized thermal state
\bal
\frac{1}{n}\beta W^\ve_{\rm aware}(\rho^{\otimes n})\geq D\qty(\mP\qty(\qty(\frac{\rho_{d_n}}{\Tr[\rho_{d_n}]})^{\otimes k_n})\Big\|\qty(\frac{\tau_{d_n}}{\Tr[\tau_{d_n}]})^{\otimes k_n})-\mO\qty(\frac{1}{\sqrt{n/k_n}}).
\eal
This quantity can be arbitrarily close to the relative entropy of the renormalized input state with respect to the renormalized thermal state, which can be checked as follows.
\begin{equation}
    \begin{aligned}
        &\abs{D\qty(\frac{\rho_{d_n}}{\Tr[\rho_{d_n}]}\Big\|\frac{\tau_{d_n}}{\Tr[\tau_{d_n}]})-\frac{1}{k_n}D\qty(\mP\qty(\qty(\frac{\rho_{d_n}}{\Tr[\rho_{d_n}]})^{\otimes k_n})\Big\|\qty(\frac{\tau_{d_n}}{\Tr[\tau_{d_n}]})^{\otimes k_n})}=\frac{1}{k_n}D\qty(\qty(\frac{\rho_{d_n}}{\Tr[\rho_{d_n}]})^{\otimes k_n}\|\mP\qty(\qty(\frac{\rho_{d_n}}{\Tr[\rho_{d_n}]})^{\otimes k_n}))\\
&\leq\frac{1}{k_n}\log (k_n+1)^{d_n}=\frac{1}{n^{2\ve/(2\ve+1)(4\ve+1)}}\log(n^{\frac{1}{1+\ve/4}}+1)\to 0
    \end{aligned}
\end{equation}
Here, we employed Hayashi's pinching inequality~\cite{hayashi_optimal_2002}.

Moreover, note that we have
\bal
D\qty(\frac{\rho_d}{\Tr[\rho_d]}\Big\|\frac{\tau_d}{\Tr[\tau_d]})&=\Tr\qty[\frac{\rho_d}{\Tr[\rho_d]}\log\frac{\rho_d}{\Tr[\rho_d]}-\frac{\rho_d}{\Tr[\rho_d]}\log\frac{\tau_d}{\Tr[\tau_d]}]\\
&=\frac{1}{\Tr[\rho_d]}D_L(\rho_d\|\tau_d)+\frac{\Tr[\rho_d]-\Tr[\tau_d]}{\Tr[\rho_d]}+\log\frac{\Tr[\tau_d]}{\Tr[\rho_d]}\\
&\xrightarrow{d\to\infty}D(\rho\|\tau), 
\eal
where $D_L(\rho\|\sigma)=\Tr[\rho\log\rho-\rho\log\sigma]+\Tr[\sigma]-\Tr[\rho]$ is Lindblad extension of the relative entropy~\cite{Lindblad1974expectations}, and the last line holds because
\bal
\lim_{d\to\infty}\Tr[\rho_d]=\lim_{d\to\infty}\Tr[\tau_d]=1,~~D_L(\rho_d\|\tau_d)=D(\rho\|\tau).
\eal
The last equality is the result from Ref.~\cite[Lemma 4]{Lindblad1974expectations}.

In conclusion, the extractable work rate is characterized by the relative entropy, which gives us the state-aware work extraction protocol for states in an infinite-dimensional system.

Conversely, this extractable work rate is optimal.
In Ref.~\cite{ferrari_asymptotic_2023}, they gave the upper bound of the conversion rate between two states $\rho,\sigma$ utilizing a resource measure with some specific properties in the general resource theory, which allows us to characterize the fundamental limitation of the extractable work from a given state on an infinite-dimensional Hilbert space.

For the reader's convenience, we summarize the converse bound obtained in Ref.~\cite{ferrari_asymptotic_2023}.
As mentioned in the main text, quantum resource theory is characterized by the set of free states $\mbF(\mH_A)\subset \mD(\mH_A)$ defined for arbitrary Hilbert spaces, which includes the states prepared without any cost, and the class of free operations $\mbO(\mH_A\to\mH_B)\subset {\rm CPTP}(\mH_A\to\mH_B)$, the class of CPTP maps, which is supposed to be applied freely in the given situation. The constraint for these two is that any free operations should map the free states to the free states, namely $\Lambda(\mbF(\mH_A))\subset \mbF(\mH_B), \forall \Lambda\in\mbO(\mH_A\to\mH_B)$, which implies that the operation applied without any cost in the given scenario should not create a resourceful state.
In the resource theory of thermodynamics, the thermal state is the only free state. The largest class of free operations is called Gibbs-preserving operations, which contains all the CPTP maps that map the thermal state of the input system to the thermal state of the output system.

How much a given state possesses the resource we are interested in is measured via functions with several properties called \emph{resource monotone}.
\begin{defn}
    Let $\mbF$ be the set of free states and $\mbO$ be the class of the free operations of a quantum resource theory.
    A function $G:\mD(\mH)\to\mbR_{\geq 0}$ defined on the set of density matrices of any Hilbert space $\mH$ is called a resource monotone if and only if $G$ satisfies the following conditions.
    \begin{enumerate}
        \item For any pair of Hilbert spaces $\mH_A, \mH_B$, $\Lambda\in\mbO(\mH_A\to\mH_B)$, and $\rho_A\in\mD(\mH_A)$, $G(\rho_A)\geq G(\Lambda(\rho_A))$ holds. (\emph{monotonicity})
        \item For any $\rho\in\mbF(\mH)$, it holds that $G(\rho)=0$.
    \end{enumerate}
    Moreover, A resource monotone $G$ is said to be
    \begin{enumerate}
        \item \emph{strongly superadditive} if $G$ satisfies
        \bal
        G(\rho_{AB})\geq G(\rho_A)+G(\rho_B)
        \eal
        for any $\rho_{AB}\in\mD(\mH_A\otimes \mH_B)$. Here, $\rho_A:=\Tr_B\rho_{AB}$ and $\rho_B:=\Tr_A\rho_{AB}$.
        \item \emph{weakly additive} if $G$ satisfies
        \bal
        G(\rho^{\otimes n})=nG(\rho)
        \eal
        for any $\rho\in\mD(\mH)$.
        \item \emph{lower semi-continuous} if for any $\rho\in\mD(\mH)$ and any sequence $\qty{\rho_n}\subset\mD(\mH)$ with $\lim_{n\to\infty}\|\rho-\rho_n\|_1=0$, $G$ satisfies
        \bal
        \liminf_{n\to\infty}G(\rho_n)\geq G(\rho).
        \eal
    \end{enumerate}
\end{defn}

In the resource theory of thermodynamics, the Helmholtz free energy $F(\rho)=\frac{1}{\beta}D(\rho\|\tau)$ is a resource monotone, which can be checked easily by $D(\tau\|\tau)=0$ and the data-processing inequality.

Now, we are interested in the optimal rate of the state transformation between two resource states with free operations in the asymptotic limit.
\begin{defn}
    Let $\mbF$ and $\mbO$ be the set of free states and the class of free operations of the quantum resource theory under consideration, and let us take two states $\rho_A\in\mD(\mH_A)$ and $\sigma_B\in\mD(\mH_B)$.
    The (standard) asymptotic transformation rate of the conversion from $\rho_A$ to $\sigma_B$ is defined as
    \bal
R(\rho_A\to\sigma_B)=\sup\qty{r~|~\limsup_{n\to\infty}\inf_{\Lambda\in\mbO(\mH_A^{\otimes n}\to\mH_B^{\otimes \lfloor rn\rfloor})} \| \Lambda(\rho_A^{\otimes n})-\sigma_B^{\otimes \lfloor rn\rfloor} \|_1=0 }.
    \eal
\end{defn}
Ref.~\cite{ferrari_asymptotic_2023} provided an upper bound of the transformation rate as follows.

\begin{pro}[{\cite{ferrari_asymptotic_2023}}]
    Suppose that the resource monotone $G(\cdot)$ of a quantum resource theory satisfies strong superadditivity, weak additivity, and lower-semicontinuity. The transformation rate between two states on the Hilbert spaces, which are not necessarily finite-dimensional, is bounded as
    \bal\label{app eq: ferrari lami}
    R(\rho_A\to\sigma_B)\leq \frac{G(\rho_A)}{G(\sigma_B)}.
    \eal
\end{pro}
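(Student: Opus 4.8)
The plan is to establish the bound by showing that \emph{every} achievable rate $r$ satisfies $r\,G(\sigma_B)\le G(\rho_A)$; taking the supremum over achievable $r$ then gives $R(\rho_A\to\sigma_B)\le G(\rho_A)/G(\sigma_B)$. The cases $G(\rho_A)=\infty$ or $G(\sigma_B)=0$ render the inequality trivial, so I assume $G(\rho_A)<\infty$ and $G(\sigma_B)>0$. First I would unfold the definition of $R$: if $r$ is achievable there is a sequence of free operations $\Lambda_n\in\mbO(\mH_A^{\otimes n}\to\mH_B^{\otimes m_n})$ with $m_n=\lfloor rn\rfloor$ and $\ve_n:=\|\Lambda_n(\rho_A^{\otimes n})-\sigma_B^{\otimes m_n}\|_1\to 0$, and I write $\omega_n:=\Lambda_n(\rho_A^{\otimes n})$.

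The easy direction is an upper bound on $G(\omega_n)$. Since $\Lambda_n$ is a free operation, monotonicity gives $G(\omega_n)\le G(\rho_A^{\otimes n})$, and weak additivity gives $G(\rho_A^{\otimes n})=nG(\rho_A)$, so
\bal
G(\omega_n)\le nG(\rho_A).
\eal

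The heart of the argument is a matching lower bound on $G(\omega_n)$ in terms of $m_n G(\sigma_B)$. Here I would fix a block size $p\in\mbN$ and a tolerance $\delta>0$. By lower semicontinuity of $G$ at the \emph{fixed} state $\sigma_B^{\otimes p}$, there is a radius $\eta>0$ such that every $\omega$ with $\|\omega-\sigma_B^{\otimes p}\|_1<\eta$ satisfies $G(\omega)>G(\sigma_B^{\otimes p})-\delta=pG(\sigma_B)-\delta$, where the last equality uses weak additivity; crucially, $\eta$ depends only on $p$ and $\delta$, not on $n$. For $n$ large enough that $\ve_n<\eta$, I partition the $m_n$ output systems into $M_n:=\lfloor m_n/p\rfloor$ disjoint blocks of size $p$ and let $\omega_n^{(j)}$ denote the marginal of $\omega_n$ on the $j$-th block. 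Because partial trace is trace-norm nonincreasing and the corresponding marginal of the target is exactly $\sigma_B^{\otimes p}$, every block obeys $\|\omega_n^{(j)}-\sigma_B^{\otimes p}\|_1\le\ve_n<\eta$ uniformly in $j$, hence $G(\omega_n^{(j)})>pG(\sigma_B)-\delta$ for all $j$ simultaneously. Iterating strong superadditivity across the blocks, and discarding the nonnegative contribution of the leftover systems, yields
\bal
G(\omega_n)\ge\sum_{j=1}^{M_n}G\big(\omega_n^{(j)}\big)> M_n\,\big(pG(\sigma_B)-\delta\big).
\eal

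Combining the two bounds gives $nG(\rho_A)\ge M_n\,(pG(\sigma_B)-\delta)$; dividing by $n$, using $M_n/n\to r/p$, and then sending $\delta\to0$ produces $G(\rho_A)\ge rG(\sigma_B)$, as desired. I expect the main obstacle to be exactly this lower-bound step: in infinite dimensions there is no asymptotic-continuity estimate available to transfer $G$ from $\omega_n$ to the \emph{moving} target $\sigma_B^{\otimes m_n}$, and lower semicontinuity alone provides no modulus of continuity. The resolution on which the plan hinges is to freeze the block size $p$, so that all block marginals converge to the single fixed state $\sigma_B^{\otimes p}$; then the neighborhood supplied by lower semicontinuity is uniform over all blocks and all large $n$, strong superadditivity aggregates the per-block estimates, and weak additivity converts $G(\sigma_B^{\otimes p})$ into $pG(\sigma_B)$.
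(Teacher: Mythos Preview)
The paper does not supply its own proof of this proposition: it is quoted verbatim from Ref.~\cite{ferrari_asymptotic_2023} and used as a black box, so there is no in-paper argument to compare against. Your argument is correct and is essentially the standard proof of this type of bound; in particular, the device of freezing a block size $p$ so that lower semicontinuity is invoked only at the single fixed state $\sigma_B^{\otimes p}$, and then aggregating via strong superadditivity, is precisely the mechanism that circumvents the lack of asymptotic continuity in infinite dimensions. One small remark: you might also note the edge case $G(\sigma_B)=\infty$ with $G(\rho_A)<\infty$, where the same block argument (replacing $pG(\sigma_B)-\delta$ by an arbitrary constant $C$) forces $r=0$, consistent with the convention $G(\rho_A)/\infty=0$.
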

Since Umegaki relative entropy satisfies these three conditions above, this result can also be applied to quantum thermodynamics by replacing $G(\cdot)$ with $D(\cdot\|\tau)$.

We remark on the connection between the optimal transformation rate and the optimal performance of the work extraction task from the known input state.
In fact, this connection is not clear via the formalization of the extractable work with the work storage mentioned in the main text.
Again, we utilize the notion of the work battery explained in Section~\ref{app sec: universal work extraction}.
We take the output system as the qubit system $\mH_B=\Span\qty{\ket{0},\ket{1}}$ with the trivial Hamiltonian $H=0$. 
Observe that
\bal
D_{\max}(\ketbra{1}{1}\|\tau_B)=D(\ketbra{1}{1}\|\tau_B)=D_{\min}(\ketbra{1}{1}\|\tau_B)=1.
\eal
Repeating the same discussion as the proof of Lemma~\ref{lem: work storage=work battery}, we can see that the extractable work from the initial state $\rho_A$ is proportional to the number of the excited state $\ketbra{1}{1}$ of the qubit system obtained by the free operations.
Due to this observation, it holds that
\bal
\beta W^{\infty}_{\rm aware}(\rho)=R(\rho_A\to\ketbra{1}{1}_B)D(\ketbra{1}{1}\|\tau_B)\leq D(\rho_A\|\tau_A),
\eal
and we obtain the upper bound of the extractable work rate.

This shows the converse bound $\beta W^{\infty}_{\rm aware}(\rho)\leq D(\rho\|\tau)$. 
Combining this with the direct part $\beta W^{\infty}_{\rm aware}(\rho)\geq D(\rho\|\tau)$ achieved by our protocol described above, we establish the following optimal work extraction rate.
\begin{thm}
    Let $\mH$ be an infinite-dimensional Hilbert space with Hamiltonian $H$ satisfying $\Tr(e^{-\beta H})<\infty$. 
    Suppose also that, for the given state $\rho$, there exists a positive number $\ve>0$ such that the diagonal elements of $\rho$ satisfy $\rho_{ii}=\mO(i^{-(2+\ve)})$. Then, it holds that
    \bal
    \beta W^{\infty}_{\rm aware}(\rho)=D(\rho\|\tau).
    \eal
\end{thm}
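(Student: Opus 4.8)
The plan is to prove the equality by matching the achievability bound $\beta W^{\infty}_{\rm aware}(\rho)\geq D(\rho\|\tau)$ against the converse $\beta W^{\infty}_{\rm aware}(\rho)\leq D(\rho\|\tau)$, assembling the pieces developed above. For achievability I would use the finite-cutoff strategy. The genuine obstacle relative to the finite-dimensional case is that the injectivity condition \eqref{app eq: condition for injection} counts frequency classes, and these become infinite when $\mH$ is infinite-dimensional. To restore finiteness I first apply the coarse-grained pinching $\mP'_{d_n}$, a thermal operation, which block-diagonalizes $\qty(\mP'_{d_n}(\rho))^{\otimes n}=(\rho_{d_n})^{\otimes n}\oplus \rho^n_\perp$; running the state-aware protocol only on the subnormalized block $(\rho_{d_n})^{\otimes n}$ makes the scheme succeed probabilistically, with fidelity error equal to the failure probability $1-\Tr[\rho_{d_n}]^n$. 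The decay hypothesis $\rho_{ii}=\mO(i^{-(2+\ve)})$ enters exactly here: the truncated mass obeys $\Tr[\rho_\perp]=\sum_{i>d_n}\rho_{ii}=\mO(d_n^{-(1+\ve)})$, so the choice $d_n=n^{1/(1+\ve/2)}$ forces $\Tr[\rho_{d_n}]^n\to 1$.

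Within the now finite-dimensional subspace $\mH_{d_n}$, I pinch over $k_n=n^{1/(1+\ve/4)}$ subsystems and invoke the finite-dimensional protocol of Appendix~\ref{App Sec: Brandao}, whose rate is governed by the renormalized relative entropy up to an $\mO(1/\sqrt{n/k_n})$ finite-size correction. Two convergences then close the argument. First, Hayashi's pinching inequality bounds the energy-pinching loss by $\frac{1}{k_n}\log (k_n+1)^{d_n}$, and the chosen exponents make this vanish. Second, the renormalized relative entropy $D\qty(\frac{\rho_{d_n}}{\Tr[\rho_{d_n}]}\Big\|\frac{\tau_{d_n}}{\Tr[\tau_{d_n}]})$ converges to $D(\rho\|\tau)$, which I would establish via the Lindblad extension $D_L$ together with $\Tr[\rho_{d_n}],\Tr[\tau_{d_n}]\to 1$ and $D_L(\rho_{d}\|\tau_{d})=D(\rho\|\tau)$. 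Together these give $\beta W^\infty_{\rm aware}(\rho)\geq D(\rho\|\tau)$.

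For the converse I would invoke the Ferrari--Lami bound \eqref{app eq: ferrari lami}: since Umegaki's relative entropy $D(\cdot\|\tau)$ is a strongly superadditive, weakly additive, and lower-semicontinuous resource monotone, the asymptotic transformation rate obeys $R(\rho_A\to\sigma_B)\leq D(\rho_A\|\tau_A)/D(\sigma_B\|\tau_B)$. Passing to the work-battery picture with target $\ketbra{1}{1}_B$ on a trivial-Hamiltonian qubit, for which $D_{\max}=D=D_{\min}=1$, identifies $\beta W^\infty_{\rm aware}(\rho)=R(\rho_A\to\ketbra{1}{1}_B)\leq D(\rho_A\|\tau_A)$. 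The main obstacle is thus entirely in the achievability part: the sequences $d_n$ and $k_n$ must be tuned simultaneously so that the truncation failure probability, the pinching free-energy loss, and the finite-size correction all vanish together, and the exponents $1/(1+\ve/2)$ and $1/(1+\ve/4)$ are precisely what reconciles these competing scalings.
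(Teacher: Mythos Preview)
Your proposal is correct and follows essentially the same approach as the paper: the identical choices $d_n=n^{1/(1+\ve/2)}$ and $k_n=n^{1/(1+\ve/4)}$, the coarse-grained pinching to reduce to a finite-dimensional block, Hayashi's pinching inequality for the free-energy loss, the Lindblad-extension argument for convergence of the renormalized relative entropy, and the Ferrari--Lami converse via the work-battery picture all match the paper's own proof.
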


\subsection{Semiuniversal work extraction for infinite dimensions}

Let us now extend the previous state-aware protocol to a semiuniversal one, where an unknown input state is given from a set of i.i.d. states taken from a finite number of candidates.
Fix a subset $S\subset\mD(\mH)$ of states containing finite elements. This set represents the possible states to be given.
For any positive integer $n\in\mbN$, we define the set $S_n\in\mD(\mH^{\otimes n})$ of states as
\bal
S_n=\qty{\rho^{\otimes n}~|~\rho\in S\subset\mD(\mH),~\abs{S}<\infty}.
\eal
We assume that we are given the complete knowledge of $S$, and all the states in the set follow the scaling $\bra{i}\rho\ket{i}=\mO(i^{-(2+\ve)})$ for some $\ve$, which can depend on the state. We construct the work extraction protocol to achieve the optimal work extraction rate for each state in $S$.
In this sense, we call this protocol ‘semiuniversal.’

If there exists a cutoff dimension $d$ which satisfies $\Pi_d\rho\Pi_d=\rho$ for any states in $S$, that is, all the states in the set $S$ have finite-dimensional support, it suffices to apply the universal work extraction protocol for finite-dimensional systems to the subspace $\mH_d$. 
Thus, we can concentrate on situations where this does not hold.

First, we consider a pinching channel $\mP_{\tilde{H}}$ which corresponds to a virtual Hamiltonian $\tilde{H}^{\times n}$, where $\tilde{H}$ is  $\tilde{H}=\sum_{i=1}^\infty\tilde{E}_i\ketbra{i}{i}$ whose eigenvalues satisfy $\tilde{E}_i<\tilde{E}_{i+1}$. This pinching channel vanishes all the nontrivial degenerate subspaces in the original Hamiltonian.
Before starting the protocol, we specify the set
\bal
\mP_{\tilde{H}}(S_n):=\qty{\mP_{\tilde{H}}(\rho^{\otimes n})~|~\rho^{\otimes n}\in S_n}.
\eal
Here, there might exist two states, such as $\left[\frac{1}{\sqrt{2}}(\ket{1}\pm\ket{2})\right]^{\otimes n}$, whose pinched state coincides, and the number $\abs{\mP_{\tilde{H}}(S_n)}$ of elements in the pinched set might be less than that of the original set $\abs{S_n}$.
However, since the work extraction protocol discussed in the subsequent discussion only depends on the information about the energy subspace that survives after the pinching channel, it suffices to concentrate on the set $\mP_{\tilde{H}}(S_n)$.

If we can distinguish the given states from other states by the incoherent measurement using a sublinear amount of samples, we can tailor the work extraction protocol for this set of states.

Here, we prove the following lemma.
\begin{lem}
    Let $\rho,\sigma\in\mD(\mH)$ be two density matrices on the infinite-dimensional Hilbert space. Moreover, let $\Pi_{d,n} $ be the projector onto the subspace $(\mH_d)^{\otimes n}$. The following two conditions are equivalent.
    \begin{itemize}
        \item $\forall d\in\mbN,~~ \Pi_{d,d}\mP_{\tilde{H}}(\rho^{\otimes d})\Pi_{d,d}=\Pi_{d,d}\mP_{\tilde{H}}(\sigma^{\otimes d})\Pi_{d,d}$\\
        \item $\forall n\in\mbN, ~~\mP_{\tilde{H}}(\rho^{\otimes n})=\mP_{\tilde{H}}(\sigma^{\otimes n})$
    \end{itemize}
\end{lem}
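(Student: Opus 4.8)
The plan is to read off the matrix elements of the pinched states in the energy eigenbasis and reduce the whole equivalence to a statement about products of entries of $\rho$ and $\sigma$. Writing $\rho_{kl}=\bra{k}\rho\ket{l}$ and $\ket{\vec{k}}=\ket{k_1}\otimes\cdots\otimes\ket{k_n}$, the matrix element of $\mP_{\tilde{H}}(\rho^{\otimes n})$ between $\ket{\vec{k}}$ and $\ket{\vec{l}}$ equals $\prod_{i=1}^n\rho_{k_il_i}$ whenever $\ket{\vec{k}}$ and $\ket{\vec{l}}$ lie in a common eigenspace of $\tilde{H}^{\times n}$ (equivalently $\sum_i\tilde{E}_{k_i}=\sum_i\tilde{E}_{l_i}$), and vanishes otherwise. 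Thus the second condition is equivalent to the statement that $\prod_i\rho_{k_il_i}=\prod_i\sigma_{k_il_i}$ for every length $n$ and every pair $(\vec{k},\vec{l})$ that is \emph{kept} by the pinching, i.e.\ not annihilated. The backward implication (second $\Rightarrow$ first) is then immediate: setting $n=d$ and conjugating both sides by $\Pi_{d,d}$ recovers the first condition.

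For the forward implication, the difficulty is that the first condition only grants access to the ``square'' data in which the level cutoff and the number of copies are locked to the same value $d$, whereas I must recover products $\prod_{i=1}^n\rho_{k_il_i}$ of \emph{arbitrary} length $n$ whose indices range over arbitrarily high levels. I would overcome this by a diagonal-padding argument. First, applying the first condition to the diagonal configuration $\vec{k}=\vec{l}=(c,\dots,c)$ yields $\rho_{cc}^{\,d}=\sigma_{cc}^{\,d}$, and since diagonal entries of density operators are nonnegative reals this forces $\rho_{cc}=\sigma_{cc}$ for every level $c$. Because the diagonal is a probability distribution summing to one, for all sufficiently large $d$ there is some level $c\le d$ with $\rho_{cc}=\sigma_{cc}>0$.

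Now fix any kept pair $(\vec{k},\vec{l})$ of length $n$ whose entries are bounded by $L$, and choose $d\ge\max(n,L)$ large enough that a positive diagonal level $c\le d$ exists. Define the padded strings $\vec{k}'=(k_1,\dots,k_n,c,\dots,c)$ and $\vec{l}'=(l_1,\dots,l_n,c,\dots,c)$ of length $d$, appending $d-n$ copies of the index $c$. Appending equal self-loops shifts both energies by the same amount $(d-n)\tilde{E}_c$, so $(\vec{k}',\vec{l}')$ is kept by $\mP_{\tilde{H}}$ exactly when $(\vec{k},\vec{l})$ is, and both strings lie in $(\mH_d)^{\otimes d}$ so that $\Pi_{d,d}$ acts as the identity on them. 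The first condition then gives $\prod_{i=1}^n\rho_{k_il_i}\,\rho_{cc}^{\,d-n}=\prod_{i=1}^n\sigma_{k_il_i}\,\sigma_{cc}^{\,d-n}$, and dividing by the common nonzero factor $\rho_{cc}^{\,d-n}=\sigma_{cc}^{\,d-n}$ yields $\prod_i\rho_{k_il_i}=\prod_i\sigma_{k_il_i}$. Since $(\vec{k},\vec{l})$ was an arbitrary kept pair of arbitrary length, this establishes the second condition. The crux of the argument, and the step I expect to require the most care, is precisely this decoupling of the level cutoff from the copy number through diagonal padding, together with the observation that padding preserves membership in a common energy eigenspace so that the relevant matrix entry is never killed by the pinching; note also that this argument never needs $\tilde{E}$ to be generic, since it treats same-type and accidentally degenerate pairs uniformly.
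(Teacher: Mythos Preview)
Your proof is correct and takes a genuinely different route from the paper. The paper's own argument is essentially a two-line reduction: it invokes \cite[Lemma S.22]{Watanabe_Black_box}, which states that for a fixed cutoff $d$ the equality $\Pi_{d,d}\mP_{\tilde H}(\rho^{\otimes d})\Pi_{d,d}=\Pi_{d,d}\mP_{\tilde H}(\sigma^{\otimes d})\Pi_{d,d}$ is equivalent to $\Pi_{d,n}\mP_{\tilde H}(\rho^{\otimes n})\Pi_{d,n}=\Pi_{d,n}\mP_{\tilde H}(\sigma^{\otimes n})\Pi_{d,n}$ for all $n$, and then simply observes that quantifying over all $d$ in the latter recovers the unprojected equality $\mP_{\tilde H}(\rho^{\otimes n})=\mP_{\tilde H}(\sigma^{\otimes n})$. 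In other words, the paper outsources the decoupling of the cutoff from the copy number to the cited lemma. Your argument, by contrast, is self-contained: you read off the matrix elements of the pinched tensor power in the product energy eigenbasis, first pin down the diagonals $\rho_{cc}=\sigma_{cc}$ from the square data, and then decouple $n$ from $d$ by the diagonal-padding trick $(\vec k,\vec l)\mapsto(\vec k,c,\dots,c;\vec l,c,\dots,c)$, which preserves membership in a common $\tilde H^{\times d}$-eigenspace and lets you cancel the nonzero factor $\rho_{cc}^{\,d-n}=\sigma_{cc}^{\,d-n}$. What the paper's route buys is brevity at the cost of an external dependency; what your route buys is a transparent, fully elementary proof that makes explicit exactly which combinatorial fact is doing the work, and that is robust to accidental degeneracies of $\tilde H^{\times n}$ since you never need more than equality of total energies. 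The only cosmetic point is that it is cleanest to fix one level $c$ with $\rho_{cc}>0$ once and for all and then take $d\ge\max(n,L,c)$, rather than phrasing it as ``$d$ large enough that some $c\le d$ exists,'' but this does not affect the validity.
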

\begin{proof}
    Key result to show this lemma is the following relation~\cite[Lemma S.22]{Watanabe_Black_box}: for any $\rho,\sigma\in\mD(\mH)$,
    \bal
    \Pi_{d,d}\mP_{\tilde{H}}(\rho^{\otimes d})\Pi_{d,d}=\Pi_{d,d}\mP_{\tilde{H}}(\sigma^{\otimes d})\Pi_{d,d}\Leftrightarrow \Pi_{d,n}\mP_{\tilde{H}}(\rho^{\otimes n})\Pi_{d,n}=\Pi_{d,n}\mP_{\tilde{H}}(\sigma^{\otimes n})\Pi_{d,n},~~\forall n\in\mbN.
    \eal
    From this, we have
    \bal
\forall d\in\mbN,~~ \Pi_{d,d}\mP_{\tilde{H}}(\rho^{\otimes d})\Pi_{d,d}&=\Pi_{d,d}\mP_{\tilde{H}}(\sigma^{\otimes d})\Pi_{d,d}  \\
\Leftrightarrow  \forall d,n\in\mbN,~~\Pi_{d,n}\mP_{\tilde{H}}(\rho^{\otimes n})\Pi_{d,n}&=\Pi_{d,n}\mP_{\tilde{H}}(\sigma^{\otimes n})\Pi_{d,n}\\
\Leftrightarrow~~\forall n\in\mbN~~\mP_{\tilde{H}}(\rho^{\otimes n})&=\mP_{\tilde{H}}(\sigma^{\otimes n}),
    \eal
    which concludes the proof.
\end{proof}
Therefore, if we try to distinguish between $\rho$ and $\sigma$ whose pinched states are different, we can concentrate on the subspace $(\mH_d)^{\otimes d}.$ We denote this specific dimension as $d_{\rho,\sigma}$.
Since $\abs{S}$ is at most finite, 
\bal
\tilde{d}=\max_{\substack{ \rho,\sigma\in S\\ \exists n\in\mbN,  \mP_{\tilde{H}}(\rho^{\otimes n})\neq \mP_{\tilde{H}}(\sigma^{\otimes n})   }}d_{\rho,\sigma}
\eal
is also finite. 

To tell apart the states in the set $S$ by the incoherent measurement, it suffices to perform the state tomography on $\Pi_{\tilde{d},\tilde{d}}\mP_{\tilde{H}}(\rho^{\otimes \tilde{d}})\Pi_{\tilde{d},\tilde{d}}$. 
Let us evaluate the sample complexity of this tomography.
Note that performing the tomography on the subnormalized state in the subspace $(\mH_{\tilde{d}})^{\otimes\tilde{d}}$ is equivalent to performing the tomography on the normalized state 
\bal
\tilde{\rho}_{\tilde{d}}:=\Pi_{\tilde{d},\tilde{d}}\mP_{\tilde{H}}(\rho^{\otimes \tilde{d}})\Pi_{\tilde{d},\tilde{d}}\oplus \qty(1-\Tr[\Pi_{\tilde{d},\tilde{d}}\mP_{\tilde{H}}(\rho^{\otimes \tilde{d}})\Pi_{\tilde{d},\tilde{d}}])\in\mD\qty((\mH_{\tilde{d}})^{\otimes \tilde{d}}\oplus \mbC).
\eal
Since $\abs{S}$ is finite, 
\bal
\tilde{\xi}:=\min_{\substack{ \rho,\sigma\in S\\ \exists n\in\mbN,  \mP_{\tilde{H}}(\rho^{\otimes n})\neq \mP_{\tilde{H}}(\sigma^{\otimes n})   }}\|\tilde{\rho}_{\tilde{d}}-\tilde{\sigma}_{\tilde{d}}\|_{1}
\eal
is positive and does not depend on $n$.
It suffices to evaluate the state with accuracy $\frac{\xi}{2}$.
In \cite{O'Donnell2016efficient}, it is shown that the sample complexity to estimate the detail of the density matrix allowing the error $\xi/2$ with respect to the trace distance with success probability $1-p_e$ is given by
\bal
\mO\qty(\frac{\qty(\tilde{d}^{\tilde{d}}+1)^2}{\xi^2/4}\log\qty(\frac{1}{p_e})).
\eal
Therefore, given the target error on fidelity, we need to guess the input system using at most a constant number of subsystems.
After the estimation, we can apply the work extraction protocol in the state-aware scenario. Since the number of subsystems consumed for the tomography is at most constant, this does not affect the work extraction rate.

Due to Lemma~\ref{app lem: TO=ICPTO}, the whole protocol can be implemented by a thermal operation. Thus, the protocol in the discussion above gives us a semiuniversal work extraction protocol for an infinite-dimensional system.

\end{document}